\newtheorem{theorem}{Theorem}
\newtheorem{lemma}{Lemma}
\newtheorem{definition}{Definition}
\newtheorem{corollary}{Corollary}
\newcounter{claimctr}[lemma]
\newcounter{remctr}
\newenvironment{claim}{\medskip\par\noindent\refstepcounter{claimctr}\textit{Claim \theclaimctr.} }{\smallskip\par}
\newenvironment{remark}{\medskip\par\noindent\refstepcounter{remctr}\textbf{Remark \theremctr.} }{\smallskip\par}
\newtheorem{observation}{Observation}
\newcommand{\com}[1]{}
\newenvironment{myenumerate}
{\vspace{-0.05in}
	\begin{enumerate}
		\itemsep -0.04in

	}
	{\end{enumerate}}
\begin{document}
	\title{On the Kernel and Related Problems in Interval Digraphs}
	\author{Mathew C. Francis\footnote{Indian Statistical Institute, Chennai Centre, Chennai 600 029, India. e-mail: \texttt{\{mathew,dalujacob\}@isichennai.res.in}} \and
	Pavol Hell\footnote{School of Computing Science, Simon Fraser University, Burnaby, B.C., Canada V5A1S6. e-mail: \texttt{pavol@sfu.ca}}\and
	Dalu Jacob\footnotemark[1]}
	\date{}
	\maketitle
	\begin{abstract}
		Given a digraph $G$, a set $X\subseteq V(G)$ is said to be an \emph{absorbing set} (resp. \emph{dominating set}) if every vertex in the graph is either in $X$ or is an in-neighbour (resp. out-neighbour) of a vertex in $X$. A set $S\subseteq V(G)$ is said to be an \emph{independent set} if no two vertices in $S$ are adjacent in $G$. A \emph{kernel} (resp. \emph{solution}) of $G$ is an independent and absorbing (resp. dominating) set in $G$. 
		The problem of deciding if there is a kernel (or solution) in an input digraph is known to be NP-complete. Similarly, the problems of computing a minimum cardinality dominating set or absorbing set or kernel, and the problems of computing a maximum cardinality independent set or kernel, are all known to be NP-hard for general digraphs.
		 We explore the algorithmic complexity of these problems in the well known class of \emph{interval digraphs}. A digraph $G$ is an interval digraph if a pair of intervals $(S_u,T_u)$ can be assigned to each vertex $u$ of $G$ such that $(u,v)\in E(G)$ if and only if $S_u\cap T_v\neq\emptyset$. Many different subclasses of interval digraphs have been defined and studied in the literature by restricting the kinds of pairs of intervals that can be assigned to the vertices. We observe that several of these classes, like interval catch digraphs, interval nest digraphs, adjusted interval digraphs and chronological interval digraphs, are subclasses of the more general class of \emph{reflexive interval digraphs}---which arise when we require that the two intervals assigned to a vertex have to intersect. We see as our main contribution the identification of the class of reflexive interval digraphs as an important class of digraphs. We show that while the problems mentioned above are NP-complete, and even hard to approximate, on interval digraphs (even on some very restricted subclasses of interval digraphs called \emph{point-point digraphs}, where the two intervals assigned to each vertex are required to be degenerate), they are all efficiently solvable, in most of the cases linear-time solvable, in the class of reflexive interval digraphs. 
		 
		
		The results we obtain improve and generalize several existing algorithms and structural results for subclasses of reflexive interval digraphs. In particular, we obtain a vertex ordering characterization of reflexive interval digraphs that implies the existence of an $O(n+m)$ time algorithm 
		for computing a maximum cardinality independent set in a reflexive interval digraph, improving and generalizing the earlier known $O(nm)$ time algorithm for the same problem for the interval nest digraphs. (Here $m$ denotes the number of edges in the digraph not counting the self-loops.) We also show that reflexive interval digraphs are kernel perfect and that a kernel in such digraphs can be computed in linear time. This generalizes and improves an earlier result that interval nest digraphs are kernel perfect and that a kernel can be computed in such digraphs in $O(nm)$ time. The structural characterizations that we show for point-point digraphs, apart from helping us construct the NP-completeness/APX-hardness reductions, imply that these digraphs can be recognized in linear time.
		We also obtain some new results for undirected graphs along the way: (a) We describe an $O(n(n+m))$ time algorithm for computing a minimum cardinality (undirected) independent dominating set in cocomparability graphs, which slightly improves the existing $O(n^3)$ time algorithm for the same problem by Kratsch and Stewart; and (b) We show that the {\sc Red-Blue Dominating Set} problem, which is NP-complete even for planar bipartite graphs, is linear-time solvable on \emph{interval bigraphs}, which is a class of bipartite (undirected) graphs closely related to interval digraphs.
	\end{abstract}
	
	\section{Introduction} \label{sec:intro}
	
	Let $H=(V,E)$ be an undirected graph. A set $S\subseteq V(H)$ is said to be an \emph{independent set} in $H$ if for any two vertices $u,v\in S$, $uv\notin E(H)$. A set $S\subseteq V(H)$ is said to be a \emph{dominating set} in $H$ if for any $v\in V(H)\setminus S$, there exists $u\in S$ such that $uv\in E(H)$. A set $S\subseteq V(H)$ is said to be an \emph{independent dominating set} in $H$ if $S$ is dominating as well as independent. Note that any maximal independent set in $H$ is an independent dominating set in $H$, and therefore every undirected graph contains an independent dominating set, which implies that the problem of deciding whether an input undirected graph contains an independent dominating set is trivial. On the other hand, finding an independent dominating set of \emph{maximum cardinality} is NP-complete for general graphs, since independent dominating sets of maximum cardinality are exactly the independent sets of maximum cardinality in the graph. The problem of finding a minimum cardinality independent dominating set is also NP-complete for general graphs~\cite{garey1979computers} and also in many special graph classes (refer~\cite{liu2015independent} for a survey). We study the directed analogues of these problems,
	which are also well-studied in the literature. 
	
	Let $G=(V,E)$ be a directed graph. A set $S\subseteq V(G)$ is said to be an \emph{independent set} in $G$, if for any two vertices $u,v\in S$, $(u,v),(v,u)\notin E(G)$. A set $S\subseteq V(G)$ is said to be an \emph{absorbing (resp. dominating) set} in $G$, if for any $v\in V(G)\setminus S$, there exists $u\in S$ such that $(v,u)\in E(G)$ (resp. $(u,v)\in E(G)$). As any set of vertices that consists of a single vertex is independent and the whole set $V(G)$ is absorbing as well as dominating, the interesting computational problems that arise here are that of finding a maximum independent set, called {\sc Independent-Set}, and that of finding a minimum absorbing (resp. dominating) set in $G$, called {\sc Absorbing-Set} (resp. {\sc Dominating-Set}). A set $S\subseteq V(G)$ is said to be an \emph{independent dominating (resp. absorbing) set} if $S$ is both independent and dominating (resp. absorbing). Note that unlike undirected graphs, the problem of finding a maximum cardinality independent dominating (resp. absorbing) set is different from the problem of finding a maximum cardinality independent set for directed graphs.
	
	Given a digraph $G$, a collection $\{(S_u,T_u)\}_{u\in V(G)}$ of pairs of intervals is said to be an \emph{interval representation} of $G$ if $(u,v)\in E(G)$ if and only if $S_u\cap T_v\neq \emptyset$. A digraph $G$ that has an interval representation is called an \emph{interval digraph}~\cite{das1989interval}. We consider a loop to be present on a vertex $u$ of an interval digraph if and only if $S_u\cap T_u\neq\emptyset$. An interval digraph is a \emph{reflexive interval digraph} if there is a loop on every vertex. Let $G$ be a digraph. If there exists an interval representation of $G$ such that $T_u\subseteq S_u$ for each vertex $u\in V(G)$ then $G$ is called an \emph{interval nest digraph}~\cite{prisner1994algorithms}. If $G$ has an interval representation in which intervals $S_u$ and $T_u$ for each vertex $u\in V(G)$ are required to have a common left end-point, the interval digraphs that arise are called \emph{adjusted interval digraphs}~\cite{feder2012adjusted}. Note that the class of reflexive interval digraphs is a superclass of both interval nest digraphs and adjusted interval digraphs. Another class of interval digraphs, called \emph{interval-point digraphs} arises when the interval $T_u$ for each vertex $u$ is required to be degenerate (it is a point)~\cite{das1989interval}. Note that interval-point digraphs may not be reflexive. We call a digraph $G$ a \emph{point-point digraph} if there is an interval representation of $G$ in which both $S_u$ and $T_u$ are degenerate intervals for each vertex $u$. Clearly, point-point digraphs form a subclass of interval-point digraphs and they are also not necessarily reflexive.
	\medskip
	
	In this paper, we show that the reflexivity of an interval digraph has a huge impact on the algorithmic complexity of several problems related to domination and independent sets in digraphs. In particular, we show that all the problems we study are efficiently solvable on reflexive interval digraphs, but are NP-complete and/or APX-hard even on point-point digraphs. Along the way we obtain new characterizations of both these graph classes, which reveal some of the properties of these digraphs.
	\medskip
	
	An undirected graph is said to be \emph{weakly chordal (or weakly triangulated)} if it does not contain $C_k$ and $\overline{C_k}$ for $k\geq 5$ as induced subgraphs.
	Prisner~\cite{prisner1994algorithms} proved that the underlying undirected graphs of interval nest digraphs are weakly chordal graphs and notes that this means that any algorithm that solves the maximum independent set problem on weakly chordal graphs can be used to solve the {\sc Independent-Set} problem on interval nest digraphs and their reversals.
	Since the problem of computing a maximum independent set can be solved in $O(nm)$ time in weakly chordal graphs~\cite{hss}, it follows that there is an $O(nm)$-time algorithm for the {\sc Independent-Set} problem in interval nest digraphs and their reversals, even when only the adjacency list of the input graph is given.
	
	An undirected graph is a \emph{comparability} graph if its edges can be oriented in such a way that it becomes a partial order. The complements of comparability graphs are called \emph{cocomparability graphs}.
	\medskip
	
	\noindent\textbf{Our results.} We provide a vertex-ordering characterization for reflexive interval digraphs and two simple characterizations for point-point digraphs including a forbidden structure characterization. Our characterization of point-point digraphs directly yields a linear time recognition algorithm for that class of digraphs (note that Muller's~\cite{muller1997recognizing} recognition algorithm for interval digraphs directly gives a polynomial-time recognition algorithm for reflexive interval digraphs). From our vertex-ordering characterization of reflexive interval digraphs, it follows that the underlying undirected graphs of every reflexive interval digraph is a cocomparability graph. Also a natural question that arises here is whether the underlying graphs of reflexive interval digraphs is the same as the class of cocomparability graphs. We show that this is not the case by demonstrating that the underlying graphs of reflexive interval digraphs cannot contain an induced $K_{3,3}$. Thus, Prisner's result mentioned above can be strengthened to say that the underlying undirected graphs of interval nest digraphs and their reversals are $K_{3,3}$-free weakly chordal cocomparability graphs. Also, as the {\sc Independent Set} problem is linear time solvable on cocomparability graphs~\cite{mcconnell1999modular}, the problem is also linear time solvable on reflexive interval digraphs. This improves and generalizes the $O(nm)$-time algorithm for the same problem on interval nest digraphs. In contrast, we prove that the {\sc Independent Set} problem is APX-hard for point-point digraphs. 

	\medskip
	
	Domination in digraphs is a topic that has been explored less when compared to its undirected counterpart. Even though bounds on the minimum dominating sets in digraphs have been obtained by several authors (see the book~\cite{haynes1998domination} for a survey), not much is known about the computational complexity of finding a minimum cardinality absorbing set (or dominating set) in directed graphs. Even for tournaments, the best known algorithm for {\sc Dominating-Set} does not run in polynomial-time~\cite{megiddo1988finding,reid2004domination}. In~\cite{megiddo1988finding}, the authors give an $n^{O(\log n)}$ time algorithm for the {\sc Dominating-Set} problem in tournaments and they also note that {\sc Sat} can be solved in $2^{O(\sqrt{v})}n^K$ time (where $v$ is the number of variables, $n$ is the length of the formula and $K$ is a constant) if and only if the {\sc Dominating-Set} in a tournament can be solved in polynomial time. Thus, determining the algorithmic complexity of the {\sc Dominating-Set} problem even in special classes of digraphs seems to be much more challenging than the algorithmic question of finding a minimum cardinality dominating set in undirected graphs.
	
	For a bipartite graph having two specified partite sets $A$ and $B$, a set $S\subseteq B$ such that $\bigcup_{u\in B} N(u)=A$ is called an \emph{$A$-dominating set}. Note that the graph does not contain an $A$-dominating set if and only if there are isolated vertices in $A$. The problem of finding an $A$-dominating set of minimum cardinality in a bipartite graph with partite sets $A$ and $B$ is more well-known as the {\sc Red-Blue Dominating Set} problem, which was introduced for the first time in the context of the European railroad network~\cite{weihe1998covering} and plays an important role in the theory of fixed parameter tractable algorithms~\cite{dom2009incompressibility}. This problem is equivalent to the well known {\sc Set Cover} and {\sc Hitting Set} problems~\cite{garey1979computers} and therefore, it is NP-complete for general bipartite graphs. The problem remains NP-complete even for planar bipartite graphs~\cite{alber2002fixed}. The class of \emph{interval bigraphs} are closely related to the class of interval digraphs. These are undirected bipartite graphs with partite sets $A$ and $B$ such that there exists a collection of intervals $\{S_u\}_{u\in V(G)}$ such that $uv\in E(G)$ if and only if $u\in A$, $v\in B$, and $S_u\cap S_v\neq\emptyset$.
	\medskip
	
	\noindent\textbf{Our results:} We observe that the problem of solving {\sc Absorbing-Set} on a reflexive interval digraph $G$ can be reduced to the problem of solving {\sc Red-Blue Dominating Set} on an interval bigraph whose interval representation can be constructed from an interval representation of $G$ in linear time. Further, we show that {\sc Red-Blue Dominating Set} is linear time solvable on interval bigraphs (given an interval representation). Thus the problem {\sc Absorbing-Set} (resp. {\sc Dominating-Set}) is linear-time solvable on reflexive interval digraphs, given an interval representation of the digraph as input. If no interval representation is given, Muller's algorithm~\cite{muller1997recognizing} can be used to construct one in polynomial time, and therefore these problems are polynomial time solvable on reflexive interval digraphs even when no interval representation of the input graph is known. In contrast, we prove that the {\sc Absorbing-Set} and {\sc Dominating-Set} problems remain APX-hard even for point-point digraphs.
	\medskip
	
	An independent absorbing set in a directed graph is more well-known as a \emph{kernel} of the graph, a term introduced by Von Neumann and Morgenstern~\cite{morgenstern1953theory} in the context of game theory. They showed that for digraphs associated with certain combinatorial games, the existence of a kernel implies the existence of a winning strategy. Most of the work related to domination in digraphs has been mainly focused on kernels. We follow the terminology in~\cite{prisner1994algorithms} and call an independent dominating set in a directed graph a \emph{solution} of the graph. It is easy to see that a kernel in a directed graph $G$ is a solution in the directed graph obtained by reversing every arc of $G$ and vice versa. 
	Note that unlike in the case of undirected graphs, a kernel need not always exist in a directed graph. Therefore, besides the computational problems of finding a minimum or maximum sized kernel, called {\sc Min-Kernel} and {\sc Max-Kernel} respectively, the comparatively easier problem of determining whether a given directed graph has a kernel in the first place, called {\sc Kernel}, is itself a non-trivial one. In fact, the {\sc Kernel} problem was shown to be NP-complete in general digraphs by Chv\'atal~\cite{chvatal1973computational}. Later, Fraenkel~\cite{fraenkel1981planar} proved that the {\sc Kernel} problem remains NP-complete even for planar digraphs of degree at most 3 having in- and out-degrees at most 2. It can be easily seen that the {\sc Min-Kernel} and {\sc Max-Kernel} problems are NP-complete for those classes of graphs for which the {\sc Kernel} problem is NP-complete. A digraph is said to be \emph{kernel perfect} if every induced subgraph of it has a kernel. Several sufficient conditions for digraphs to be kernel perfect has been explored~\cite{richardson1953solutions,duchet1987sufficient,morgenstern1953theory}. The {\sc Kernel} problem is trivially solvable in polynomial-time on any kernel perfect family of digraphs. But the algorithmic complexity status of the problem of computing a kernel in a kernel perfect digraph also seems to be unknown~\cite{pass2020perfect}. 	Prisner~\cite{prisner1994algorithms} proved that interval nest digraphs and their reversals are kernel-perfect, and a kernel can be found in these graphs in time $O(n^2)$ if a representation of the graph is given. Note that the {\sc Min-Kernel} problem can be shown to be NP-complete even in some kernel perfect families of digraphs that has a polynomial-time computable kernel (see Remark~\ref{rem:minkernel}).
 
	\medskip

	\noindent\textbf{Our results:} We show that reflexive interval digraphs are kernel perfect and hence the {\sc Kernel} problem is trivial on this class of digraphs. We construct a linear-time algorithm that computes a kernel in a reflexive interval digraph, given an interval representation of digraph as an input. This improves and generalizes Prisner's similar results about interval nest digraphs mentioned above. Moreover, we give an $O((n+m)n)$ time algorithm for the {\sc Min-Kernel} and {\sc Max-Kernel} problems for a superclass of reflexive interval digraphs (here $m$ denotes the number of edges in the digraph other than the self-loops at each vertex). As a consequence, we obtain an improvement over the $O(n^3)$ time algorithm for finding a minimum independent dominating set in cocomparability graphs that was given by Kratsch and Stewart~\cite{kratsch1993domination}. Our algorithm for {\sc Min-Kernel} and {\sc Max-Kernel} problems has a better running time of $O(n^2)$ for adjusted interval digraphs. On the other hand, we show that the problem {\sc Kernel} is NP-complete for point-point digraphs and {\sc Min-Kernel} and {\sc Max-Kernel} problems are APX-hard for point-point digraphs. 
	
\medskip
	
\noindent\textbf{Outline of the paper:} In the remaining part of this section, we give a literature survey on the previous works related to the problems and graph classes of our interest, and also define some of the notation that we use in this paper. In Section~\ref{sec:ordering}, we give our ordering characterization for reflexive interval digraphs. Section~\ref{sec:algorithms} presents the polynomial-time algorithms for the problems that we consider in the class of reflexive interval digraphs. In Section~\ref{sec:npcomplete}, we give a characterization for point-point digraphs followed by the NP-completeness and/or APX-hardness results for point-point digraphs. Section~\ref{sec:conclusion} contains some concluding remarks and proposes some possible directions for further research.
		
%

\subsection{Literature survey}
The problems of computing a maximum independent set and minimum dominating set in undirected graphs are two classic optimization problems in graph theory. As we have noted before, the {\sc Independent-Set} problem in a directed graph coincides with the problem of finding a maximum cardinality independent set of its underlying undirected graph. Also, in order to find a maximum independent set in an undirected graph, one could just orient the edges of the graph in an arbitrary fashion and solve the {\sc Independent-Set} problem on the resulting digraph. Therefore, there is an easy reduction from the problem of computing a maximum independent set in undirected graphs to the {\sc Independent-Set} problem on digraphs and vice versa, implying that these two problems have the same algorithmic complexity. On the other hand, it seems that the the directed analogue of the domination problem is harder than the undirected version, since even though one can find a minimum dominating set in an undirected graph by replacing every edge with symmetric arcs and then using an algorithm for {\sc Dominating-Set} on digraphs, a reduction in the other direction is not known. In particular, a minimum dominating set in the underlying undirected graph of a digraph need not even be a dominating set of the digraph. For example, any vertex of a complete graph is a dominating set of size 1, implying that the problem of finding a minimum cardinality dominating set in a complete graph is trivial, while no polynomial-time algorithm is known to solve the {\sc Dominating-Set} problem for the class of tournaments, which are precisely orientations of complete graphs. Even though domination in tournaments is well studied in the literature~\cite{megiddo1988finding,alon2006dominating,chudnovsky2018domination}, very little is known about the algorithmic complexity of the {\sc Dominating-Set} problem in digraphs. Nevertheless, {\sc Kernel} is a variant of {\sc Dominating-Set} that has gained the attention of researchers over the years. Apart from game theory, the notion of kernel historically played an important role as an approach towards the proof of the celebrated `Strong perfect graph conjecture' (now Strong Perfect Graph Theorem). A digraph $G$ is called \emph{normal} if every clique in $G$ has a kernel (that is, every clique contains a vertex that is an out-neighbor of every other vertex of the clique). Berge and Duchet (see~\cite{boros1996perfect}) 
introduced a notion called \emph{kernel-solvable} graphs, which are undirected graphs for which every normal orientation (symmetric arcs are allowed) of it has a kernel. They conjectured that kernel solvable graphs are exactly the perfect graphs. This conjecture was shown to be true for various special graph classes~\cite{blidia1993kernels,maffray1986kernels,maffray1992kernels}. In general graphs, it was proved by Boros and Gurvich~\cite{boros1996perfect} that perfect graphs are kernel solvable and the converse direction follows from the Strong Perfect Graph Theorem. Kernels are also closely related to \emph{Grundy functions} in digraphs (for a digraph $G=(V,E)$, a non-negative function $f:V\rightarrow \mathbb{N}_{>0}$ is called a Grundy function, if for each vertex $v\in V$, $f(v)$ is the smallest non-negative integer that does not belong to the set $\{ f(u): u\in N^+(v)\}$). Berge~\cite{berge1973graphs} showed that if a digraph has a Grundy function then it has a kernel. Even though the converse is not necessarily true for general digraphs, Berge~\cite{berge1973graphs} proved that every kernel-perfect graph has a Grundy function. It is known that almost every random digraph has a kernel~\cite{de1990kernels}. Kernels, its variants and kernel-perfect graphs are topics that have been extensively studied in the literature, including in the works by Richardson~\cite{richardson1953extension}, Galeana-Sánchez and Neumann-Lara~\cite{galeana1984kernels}, Berge and Duchet~\cite{berge1990recent}, and many more. See~\cite{boros2006perfect} for a detailed survey of results related to kernels. 

Though every normal orientation of a perfect graph has a kernel, the question of finding a kernel has been noted as a challenging problem even in such digraphs. Polynomial-time algorithms for the {\sc Kernel} problem, that also compute a kernel in case one exists, have been obtained for some special graph classes.
K\"onig (see~\cite{haynes1998domination}), 
who was one of the earliest to study domination in digraphs (he called an independent dominating set a `basis of second kind'), proves that every minimal absorbing set of a transitive digraph is a kernel and every kernel in a transitive digraph has the same cardinality. Thus the {\sc Kernel} problem is trivial for transitive digraphs and there is a simple linear time algorithm for the {\sc Min-Kernel} problem in such digraphs. 
The problem of computing a kernel, if one exists, in polynomial time can be solved in digraphs that do not contain odd directed cycles using Richardson's Theorem~\cite{richardson1946weakly}. This implies that this problem is also polynomial-time solvable in directed acyclic graphs. Polynomial-time algorithms for finding a kernel, if one exists, is also known for digraphs that are normal orientations of permutation graphs~\cite{abbas2005polynomial}, Meyniel orientations (an orientation $D$ of $G$ for which every triangle in $D$ has at least two symmetric arcs) of comparability graphs~\cite{abbas2005polynomial}, normal orientations (without symmetric arcs) of claw-free graphs~\cite{pass2020perfect}, normal orientations of chordal graphs~\cite{pass2020perfect} and normal orientations of directed edge graphs (intersection graphs of directed paths in a directed tree)~\cite{pass2020perfect,de2011solving}. For the class of normal orientations of line graphs of bipartite graphs, Maffray~\cite{maffray1992kernels} observed that kernels in such graphs coincide with the stable matchings in the corresponding bipartite graphs. Thus in this graph class, a kernel can be computed in polynomial time using the celebrated algorithm of Gale and Shapely ~\cite{gale1962college} for stable matchings in bipartite graphs. It is shown in~\cite{pass2020perfect} that for any orientation (without symmetric arcs) of circular arc graphs, {\sc kernel} can be solved in polynomial time and a kernel, if one exists, can also be computed in polynomial time. The problem was also solved for the class of interval nest digraphs by Prisner~\cite{prisner1994algorithms}.

In this paper, we study the {\sc Kernel}, {\sc Min-Kernel}, {\sc Max-Kernel}, {\sc Absorbing-Set}, {\sc Dominating-Set}, and {\sc Independent-Set} problems in the class of interval digraphs and its subclasses. Interval digraphs were introduced by Das, Roy, Sen and West~\cite{das1989interval} in 1989. They provided a characterization for the adjacency matrices of interval digraphs and also showed that they are exactly the digraphs formed by the intersection of two Ferrers digraphs whose union is a complete digraph (see~\cite{das1989interval}).
Many subclasses of interval digraphs have attracted the interest of researchers over the years since then. The authors of~\cite{das1989interval} studied the special subclass of interval digraphs called interval point digraphs. If a digraph $G$ has an interval representation in which $T_u$ is a point that lies inside the interval $S_u$ for each vertex $u\in V(G)$, the graph $G$ is said to be an \emph{interval catch digraph}. Even more restrictively, if the point $T_u$ is the left end-point of the interval $S_u$ for each vertex $u$, then the digraph is said to be a \emph{chronological interval digraph}; such digraphs were introduced and characterized in~\cite{das2013recognition}. We would like to note here that interval catch digraphs were defined and studied in the work of Maehara~\cite{maehara1984digraph} that predates the introduction of interval digraphs (the term ``interval digraph'' was used with a different meaning in this work). A forbidden structure characterization and a polynomial time recognition algorithm for interval catch digraphs was presented in~\cite{prisner1989characterization}. Prisner~\cite{prisner1994algorithms} generalized interval catch digraphs to interval nest digraphs and provided a polynomial-time recognition algorithm for interval point digraphs. The class of adjusted interval digraphs were introduced by Feder, Hell, Huang, and Rafiey~\cite{feder2012adjusted}. They showed that the list homomorphism problem for a target digraph $H$ is polynomial-time solvable if $H$ is an adjusted interval digraph and conjecture that if $H$ is not an adjusted interval digraph, then the problem is NP-complete (see~\cite{feder2012adjusted}).

\subsection{Notation}\label{sec:notation}
For a closed interval $I=[x,y]$ of the real line (here $x,y\in\mathbb{R}$ and $x\leq y$), we denote by $l(I)$ the left end-point $x$ of $I$ and by $r(I)$ the right end-point $y$ of $I$. We use the following observation throughout the paper: if $I$ and $J$ are two intervals, then $I\cap J=\emptyset\Leftrightarrow (r(I)<l(J))\vee (r(J)<l(I))$. Given an interval representation of a graph, we can always perturb the endpoints of the intervals slightly to obtain an interval representation of the same graph which has the property that no endpoint of an interval coincides with any other endpoint of an interval. We assume that every interval representation considered in this paper has this property.

Let $G=(V,E)$ be a directed graph. For $u,v\in V(G)$, we say that $u$ is an \emph{in-neighbour} (resp. \emph{out-neighbour}) of $v$ if $(u,v)\in E(G)$ (resp. $(v,u)\in E(G)$). For a vertex $v$ in $G$, we denote by $N^+_G(v)$ and $N^-_G(v)$ the set of out-neighbours and the set of in-neighbours of the vertex $v$ in $G$ respectively. When the graph $G$ under consideration is clear from the context, we abbreviate $N^+_G(v)$ and $N^-_G(v)$ to just $N^+(v)$ and $N^-(v)$ respectively. We denote by $n$ the number of vertices in the digraph under consideration, and by $m$ the number of edges in it not including any self-loops.

For $i,j\in\mathbb{N}$ such that $i\leq j$, let $[i,j]$ denote the set $\{i,i+1,\ldots,j\}$. Let $G$ be a digraph with vertex set $[1,n]$. Then for $i,j\in [1,n]$, we define $N^+_{>j}(i)= N^+(i)\cap [j+1,n]$, $N^-_{>j}(i)= N^-(i)\cap [j+1,n]$, $N^+_{<j}(i)= N^+(i)\cap [1,j-1]$, and $N^-_{<j}(i)= N^-(i)\cap [1,j-1]$.
We denote by $\overline{N^+_{>j}(i)}$ and $\overline{N^-_{>j}(i)}$ the sets $[j+1,n]\setminus N^+_{>j}(i)$ and $[j+1,n]\setminus N^-_{>j}(i)$ respectively.
\section{Ordering characterization} \label{sec:ordering}
We first show that a digraph is a reflexive interval digraph if and only if there is a linear ordering of its vertex set such that none of the structures shown in Figure~\ref{fig:forbidden} are present.
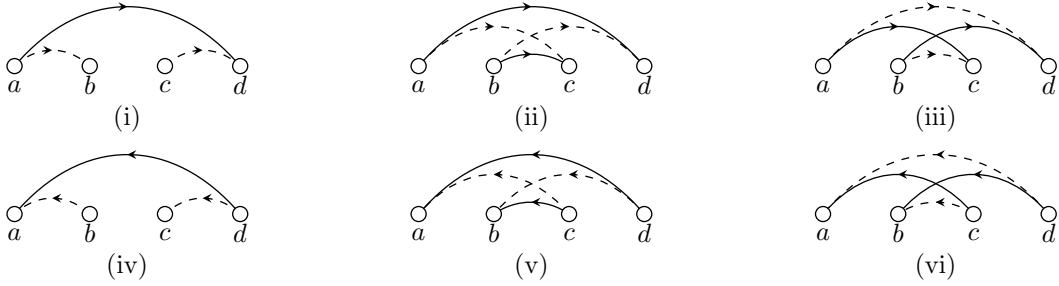
\begin{figure}
\begin{tabular}{p{.3\textwidth}p{.3\textwidth}p{.3\textwidth}}
	\parbox{.3\textwidth}{\centering
	\newcommand{\myptr}{{\arrow{stealth}}}
	\renewcommand{\vertexset}{(a,0,0),(b,1,0),(c,2,0),(d,3,0)}
	\renewcommand{\edgeset}{(a,b,,,8,dashed),(a,d,,,30),(c,d,,,8,dashed)}
	\renewcommand{\defradius}{.1}
	\renewcommand{\isdirected}{myptr}
	\begin{tikzpicture}
	\drawgraph
	\node [below=2] at (\xy{a}) {$a$};
	\node [below] at (\xy{b}) {$b$};
	\node [below=2] at (\xy{c}) {$c$};
	\node [below] at (\xy{d}) {$d$};
	\end{tikzpicture}}
	& 
	\parbox{.3\textwidth}{\centering
	\newcommand{\myptr}{{\arrow{stealth}}}
	\renewcommand{\vertexset}{(a,0,0),(b,1,0),(c,2,0),(d,3,0)}
	\renewcommand{\edgeset}{(a,c,,,20,dashed),(a,d,,,30),(b,c,,,6),(b,d,,,20,dashed)}
	\renewcommand{\defradius}{.1}
	\renewcommand{\isdirected}{myptr}
	\begin{tikzpicture}
	\drawgraph
	\node [below=2] at (\xy{a}) {$a$};
	\node [below] at (\xy{b}) {$b$};
	\node [below=2] at (\xy{c}) {$c$};
	\node [below] at (\xy{d}) {$d$};
	\end{tikzpicture}}
	&
	\parbox{.3\textwidth}{\centering
	\newcommand{\myptr}{{\arrow{stealth}}}
	\renewcommand{\vertexset}{(a,0,0),(b,1,0),(c,2,0),(d,3,0)}
	\renewcommand{\edgeset}{(a,d,,,30,dashed),(a,c,,,20),(b,d,,,20),(b,c,,,6,dashed)}
	\renewcommand{\defradius}{.1}
	\renewcommand{\isdirected}{myptr}
	\begin{tikzpicture}
	\drawgraph
	\node [below=2] at (\xy{a}) {$a$};
	\node [below] at (\xy{b}) {$b$};
	\node [below=2] at (\xy{c}) {$c$};
	\node [below] at (\xy{d}) {$d$};
	\end{tikzpicture}}\\
	\parbox{.3\textwidth}{\centering (i)}&\parbox{.3\textwidth}{\centering (ii)}&\parbox{.3\textwidth}{\centering (iii)}\\
	\parbox{.3\textwidth}{\centering
	\newcommand{\myptr}{{\arrow{stealth}}}
	\renewcommand{\vertexset}{(a,0,0),(b,1,0),(c,2,0),(d,3,0)}
	\renewcommand{\edgeset}{(b,a,,,-8,dashed),(d,a,,,-30),(d,c,,,-8,dashed)}
	\renewcommand{\defradius}{.1}
	\renewcommand{\isdirected}{myptr}
	\begin{tikzpicture}
	\drawgraph
	\node [below=2] at (\xy{a}) {$a$};
	\node [below] at (\xy{b}) {$b$};
	\node [below=2] at (\xy{c}) {$c$};
	\node [below] at (\xy{d}) {$d$};
	\end{tikzpicture}}
	& 
	\parbox{.3\textwidth}{\centering 
	\newcommand{\myptr}{{\arrow{stealth}}}
	\renewcommand{\vertexset}{(a,0,0),(b,1,0),(c,2,0),(d,3,0)}
	\renewcommand{\edgeset}{(c,a,,,-20,dashed),(d,a,,,-30),(c,b,,,-6),(d,b,,,-20,dashed)}
	\renewcommand{\defradius}{.1}
	\renewcommand{\isdirected}{myptr}
	\begin{tikzpicture}
	\drawgraph
	\node [below=2] at (\xy{a}) {$a$};
	\node [below] at (\xy{b}) {$b$};
	\node [below=2] at (\xy{c}) {$c$};
	\node [below] at (\xy{d}) {$d$};
	\end{tikzpicture}}
	&
	\parbox{.3\textwidth}{\centering
	\newcommand{\myptr}{{\arrow{stealth}}}
	\renewcommand{\vertexset}{(a,0,0),(b,1,0),(c,2,0),(d,3,0)}
	\renewcommand{\edgeset}{(d,a,,,-30,dashed),(c,a,,,-20),(d,b,,,-20),(c,b,,,-6,dashed)}
	\renewcommand{\defradius}{.1}
	\renewcommand{\isdirected}{myptr}
	\begin{tikzpicture}
	\drawgraph
	\node [below=2] at (\xy{a}) {$a$};
	\node [below] at (\xy{b}) {$b$};
	\node [below=2] at (\xy{c}) {$c$};
	\node [below] at (\xy{d}) {$d$};
	\end{tikzpicture}}\\
	\parbox{.3\textwidth}{\centering (iv)}&\parbox{.3\textwidth}{\centering (v)}&\parbox{.3\textwidth}{\centering (vi)}
\end{tabular}
\caption{Forbidden structures for reflexive interval digraphs (possibly $b=c$ in (i), (ii), (iv) and (v)). A dashed arc from $u$ to $v$ indicates the absence of the edge $(u,v)$ in the graph. Note that the vertices are assumed to have self-loops since a vertex without a self-loop is itself forbidden in a reflexive interval digraph.}\label{fig:forbidden}
\end{figure}

\begin{theorem}\label{thm:ordering}
A digraph $G$ is a reflexive interval digraph if and only if $V(G)$ has an ordering $<$ in which for any $a,b,c,d\in V(G)$ such that $a<b<c<d$, none of the structures in Figure~\ref{fig:forbidden} occur ($b$ and $c$ can be the same vertex in (i), (ii), (iv), (v) of Figure~\ref{fig:forbidden}). 
\end{theorem}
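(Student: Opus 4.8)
The plan is to prove both directions by translating the geometry of an interval representation into combinatorial conditions on a vertex ordering. For the forward direction, suppose $G$ is a reflexive interval digraph with representation $\{(S_u,T_u)\}$, where $S_u\cap T_u\neq\emptyset$ for every $u$. After the standard perturbation (mentioned in the Notation section) I may assume all $4n$ endpoints are distinct. The natural candidate ordering is to sort the vertices by the left endpoint $l(S_u)$; but since we want symmetry between the ``source'' structures (i)--(iii) and their ``sink'' reversals (iv)--(vi), I would first try sorting by $\min(l(S_u),l(T_u))$ — or, since $S_u\cap T_u\neq\emptyset$ forces $l(S_u)$ and $l(T_u)$ to be ``close'' in the sense that the later of the two intervals to start still meets the earlier one — simply by $l(S_u)$, and check which of the six patterns this rules out. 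The key point to extract is: for $a<b<c<d$ in this order, each forbidden configuration asserts a conjunction of adjacencies and non-adjacencies among $\{a,b,c,d\}$ that is geometrically impossible once we know $l(S_a)<l(S_b)<l(S_c)<l(S_d)$ and every $S_u$ meets its own $T_u$. For instance, pattern (i) says $(a,b)\notin E$, $(a,d)\in E$, $(c,d)\notin E$ (with possibly $b=c$): from $(a,d)\in E$ we get $S_a\cap T_d\neq\emptyset$, so $S_a$ reaches past $l(T_d)$; reflexivity at $d$ and $l(S_a)<l(S_d)$ then pin down where $l(T_d)$ can lie relative to $S_a$, and combining with $(a,b)\notin E$ (so $S_a$ and $T_b$ are disjoint) and $(c,d)\notin E$ forces a contradiction because $T_b$ (equivalently $T_c$) and $T_d$ cannot be separated in the required way. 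I would work out one of patterns (i)--(iii) in full and note that the other two are the remaining cases of a case split on how $T_b,T_c$ sit relative to $S_a$; patterns (iv)--(vi) then follow by the reversal symmetry (reversing all arcs of a reflexive interval digraph swaps the roles of $S_u$ and $T_u$, and reverses the ordering), as the excerpt itself notes that reversals of these digraphs behave dually.

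For the converse, given an ordering $<$ on $V(G)=\{1,\dots,n\}$ avoiding all six patterns, I would construct an interval representation explicitly. The cleanest approach is to place, for each vertex $i$, its interval $S_i$ so that the left endpoints respect the order $<$ — say $l(S_i)$ at position $i$ on the real line — and then choose $r(S_i)$ and the interval $T_i$ to encode exactly the out- and in-neighbourhoods. Concretely, I would define $r(S_i)$ to be just past the largest $j$ with $(i,j)\in E$ (equivalently, use the sets $N^+_{>j}$ from the Notation section to locate where $S_i$ must end), and symmetrically define $T_i$ using $N^-$. The forbidden-pattern hypotheses are precisely what is needed to show this assignment is consistent: patterns (i)--(iii) guarantee that the ``out-reach'' of $S_i$ can be taken to be an interval (i.e.\ the out-neighbour set, suitably restricted, has no ``gap'' that an interval cannot realize), patterns (iv)--(vi) do the same for the ``in-reach'' of $T_i$, and the interaction conditions ensure $S_i\cap T_j\neq\emptyset \iff (i,j)\in E$ globally. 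Reflexivity — $S_i\cap T_i\neq\emptyset$ for all $i$ — must be imposed directly; since the theorem's conclusion already builds in that loops are present at every vertex (the figure caption stresses this), I would add the loop $(i,i)$ to $E$ if it is not there, observe that adding loops cannot create any of the six patterns (each pattern involves the non-edge or edge between two \emph{distinct} positions among $a<b<c<d$), and then the construction automatically yields $S_i\cap T_i\neq\emptyset$.

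The main obstacle will be the converse construction: making the forbidden-pattern conditions do exactly the right work so that a single coherent choice of all $2n$ intervals realizes all $\binom{n}{2}$-many adjacency/non-adjacency constraints simultaneously. The subtlety is that patterns (i)/(iv) are ``$2$-vertex within a $4$-window'' conditions (they allow $b=c$), so they control local monotonicity of the neighbourhoods, while patterns (ii),(iii),(v),(vi) are genuine $4$-vertex conditions controlling how two vertices' neighbourhoods interleave; I expect to need a careful induction or a direct but bookkeeping-heavy assignment of endpoints, ordering the relevant endpoints of $S_i$ and $T_i$ along the line by a secondary sort keyed to the neighbourhood sets. A reasonable route is: process vertices in the order $<$, maintain the invariant that the partial representation on $\{1,\dots,k\}$ is correct and ``extendable,'' and show that the absence of the six patterns lets us slot in vertex $k+1$; the forbidden structures are engineered to be exactly the obstructions to such an extension. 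I would also double-check the degenerate case $b=c$ separately in the forward direction, since there the three-vertex instances of (i),(ii),(iv),(v) reduce to conditions on triples, and verify the perturbation argument does not secretly require strict inequalities that the $b=c$ case violates.
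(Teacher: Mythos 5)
Your overall shape (order by a geometric key for the forward direction; explicit endpoint construction for the converse) matches the paper, but both directions as you have sketched them contain concrete errors. In the forward direction, neither of your proposed keys works. Sorting by $l(S_u)$ (or by $\min(l(S_u),l(T_u))$) does \emph{not} exclude pattern (i): take $S_a=[0,7]$, $T_a=[0,1]$; $S_b=[1,10]$, $T_b=[8,9]$; $S_c=[2,5]$, $T_c=[2,3]$; $S_d=[3,5.7]$, $T_d=[5.5,6]$. Every $S_u$ meets its $T_u$, the $l(S)$-order is $a<b<c<d$, and yet $S_a\cap T_b=\emptyset$, $S_c\cap T_d=\emptyset$, $S_a\cap T_d\neq\emptyset$, which is exactly structure (i). Your intuition that reflexivity forces $l(S_u)$ and $l(T_u)$ to be ``close'' is false --- they can be arbitrarily far apart as long as the intervals overlap. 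The key that does work is the \emph{left endpoint of $S_u\cap T_u$}, i.e.\ $\max(l(S_u),l(T_u))$ (this is what the paper uses): calling it $x_u$, one has $l(S_u),l(T_u)\le x_u\le r(S_u),r(T_u)$ for every $u$, and these sandwich inequalities are precisely what turns each non-edge into a one-sided separation ($r(S_a)<l(T_b)$ rather than the other disjunct) and lets the three separations in each pattern be chained into a contradiction.

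In the converse, defining $r(S_i)$ ``just past the largest $j$ with $(i,j)\in E$'' is wrong: it would make $S_i$ meet $T_j$ for every $j$ between $i$ and that largest out-neighbour, creating spurious edges whenever $N^+_{>i}(i)$ has a gap. The construction the theorem actually needs stops $S_i$ just \emph{before} the first non-out-neighbour $y_i$ after $i$ (with a fractional tie-break recording $|N^+_{>y_i}(i)|$), and the out-neighbours $j$ of $i$ lying beyond the gap are reached instead by pulling $l(T_j)$ leftward to $\min\{r(S_k):k\in N^-_{<j}(j)\}$. The consistency of this two-sided scheme is exactly what patterns (ii) and (iii) buy you: they imply that for $i<j$ the sets $N^+_{>j}(i)$ and $N^+_{>j}(j)$ are nested (and dually for in-neighbourhoods via (v),(vi)), and this nesting is what guarantees that extending $T_j$ back to some $r(S_k)$ does not accidentally capture an $S_i$ with $(i,j)\notin E$. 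Your proposal gestures at the right ingredients but does not identify this mechanism, and your fallback of an incremental one-vertex-at-a-time extension is not developed to the point where one could check it. (A minor further point: a vertex without a self-loop is itself one of the forbidden structures per the figure caption, so you should not add loops to $E$ in the converse; you must instead show the constructed $S_i$ and $T_i$ automatically intersect, which follows since the point $i$ lies in both.)
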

\begin{proof}
Let $G$ be a reflexive interval digraph with an interval representation $\{(S_v,T_v):v\in V(G)\}$. For any vertex $v\in V(G)$, let $x_v$ be the left most end point of the interval $S_v\cap T_v$(which is well defined as $G$ is a reflexive interval digraph). Let $<$ be an ordering of $V(G)$ with respect to the increasing order of the points $x_v$. Now we can verify that structures in Figure~\ref{fig:forbidden} are forbidden with respect to the order $<$.

Suppose not. Let $a<b<c<d$ be such that of Figure~\ref{fig:forbidden}(i). Then $a<b$, $c<d$ and $(a,b), (c,d)\notin E(G)$ implies that $r(S_a)<l(T_b)$ and $r(S_c)< l(T_d)$. Since $b\leq c$, we also have that $l(T_b)\leq r(S_c)$. Combining these observations we then have that $r(S_a)<l(T_d)$, which further implies that $(a,d)\notin E(G)$, which is a contradiction to Figure~\ref{fig:forbidden}(i). Let $a<b<c<d$ be such that of Figure~\ref{fig:forbidden}(ii). Then $a<c$, $b<d$ and $(a,c), (b,d)\notin E(G)$ implies that $r(S_a)<l(T_c)$ and $r(S_b)< l(T_d)$. Since $(a,d)\in E(G)$, we also have that $l(T_d)<r(S_a)$. Combining these observations we then have, $r(S_b)<l(T_c)$ implying that $(b,c)\notin E(G)$, which is a contradiction to Figure~\ref{fig:forbidden}(ii). Suppose that $a<b<c<d$ be such that of Figure~\ref{fig:forbidden}(iii). Then $(a,c), (b,d)\in E(G)$ implies that $l(T_c)<r(S_a)$ and $l(T_d)<r(S_b)$. Since $a<d$, $(a,d)\notin E(G)$, we also have that $r(S_a)<l(T_d)$. Combining these observations, we then have $l(T_c)<r(S_b)$. Since $b<c$, this implies that $(b,c)\in E(G)$, which is a contradiction to Figure~\ref{fig:forbidden}(iii). Since we arrive at a contradiction in every case, we can conclude that none of the structures in Figures~\ref{fig:forbidden}(i), (ii) or (iii) can be present. Similarly, by interchanging the roles of source and destination intervals in the above proof, we can also prove that none of the structures in Figures~\ref{fig:forbidden}(iv), (v) or (vi) can be present with respect to the ordering $<$.

Conversely, assume that $<$ is an ordering of $V(G)$ for which the structures in Figure~\ref{fig:forbidden} are absent. Let $n=|V(G)|$. We can assume that $V(G)=[1,n]$ and that $<$ is the ordering $(1,2,\ldots,n)$. First, we note the following observation.
\begin{observation}\label{subset}
For any two vertices $i,j$ such that $i<j$, we have the following:
\begin{myenumerate}
	\item~\label{it1} either $N^+_{>j}(i)\subseteq N^+_{>j}(j)$ or $N^+_{>j}(j)\subseteq N^+_{>j}(i)$ and
	\item~\label{it2} either $N^-_{>j}(i)\subseteq N^-_{>j}(j)$ or $N^-_{>j}(j)\subseteq N^-_{>j}(i)$.
\end{myenumerate}
\end{observation}
\begin{proof}
Suppose not. Due to the symmetry between~\ref{it1} and~\ref{it2}, we prove only the case where~\ref{it1} is not true. Then there exists two distinct vertices $x_i,x_j \in \{j+1,\ldots,n\}$ such that $x_i\in N^+_{>j}(i)\setminus N^+_{>j}(j)$ and $x_j\in N^+_{>j}(j)\setminus N^+_{>j}(i)$. Now if $x_i<x_j$, then the vertices $i<j<x_i<x_j$ form Figure~\ref{fig:forbidden}(iii) which is forbidden and if $x_j<x_i$, then the vertices $i<j<x_j<x_i$ form Figure~\ref{fig:forbidden}(ii) which is also forbidden. As we have a contradiction in both the cases, we are done. 
\end{proof}
\medskip

We now define for each $i\in\{1,2,\ldots,n\}$, a pair of intervals $(S_i,T_i)$ as follows.
For each $i\in \{1,2,\ldots, n\}$, let $$y_i=\begin{array}{lll}\begin{cases}
\min\overline{N^+_{>i}(i)}, & \text{ if } \overline{N^+_{>i}(i)}\neq \emptyset \\
n+1, & \text{ otherwise} 
\end{cases}&\hspace{.5in}\text{and}\hspace{.5in}&z_i= |N^+_{>y_i}(i)|.\end{array}$$

Define, $r(S_i)=y_i-1+\frac{z_i}{n+1}$ and $l(T_i)= \min\big(\{i\}\cup\{r(S_j):j\in N^-_{<i}(i)\}\big)$.

Similarly let, $$y'_i= \begin{array}{lll}\begin{cases}
\min\overline{N^-_{>i}(i)}, & \text{ if } \overline{N^-_{>i}(i)}\neq \emptyset \\
n+1, & \text{ otherwise} 
\end{cases}&\hspace{.5in}\text{and}\hspace{.5in}&z'_i= |N^-_{>y'_i}(i)|.\end{array}$$

Define, $r(T_i)=y'_i-1+\frac{z'_i}{n+1}$ and $l(S_i)= \min\big(\{i\}\cup\{r(T_j):j\in N^+_{<i}(i)\}\big)$.\medskip

Note that for each vertex $i\in V(G)$, by the above definition of intervals corresponding to $i$, we have that the point $i\in S_i\cap T_i$, $y_i-1\leq r(S_i)<y_i$ and $y'_i-1 \leq r(T_i)<y'_i$. Also for any two vertices $i,j$ such that $y_i=y_j=p$, we have by Observation~\ref{subset} that $r(S_i)\leq r(S_j)$ if and only if $z_i\leq z_j$ if and only if $N^+_{>p}(i)\subseteq N^+_{>p}(j)$. Similarly, for any two vertices $i,j$ such that $y'_i=y'_j=q$, we have that $r(T_i)\leq r(T_j)$ if and only if $z'_i\leq z'_j$ if and only if $N^-_{>q}(i)\subseteq N^-_{>q}(j)$.\medskip 
 
Now we have to prove that $E(G)= \{(i,j):S_i\cap T_j \neq \emptyset\}$. Let $(i,j)\in E(G)$ be such that $i<j$. If $j<y_i$, then we have $l(S_i)\leq i<j\leq r(S_i)$, implying that $S_i\cap T_j \neq \emptyset$ (recall that $j\in S_j\cap T_j$). Suppose that $y_i<j$. Then we have $l(T_j)\leq r(S_i)<y_i <j<r(T_j)$ implying that $S_i\cap T_j \neq \emptyset$. In a similar way, by interchanging the roles of source and destination intervals and that of $i$ and $j$, we can also prove that: if $(i,j)\in E(G)$ be such that $j<i$, then $S_i \cap T_j\neq \emptyset$. On the other hand, suppose that $(i,j)\notin E(G)$, where $i<j$. Clearly, then $y_i\leq j$. For the sake of contradiction assume that $S_i\cap T_j \neq \emptyset$. Since $r(S_i)< y_i$, this is possible only if $l(T_j)\leq r(S_i)<y_i\leq j$. Thus, $l(T_j)<j$, which implies by the definition of intervals that $N^-_{<j}(j)\neq\emptyset$. Let $k\in N^-_{<j}(j)$ such that $r(S_k)=\min\{r(S_l):l\in N^-_{<j}(j)\}$. Since $(k,j)\in E(G)$ and $r(S_k)=l(T_j)<j$, we can conclude by the definition of $r(S_k)$ that $y_k<j$. 
 Suppose that $y_i=y_k=p$. Then, since $r(S_k)=l(T_j)\leq r(S_i)$, we can conclude by our earlier observation that $N^+_{>p}(k)\subseteq N^+_{>p}(i)$, which contradicts the fact that $j\in N^+_{>p}(k)\setminus N^+_{>p}(i)$. We can thus infer that $y_i\neq y_k$. This together with the fact that $y_k-1\leq r(S_k)= l(T_j)<y_i$, implies that $y_k<y_i$. Suppose that $y_k\leq i$, then $k<y_k\leq i<j$, $(k,j)\in E(G)$, and $(k,y_k),(i,j)\notin E(G)$, which gives us Figure~\ref{fig:forbidden}(i), which is a contradiction. Therefore we can assume that $i<y_k$, which further implies that $(i,y_k)\in E(G)$ (recall that $y_k<y_i$). Now we have $y_k\in N^+_{>\max\{i,k\}}(i)\setminus N^+_{>\max\{i,k\}}(k)$ and $j\in N^+_{>\max\{i,k\}}(k)\setminus N^+_{>\max\{i,k\}}(i)$, which contradicts Observation~\ref{subset}. As we arrive at a contradiction in every cases, we can conclude that $S_i\cap T_j = \emptyset$. The case where $(i,j)\notin E(G)$ such that $j<i$ is symmetric.
\end{proof}
Now we define the following.
\begin{definition}[DUF-ordering]
A \emph{directed umbrella-free ordering} (or in short a \emph{DUF-ordering}) of a digraph $G$ is an ordering of $V(G)$ satisfying the following properties for any three distinct vertices $i<j<k$:
\begin{myenumerate}
	\item \label{out} if $(i,k)\in E(G)$, then either $(i,j)\in E(G)$ or $(j,k)\in E(G)$, and
	\item \label{in} if $(k,i)\in E(G)$, then either $(k,j)\in E(G)$ or $(j,i)\in E(G)$. 
\end{myenumerate} 
\end{definition}

\begin{definition}[DUF-digraph]
A digraph $G$ is a \emph{directed umbrella-free digraph} (or in short a \emph{DUF-digraph}) if it has a DUF-ordering.
\end{definition}

Then the following corollary is an immediate consequence of Theorem~\ref{thm:ordering}.
\begin{corollary}\label{cor:duf}
Every reflexive interval digraph is a DUF-digraph.
\end{corollary}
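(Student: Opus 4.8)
The plan is to observe that the ordering produced by Theorem~\ref{thm:ordering} is already a DUF-ordering, so nothing new needs to be constructed. First I would invoke Theorem~\ref{thm:ordering} to fix, for the given reflexive interval digraph $G$, a linear order $<$ on $V(G)$ in which none of the six structures in Figure~\ref{fig:forbidden} occurs. Then I would take three arbitrary distinct vertices $i<j<k$ and check properties~\ref{out} and~\ref{in} of the DUF-ordering definition against this order.

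For property~\ref{out}, I would argue by contradiction: assume $(i,k)\in E(G)$ but $(i,j)\notin E(G)$ and $(j,k)\notin E(G)$. Setting $a=i$, $b=c=j$, $d=k$ (legitimate because Figure~\ref{fig:forbidden}(i) explicitly allows $b=c$), the non-edge $(a,b)$, the non-edge $(c,d)$, and the edge $(a,d)$ are precisely the configuration of Figure~\ref{fig:forbidden}(i), contradicting the choice of $<$. For property~\ref{in}, I would do the symmetric argument using Figure~\ref{fig:forbidden}(iv): assuming $(k,i)\in E(G)$ with $(k,j)\notin E(G)$ and $(j,i)\notin E(G)$, the identification $a=i$, $b=c=j$, $d=k$ (again permitted since $b=c$ is allowed in (iv)) yields the non-edges $(b,a)$ and $(d,c)$ together with the edge $(d,a)$, which is exactly Figure~\ref{fig:forbidden}(iv) — contradiction. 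Hence $<$ satisfies both conditions and is a DUF-ordering, so $G$ is a DUF-digraph.

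There is essentially no substantive obstacle here; the corollary is immediate once the bookkeeping is done correctly. The only point requiring care is the matching of labels and arc orientations: one must notice that exactly the two forbidden structures (i) and (iv), in their degenerate $b=c$ form, encode respectively the failure of the out-condition and the in-condition of a DUF-ordering, while the other four structures play no role in this particular deduction. I would state this explicitly so the reader sees why ``immediate consequence of Theorem~\ref{thm:ordering}'' is justified.
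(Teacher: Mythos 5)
Your argument is correct and is exactly the deduction the paper intends when it calls the corollary an ``immediate consequence'' of Theorem~\ref{thm:ordering}: the degenerate $b=c$ instances of structures (i) and (iv) are precisely the negations of the out- and in-conditions of a DUF-ordering. The label/orientation bookkeeping checks out, so nothing is missing.
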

\medskip
Let $G$ be an undirected graph. We define the \emph{symmetric digraph of $G$} to be the digraph obtained by replacing each edge of $G$ by symmetric arcs.
	
\medskip

The following characterization of cocomparability graphs was first given by Damaschke~\cite{damaschke1990}.
\begin{theorem}[\cite{damaschke1990}]\label{thm:cocomp}
An undirected graph $G$ is a cocomparability graph if and only if there is an ordering $<$ of $V(G)$ such that for any three vertices $i<j<k$, if $ik\in E(G)$, then either $ij\in E(G)$ or $jk\in E(G)$.
\end{theorem}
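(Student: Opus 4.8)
The plan is to pass to the complement and recognize the stated ordering condition as transitivity of a canonical orientation. Recall that $G$ is a cocomparability graph precisely when $\overline{G}$ is a comparability graph, i.e.\ when $\overline{G}$ admits a transitive orientation (the strict part of a partial order on its edge set). I would first rewrite the ordering condition in contrapositive form: for $i<j<k$, the condition ``$ik\in E(G)\Rightarrow (ij\in E(G))\vee(jk\in E(G))$'' is equivalent to ``$(ij\notin E(G))\wedge(jk\notin E(G))\Rightarrow ik\notin E(G)$'', which, read in $\overline{G}$, says exactly: whenever $ij,jk\in E(\overline{G})$ with $i<j<k$, we also have $ik\in E(\overline{G})$. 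Now, given any linear ordering $<$ of $V(G)=V(\overline{G})$, orient each edge of $\overline{G}$ from its smaller endpoint to its larger endpoint; call this orientation $\sigma_<$. The reformulated condition is then precisely the assertion that $\sigma_<$ is transitive: an arc $i\to j$ followed by an arc $j\to k$ (so $i<j<k$ and $ij,jk\in E(\overline{G})$) forces the arc $i\to k$. This identification is the crux of the argument.

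For the forward direction, suppose $G$ is a cocomparability graph, so $\overline{G}$ has a transitive orientation $\tau$. Since $\tau$ is a strict partial order, it has a linear extension $<$, and I would take this $<$ as the required ordering. For any edge $uv$ of $\overline{G}$, $\tau$ orients it $u\to v$ exactly when $u<v$ in the linear extension, so $\sigma_<$ coincides with $\tau$ edge by edge and is therefore transitive. By the equivalence established above, $<$ satisfies the stated ordering condition on $G$.

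For the converse, suppose $<$ is an ordering of $V(G)$ satisfying the condition. Then $\sigma_<$ is a transitive orientation of $\overline{G}$ by the equivalence. Moreover $\sigma_<$ is acyclic, being consistent with the linear order $<$, hence antisymmetric; together with transitivity this exhibits $\sigma_<$ as the strict part of a partial order. Thus $\overline{G}$ is a comparability graph, and so $G$ is a cocomparability graph, completing the proof.

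The only step requiring genuine care is the first one: the translation of the combinatorial ordering condition into transitivity of $\sigma_<$. Once this dictionary is in place, both directions reduce to the standard facts that transitive orientations are exactly (the strict parts of) partial orders and that every partial order admits a linear extension, so I would not expect any serious obstacle beyond stating the correspondence cleanly.
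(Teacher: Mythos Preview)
Your argument is correct and is the standard proof of this characterization. Note, however, that the paper does not supply its own proof of this statement: Theorem~\ref{thm:cocomp} is quoted from Damaschke~\cite{damaschke1990} and used as a black box, so there is no in-paper argument to compare against. Your translation of the ordering condition into transitivity of the left-to-right orientation of $\overline{G}$, together with the linear-extension step for the forward direction, is exactly the intended reasoning and goes through without issue.
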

Then we have the following corollary.

\begin{corollary} \label{cor:DUFcc}
The underlying undirected graph of every DUF-digraph is a cocomparablity graph. 
\end{corollary}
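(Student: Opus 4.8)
The plan is to apply Damaschke's characterization (Theorem~\ref{thm:cocomp}) directly to the underlying undirected graph, using the very ordering that witnesses the DUF property. Let $G$ be a DUF-digraph with a DUF-ordering $<$ of $V(G)$, and let $G'$ denote its underlying undirected graph, so that $uv\in E(G')$ precisely when $(u,v)\in E(G)$ or $(v,u)\in E(G)$. I would keep the same ordering $<$ and verify the hypothesis of Theorem~\ref{thm:cocomp}: for any three vertices $i<j<k$ with $ik\in E(G')$, at least one of $ij$, $jk$ lies in $E(G')$.

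First I would split according to the orientation of the edge between $i$ and $k$ in $G$. If $(i,k)\in E(G)$, then property~(\textit{a}) of the DUF-ordering gives $(i,j)\in E(G)$ or $(j,k)\in E(G)$, and hence $ij\in E(G')$ or $jk\in E(G')$. If instead $(k,i)\in E(G)$, then property~(\textit{b}) gives $(k,j)\in E(G)$ or $(j,i)\in E(G)$, so again one of $ij$, $jk$ is an edge of $G'$. Since $ik\in E(G')$ forces at least one of these two orientations to be present, the required implication holds in every case, and Theorem~\ref{thm:cocomp} then lets me conclude that $G'$ is a cocomparability graph.

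There is no real obstacle here: the statement is essentially a translation of the DUF-ordering axioms into the language of the underlying undirected graph, after which Damaschke's theorem finishes the argument. The only point requiring a moment's care is matching the direction conventions --- checking that property~(\textit{b}), which speaks of an arc $(k,i)$ with $k>i$, really does yield an undirected edge incident to the middle vertex $j$ --- but this is immediate from the definition of the underlying graph.
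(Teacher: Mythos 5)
Your proof is correct and is exactly the argument the paper intends: the corollary is stated as an immediate consequence of the DUF-ordering axioms together with Damaschke's characterization (Theorem~\ref{thm:cocomp}), obtained by reusing the DUF-ordering and splitting on the orientation of the arc between $i$ and $k$. Nothing further is needed.
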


Note that there exist digraphs which are not DUF-digraphs but their underlying undirected graphs are cocomparability (for example, a directed triangle with edges $(a,b),(b,c)$ and $(c,a)$). But we can observe that the class of underlying undirected graphs of DUF-digraphs is precisely the class of cocomparability graphs, since it follows from Theorem~\ref{thm:cocomp} that symmetric digraph of any cocomparability graph is a DUF-digraphs. In contrast, the class of underlying undirected graphs of reflexive interval digraphs forms a strict subclass of cocomparability graphs. We prove this by showing that no directed graph that has $K_{3,3}$ as its underlying undirected graph can be a reflexive interval digraph ($K_{3,3}$ can easily be seen to be a cocomparability graph). This would also imply by Corollary~\ref{cor:duf} that the class of reflexive interval digraphs forms a strict subclass of DUF-digraphs.

\begin{theorem}\label{thm:K_33free}
The underlying undirected graph of a reflexive interval digraph cannot contain $K_{3,3}$ as an induced subgraph.
\end{theorem}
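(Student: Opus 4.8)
The plan is to argue directly from a reflexive interval representation rather than from the ordering characterization of Theorem~\ref{thm:ordering}; note that the underlying graph being cocomparability (Corollary~\ref{cor:DUFcc}) does not by itself forbid $K_{3,3}$, so we must use the stronger structure that reflexivity provides. Suppose for contradiction that $G$ is a reflexive interval digraph with representation $\{(S_v,T_v)\}_{v\in V(G)}$ whose underlying undirected graph contains an induced $K_{3,3}$ with parts $A=\{a_1,a_2,a_3\}$ and $B=\{b_1,b_2,b_3\}$. Since every vertex carries a self-loop, for each of these six vertices $v$ I fix a point $p_v\in S_v\cap T_v$.

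The first step is to exploit that $A$ is an independent set. For distinct $i,j$, both arcs between $a_i$ and $a_j$ are absent, so $S_{a_i}\cap T_{a_j}=\emptyset$ and $S_{a_j}\cap T_{a_i}=\emptyset$; since $p_{a_j}\in S_{a_j}\cap T_{a_j}$, this forces $p_{a_j}\notin S_{a_i}\cup T_{a_i}$ (in particular the three points $p_{a_1},p_{a_2},p_{a_3}$ are distinct). Relabelling so that $p_{a_1}<p_{a_2}<p_{a_3}$, each of the intervals $S_{a_1}$ and $T_{a_1}$ contains $p_{a_1}$ but misses $p_{a_2}$, hence — being an interval — lies entirely to the left of $p_{a_2}$; symmetrically $S_{a_3}\cup T_{a_3}$ lies entirely to the right of $p_{a_2}$. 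Applying the same reasoning to the independent set $B$, with the relabelling $p_{b_1}<p_{b_2}<p_{b_3}$, puts $S_{b_1}\cup T_{b_1}$ entirely to the left of $p_{b_2}$ and $S_{b_3}\cup T_{b_3}$ entirely to the right of $p_{b_2}$.

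The finishing step compares the two ``median'' reflexive points. By the symmetry between the parts $A$ and $B$ (the hypothesis is about the underlying graph and about independence, both symmetric under swapping the two parts), I may assume $p_{a_2}\le p_{b_2}$. Then $S_{a_1}\cup T_{a_1}\subseteq(-\infty,p_{a_2})\subseteq(-\infty,p_{b_2}]$, while $S_{b_3}\cup T_{b_3}\subseteq(p_{b_2},\infty)$, so these two sets are disjoint. Consequently $S_{a_1}\cap T_{b_3}=\emptyset$ and $S_{b_3}\cap T_{a_1}=\emptyset$, i.e.\ neither arc between $a_1$ and $b_3$ is present in $G$ — contradicting the fact that $a_1$ and $b_3$ are adjacent in the induced $K_{3,3}$. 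This contradiction completes the proof.

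I do not expect a serious obstacle. The two places that need a careful line are: (a) the deduction that independence within a part forces the reflexive point of one vertex to lie outside \emph{both} intervals of another vertex in the same part (this is where both non-arcs are used); and (b) the justification that the $A\leftrightarrow B$ symmetry — or, equivalently, the arc-reversal symmetry obtained by swapping all $S_v$ with $T_v$ — legitimately lets us assume $p_{a_2}\le p_{b_2}$. Everything else is a short interval-arithmetic computation.
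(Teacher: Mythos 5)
Your proof is correct, and it takes a genuinely different route from the paper's. The paper first derives from Theorem~\ref{thm:ordering} that the underlying undirected graph of a reflexive interval digraph admits a ``special umbrella-free'' ordering (for $a<b<c<d$, $ad\in E$ forces $ab\in E$ or $cd\in E$), and then rules out $K_{3,3}$ by a combinatorial case analysis over the possible placements of the two partite sets in such an ordering of the six vertices. You instead argue directly from the interval representation: your key observation --- that independence of $a_i,a_j$ together with $p_{a_j}\in S_{a_j}\cap T_{a_j}$ forces $p_{a_j}\notin S_{a_i}\cup T_{a_i}$ (each non-arc killing one of the two memberships) --- combined with convexity of intervals cleanly separates the extremal vertices of each part on the real line, and the comparison of the two median reflexive points finishes it. Both arguments are sound; in fact they are cousins, since the ordering used in the paper's proof of Theorem~\ref{thm:ordering} is exactly the increasing order of the points $p_v$, so you are in effect re-deriving ad hoc the one consequence of that ordering you need. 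What the paper's route buys is reuse of machinery already established (the ordering characterization) and a statement, ``no special umbrella-free ordering of $K_{3,3}$,'' that applies to any digraph class with such orderings; what your route buys is a shorter, self-contained, representation-level argument with no case analysis, and it makes transparent \emph{why} reflexivity (the existence of the points $p_v$) is the crucial hypothesis. Your two flagged concerns, (a) and (b), both check out exactly as you describe.
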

\begin{proof}
Since the class of reflexive interval digraphs is closed under taking induced subgraphs, it is enough to prove that the underlying undirected graph of a reflexive interval digraph cannot be $K_{3,3}$. Let $H$ be an undirected graph. An ordering $<$ of $V(H)$ is said to be a \emph{special umbrella-free} ordering of $H$, if for any four distinct vertices $a,b,c,d\in V(G)$ such that $a< b< c< d$, $ad\in E(H)$ implies that either $ab\in E(H)$ or $cd\in E(H)$. Let $G$ be any reflexive interval digraph. By Theorem~\ref{thm:ordering}, we have that $V(G)$ has an ordering such that none of the structures in Figure~\ref{fig:forbidden} are present. It follows that this ordering is also a special umbrella-free ordering of the underlying undirected graph of $G$. Therefore we can conclude that the underlying undirected graph of any reflexive interval digraph has a special umbrella-free ordering. We claim that $K_{3,3}$ does not have a special umbrella-free ordering, which then implies the theorem.

Let $A$ and $B$ denote the two partite sets of the bipartite graph $K_{3,3}$. Suppose for the sake of contradiction that $K_{3,3}$ has a special umbrella-free ordering $<\colon (v_1,v_2,\ldots,v_6)$. Suppose that $v_1$ and $v_6$ belong to different partite sets of $K_{3,3}$. Without loss of generality, we can assume that $v_1\in A$ and $v_6\in B$. This implies that there cannot exist vertices $v_i,v_j\in \{v_2,v_3,v_4,v_5\}$ such that $v_i< v_j$, $v_i\in A$ and $v_j\in B$, as otherwise we have $v_1< v_i< v_j< v_6$, $v_1v_6\in E(K_{3,3})$, and $v_1v_i,v_jv_6 \notin E(K_{3,3})$, which contradicts the fact that $< $ is a special umbrella-free ordering. This further implies that $v_2,v_3\in B$ and $v_4,v_5\in A$. Then we have $v_2< v_3< v_4< v_5$, $v_2v_5\in E(K_{3,3})$, and $v_2v_3,v_4v_5\notin E(K_{3,3})$, which is again a contradiction. Therefore we can assume that $v_1$ and $v_6$ belong to the same partite set of $K_{3,3}$. Without loss of generality, we can assume that $v_1,v_6\in A$. Now if $v_2\in A$, then we have $v_3,v_4,v_5\in B$. Then we have $v_1< v_2< v_3< v_4$, $v_1v_4\in E(K_{3,3})$, and $v_1v_2,v_3v_4\notin E(K_{3,3})$, which is again a contradiction. This implies that $v_2\in B$. Now if there exists a vertex $x\in\{v_4,v_5\}\cap A$, then we have $v_3\in B$, in which case we have $v_2< v_3< x< v_6$, $v_2v_6\in E(K_{3,3})$, and $v_2v_3,xv_6\notin E(K_{3,3})$, which is again a contradiction. Therefore we can assume that $v_4,v_5\in B$, implying that $v_3\in A$. Then we have $v_1< v_3< v_4< v_5$, $v_1v_5\in E(K_{3,3})$, and $v_1v_3,v_4v_5 \notin E(K_{3,3})$, which is again a contradiction. This shows that $K_{3,3}$ has no special umbrella-free ordering, thereby proving the theorem.
\end{proof}
 Prisner~\cite{prisner1994algorithms} proved the following.
\begin{theorem}[\cite{prisner1994algorithms}]\label{thm:prisnerweaklytriang}
The underlying undirected graphs of interval nest digraphs are weakly chordal graphs.
\end{theorem}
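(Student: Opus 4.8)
The plan is to dispose of every obstruction to weak chordality except one using the structure theory already in place, and then to kill the last obstruction --- induced even antiholes --- by a direct argument with the nest representation.

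Since $T_v\subseteq S_v$ for every vertex, an interval nest digraph $G$ is a reflexive interval digraph, so by Corollary~\ref{cor:DUFcc} its underlying undirected graph $H$ is a cocomparability graph. Cocomparability graphs are classically perfect and asteroidal-triple-free; perfectness forbids induced odd holes $C_{2j+1}$ and induced odd antiholes $\overline{C_{2j+1}}$ ($j\ge 2$) in $H$, and asteroidal-triple-freeness forbids induced $C_\ell$ with $\ell\ge 6$, since every such cycle contains an asteroidal triple (e.g.\ three vertices at pairwise cyclic distance $2$). Hence the only obstructions to weak chordality that could still occur in $H$ are the \emph{even antiholes} $\overline{C_{2j}}$, $j\ge 3$, and it suffices to rule these out.

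For this I would fix a nest representation and write $S_v=[a_v,b_v]\supseteq T_v=[c_v,d_v]$ with all endpoints distinct. A short interval computation shows that $uv\notin E(H)$ if and only if $u\prec v$ or $v\prec u$, where ``$x\prec y$'' abbreviates ``$d_x<a_y$ and $b_x<c_y$'' (both intervals of $x$ lie strictly to the left of both intervals of $y$), and that $\prec$ is a strict partial order. Suppose $H$ had an induced $\overline{C_{2j}}$, $j\ge 3$, on $v_0,\dots,v_{2j-1}$ whose non-edges are exactly the consecutive pairs $v_iv_{i+1}$ (indices mod $2j$). Since $v_iv_{i+2}\in E(H)$, transitivity of $\prec$ forbids two consecutive comparabilities from pointing the same way, so their directions alternate around the non-edge cycle and, after relabelling, $v_{2i}\prec v_{2i-1}$ and $v_{2i}\prec v_{2i+1}$ for every $i$. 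The even-indexed vertices then form a clique of $H$, as do the odd-indexed ones, so by the Helly property the point $P:=\min_i b_{v_{2i}}$ lies in every even $S$-interval and $Q:=\min_i b_{v_{2i+1}}$ in every odd one; moreover $v_{2i}\prec v_{2i\pm1}$ gives $b_{v_{2i}}<c_{v_{2i\pm1}}$, hence $P<c_{v_{2i+1}}$ for all $i$.

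The endgame is then an extremal/inequality-chasing argument. Let $v_{2a}$ be the even vertex with $b_{v_{2a}}=P$ and $v_{2a'}$ the even vertex with $d_{v_{2a'}}$ minimum. For any odd neighbour $v_{2c+1}$ of $v_{2a}$, the edge cannot be realised by $S_{v_{2a}}\cap T_{v_{2c+1}}$ (that needs $c_{v_{2c+1}}\le b_{v_{2a}}=P$, contradicting $P<c_{v_{2c+1}}$), so it is realised by $S_{v_{2c+1}}\cap T_{v_{2a}}\neq\emptyset$, giving $a_{v_{2c+1}}\le d_{v_{2a}}$ and $c_{v_{2a}}\le b_{v_{2c+1}}$; dually, for any odd neighbour $v_{2c+1}$ of $v_{2a'}$ the edge cannot be realised by $S_{v_{2c+1}}\cap T_{v_{2a'}}$ (that needs $a_{v_{2c+1}}\le d_{v_{2a'}}\le d_{v_{2c}}<a_{v_{2c+1}}$), so it is realised by $S_{v_{2a'}}\cap T_{v_{2c+1}}$. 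Feeding these into $\prec$ --- using that each odd vertex's two cyclic neighbours precede it --- quickly gives the useful facts that $v_{2a}$ in fact has the \emph{maximum} $d$ among even vertices and $v_{2a'}$ the \emph{maximum} $b$; in particular $v_{2a}\neq v_{2a'}$. A contradiction now follows by chasing a cyclic chain of forced edge-realisations around the antihole. In the smallest case $\overline{C_6}$ (the triangular prism, where each even vertex has a unique odd neighbour) one may assume $v_{2a}=v_0$ and $v_{2a'}=v_2$; then the matching edge $v_4v_1$ is forced to be realised by $S_{v_1}\cap T_{v_4}$ (else $c_{v_1}\le b_{v_4}<b_{v_2}<c_{v_1}$), whence $a_{v_1}\le d_{v_4}<a_{v_3}\le d_{v_0}<a_{v_1}$ --- impossible. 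I expect the main obstacle to be the bookkeeping needed to assemble such a chain for general $\overline{C_{2j}}$, $j\ge 4$, where the two extremal even vertices need no longer be cyclically adjacent and the chain must be routed through more vertices. Finally, I would note that the hypothesis $T_v\subseteq S_v$ is genuinely used here: by Theorem~\ref{thm:K_33free} the underlying graph of a reflexive interval digraph is a $K_{3,3}$-free cocomparability graph, but that does not force weak chordality, as the triangular prism $\overline{C_6}$ --- itself a $K_{3,3}$-free cocomparability graph --- shows.
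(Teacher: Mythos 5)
First, a framing point: the paper does not actually prove this theorem --- it is imported from Prisner's paper as a citation --- so there is no in-paper argument to compare yours against, and I judge your attempt on its own terms. Your reduction is correct: weak chordality can only fail at an induced $C_k$ or $\overline{C_k}$, $k\ge 5$; cocomparability of the underlying graph (via Corollary~\ref{cor:DUFcc}) gives perfection and AT-freeness, which eliminate everything except the even antiholes $\overline{C_{2j}}$, $j\ge 3$. The translation of non-adjacency into the strict partial order $\prec$, the forced alternation of $\prec$ around the non-edge cycle of the antihole, the Helly point $P$, the analysis of the extremal even vertex, and the full $\overline{C_6}$ computation all check out.

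The gap is the one you flag yourself: for $\overline{C_{2j}}$ with $j\ge 4$ you assert that ``a cyclic chain of forced edge-realisations'' yields a contradiction but never exhibit it, so as written the theorem is only proved up to $\overline{C_6}$; an unexecuted chain is not verifiable. The good news is that the chain does close, and more simply than your $\overline{C_6}$ treatment suggests --- the second extremal vertex $v_{2a'}$ is not needed. Write $u_i=v_{2i}$ and $w_i=v_{2i+1}$ (indices mod $j$ resp.\ $2j$), so the only even non-neighbours of $w_i$ are $u_i,u_{i+1}$ and $u_i,u_{i+1}\prec w_i$. Every edge $u_kw_i$ with $k\notin\{i,i+1\}$ is realised either by $S_{u_k}\cap T_{w_i}\neq\emptyset$, which forces $b_{u_k}\ge c_{w_i}>\max(b_{u_i},b_{u_{i+1}})$, or by $S_{w_i}\cap T_{u_k}\neq\emptyset$, which forces $d_{u_k}\ge a_{w_i}>\max(d_{u_i},d_{u_{i+1}})$. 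Choosing $i_0$ with $b_{u_{i_0}}$ minimum, the first option is never available to $u_{i_0}$, so the second gives $d_{u_{i_0}}=\max_k d_{u_k}$ (this is exactly your observation that the min-$b$ even vertex has max $d$). But then for the odd vertex $w_{i_0}$ the second option is never available to any $u_k$ with $k\notin\{i_0,i_0+1\}$, whence $b_{u_k}>b_{u_{i_0+1}}$ for all such $k$; the same applied to $w_{i_0-1}$ gives $b_{u_k}>b_{u_{i_0-1}}$ for all $k\notin\{i_0-1,i_0\}$. Substituting $k=i_0-1$ into the first conclusion and $k=i_0+1$ into the second yields $b_{u_{i_0-1}}>b_{u_{i_0+1}}>b_{u_{i_0-1}}$, a contradiction, valid for every $j\ge 3$. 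With that paragraph in place of your case analysis, the proof is complete.
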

By Corollaries~\ref{cor:duf}, \ref{cor:DUFcc} and Theorem~\ref{thm:K_33free}, we can conclude that the underlying undirected graphs of reflexive interval digraphs are $K_{3,3}$-free cocomparability graphs. This strengthens Theorem~\ref{thm:prisnerweaklytriang}, since now we have that the underlying undirected graphs of interval nest digraphs are $K_{3,3}$-free weakly chordal cocomparability graphs. 
\section{Algorithms for reflexive interval digraphs} \label{sec:algorithms}
Here we explore the three different problems defined in Section~\ref{sec:intro} in the class of reflexive interval digraphs.

Let $G$ be a reflexive interval digraph. Note that any induced subdigraph of $G$ is also a reflexive interval digraph and that the ``reversal'' of $G$ --- the digraph obtained by replacing each edge $(u,v)$ of $G$ by $(v,u)$) --- is also a reflexive interval digraph. Since in any digraph, a set $S$ is an absorbing set (resp. kernel) if and only if it is a dominating set (resp. solution) in its reversal, this means that any algorithm that solves {\sc Absorbing-Set} (resp. {\sc Kernel}) problem for the class of reflexive interval digraphs can also be used to solve the {\sc Dominating-Set} (resp. {\sc Solution}) problem on an input reflexive interval digraph. Therefore, in the sequel, we only study the {\sc Absorbing-Set} and {\sc Kernel} problems on reflexive interval digraphs.
\subsection{Kernel}
We use the following result of Prisner that is implied by Theorem~4.2 of~\cite{prisner1994algorithms}.

\begin{theorem}[\cite{prisner1994algorithms}]\label{thm:prisner}
Let $\mathcal{C}$ be a class of digraphs that is closed under taking induced subgraphs. If in every graph $G\in\mathcal{C}$, there exists a vertex $z$ such that for every $y\in N^-(z)$, $N^+(z)\setminus N^-(z)\subseteq N^+(y)$, then the class $\mathcal{C}$ is kernel-perfect.
\end{theorem}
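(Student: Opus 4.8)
\medskip

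\noindent The plan is to prove the statement by induction on $|V(G)|$, using the hypothesized vertex $z$ as a ``simplicial-like'' vertex to peel off. Since $\mathcal{C}$ is closed under taking induced subgraphs, every induced subgraph of $G$ is itself in $\mathcal{C}$ and hence — by the inductive hypothesis — has a kernel; so it suffices to show that $G$ itself has a kernel (this is exactly what kernel-perfectness requires). The base case $|V(G)|\le 1$ is trivial. For the inductive step, fix the vertex $z$ guaranteed by the hypothesis, so that $N^+(z)\setminus N^-(z)\subseteq N^+(y)$ for every $y\in N^-(z)$.

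\medskip

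\noindent The main case analysis splits on whether $N^-(z)=\emptyset$. First suppose $N^-(z)\neq\emptyset$, and pick any $y_0\in N^-(z)$. Consider $G' = G - (N^+[z] \cup \{y_0\})$, or more carefully $G' = G - (\{z\}\cup (N^+(z)\setminus N^-(z)) \cup \{y_0\}\cup \dots)$ — the precise set to delete is the subtlety I will return to. The cleaner route: let $G' = G - (\{z\} \cup N^+(z))$ if $z\notin N^+(z)$, but note that loops may be present, so instead work with $G' = G - (\{z\}\cup (N^+(z)\setminus\{z\}))$. By induction $G'$ has a kernel $K'$. I then claim $K = K'$ (possibly with $z$ NOT added) already absorbs $z$: indeed, if $N^-(z)=\emptyset$ was the other case; here $N^-(z)\neq\emptyset$ but all of $N^-(z)$ may have been deleted. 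So the genuinely clean construction is: delete $N^+(z)\setminus N^-(z)$ together with $z$, obtaining $G'$; take a kernel $K'$ of $G'$; argue that either $K'$ already absorbs $z$ (because some in-neighbour of $z$ survives in $G'$ and got into $K'$ or is absorbed into $K'$ — here the hypothesis $N^+(z)\setminus N^-(z)\subseteq N^+(y)$ is used to guarantee that whatever absorbs that in-neighbour is not a deleted vertex, or to fix up the set), or else add $z$ to $K'$ and verify independence and absorption. When $N^-(z)=\emptyset$, $z$ can only be absorbed by itself, so $z$ must go into the kernel; delete $\{z\}\cup N^+(z)$, take a kernel $K'$ of the rest, and set $K=K'\cup\{z\}$ — independence holds since $z$ has no in-neighbours and all its out-neighbours were deleted, and absorption holds since every deleted out-neighbour $w\neq z$ has $z\in K$ with $(w,z)\in E(G)$.

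\medskip

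\noindent I expect the main obstacle to be getting the deletion set exactly right so that (a) the surviving graph is still used with the inductive hypothesis correctly, (b) adding back $z$ (when needed) keeps the set independent — this forces us to delete \emph{all} of $N^+(z)\cup N^-(z)\setminus\{z\}$, not just $N^+(z)$ — and (c) every deleted vertex is still absorbed by the final set. Point (c) is where the structural hypothesis does the real work: a deleted vertex $w\in N^+(z)\setminus N^-(z)$ is absorbed directly by $z$; a deleted vertex $y\in N^-(z)$ is absorbed by $z$ only if $(y,z)\in E(G)$, which holds by definition of $N^-(z)$ — wait, that is automatic, so the subtlety is really about vertices of $G'$'s kernel being non-adjacent to $z$. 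Concretely, after deleting $D = \{z\}\cup N^+(z)\cup N^-(z)$ and letting $K'$ be a kernel of $G-D$, I will set $K = K'\cup\{z\}$: independence is clear because no vertex of $K'$ is adjacent to $z$ (all of $z$'s neighbours are in $D$); absorption of vertices in $G-D$ is inherited from $K'$; absorption of each $w\in N^+(z)$ follows since $(w,z)\in E(G)$ and $z\in K$; and absorption of each $y\in N^-(z)\setminus N^+(z)$ — these are the vertices $y$ with $(y,z)\in E$ but $(z,y)\notin E$ — is the one remaining case, handled by observing that $y$ needs an out-neighbour in $K$; if $y$'s only out-neighbours lie in $D$, we use the hypothesis. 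The final write-up will pin down this last case carefully; everything else is routine verification of independence and absorption.
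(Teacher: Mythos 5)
Your high-level strategy (induction on $|V(G)|$, peeling off the special vertex $z$) is the right one, but the execution has a genuine gap caused by a reversal of the edge directions in the absorption check. Under the paper's conventions, $w\in N^+(z)$ means $(z,w)\in E(G)$, and $z$ absorbs $w$ only if $(w,z)\in E(G)$, i.e.\ only if $w\in N^-(z)$. So your claim that ``absorption of each $w\in N^+(z)$ follows since $(w,z)\in E(G)$'' is false precisely for the vertices $w\in N^+(z)\setminus N^-(z)$: with your deletion set $D=\{z\}\cup N^+(z)\cup N^-(z)$ and $K=K'\cup\{z\}$, these vertices are deleted, are not absorbed by $z$, and need not be absorbed by $K'$. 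Conversely, the case you flag as the hard one, $y\in N^-(z)\setminus N^+(z)$, is trivial: $(y,z)\in E(G)$ means $z$ is an out-neighbour of $y$, so $z\in K$ absorbs $y$ with no appeal to the hypothesis. As written, the structural hypothesis on $z$ is never actually used where it is needed, and the vertices it is designed to handle are exactly the ones your argument drops.

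The repair is to delete only $\{z\}\cup N^-(z)$, take a kernel $K'$ of the remaining induced subgraph (which exists by induction), and branch on whether $K'$ meets $N^+(z)$. If some $x\in K'\cap N^+(z)$ exists, then $x\in N^+(z)\setminus N^-(z)$ (since all of $N^-(z)$ was deleted), so $x$ absorbs $z$, and the hypothesis $N^+(z)\setminus N^-(z)\subseteq N^+(y)$ guarantees that this same $x$ absorbs every deleted $y\in N^-(z)$; take $K=K'$. If $K'\cap N^+(z)=\emptyset$, then $z$ is adjacent to no vertex of $K'$, so $K=K'\cup\{z\}$ is independent, and $z$ itself absorbs every $y\in N^-(z)$. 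This is exactly the decomposition the paper uses constructively in Lemma~\ref{lem:kerreflexive}: each step removes $\{z_i\}\cup N^-(z_i)$ and then decides whether to add $z_i$ back according to whether it has an out-neighbour already in the kernel.
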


\begin{lemma}\label{lem:z}
Let $G$ be a reflexive interval digraph $G$ with interval representation $\{(S_u,T_u)\}_{u\in V(G)}$. Let $z$ be the vertex such that $r(S_z)=\min\{r(S_v):v\in V(G)\}$. Then for every $y\in N^-(z)$, $N^+(z)\setminus N^-(z)\subseteq N^+(y)$.
\end{lemma}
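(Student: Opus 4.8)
The plan is to work directly with the interval representation and the defining intersection condition. Let $z$ be the vertex minimizing $r(S_z)$, fix some $y \in N^-(z)$, and take an arbitrary $w \in N^+(z) \setminus N^-(z)$; the goal is to show $w \in N^+(y)$, i.e. $S_y \cap T_w \neq \emptyset$. I would first unpack what the hypotheses say in terms of endpoints. From $y \in N^-(z)$ we get $S_y \cap T_z \neq \emptyset$. From $w \in N^+(z)$ we get $S_z \cap T_w \neq \emptyset$, and from $w \notin N^-(z)$ we get $S_w \cap T_z = \emptyset$; since the vertex $w$ has a self-loop, also $S_w \cap T_w \neq \emptyset$. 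The minimality of $r(S_z)$ gives $r(S_z) \le r(S_y)$ and $r(S_z) \le r(S_w)$.

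The key step is to pin down the relative position of $T_w$ and $S_z$ and then push $T_w$ into $S_y$. Since $S_z \cap T_w \neq \emptyset$ we have $l(T_w) \le r(S_z)$ and $l(S_z) \le r(T_w)$. I want to argue $l(T_w) \le r(S_z) \le r(S_y)$, so the only way $S_y$ could fail to meet $T_w$ is if $r(S_y) < l(T_w)$ — but we just bounded $l(T_w) \le r(S_z) \le r(S_y)$, so actually $r(S_y) \ge l(T_w)$. Hence $S_y \cap T_w = \emptyset$ would force $r(T_w) < l(S_y)$. The remaining work is to rule this out using the facts about $T_z$: we have $S_y \cap T_z \neq \emptyset$ (so $l(S_y) \le r(T_z)$ and $l(T_z) \le r(S_y)$) and $S_w \cap T_z = \emptyset$ while $S_w \cap T_w \neq \emptyset$. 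The condition $S_w \cap T_z = \emptyset$ means either $r(S_w) < l(T_z)$ or $r(T_z) < l(S_w)$; combined with $S_w \cap T_w \neq \emptyset$ (so $l(S_w) \le r(T_w)$ and $l(T_w) \le r(S_w)$) and the relation $l(T_w) \le r(S_z) \le r(S_w)$, I expect to derive $r(T_z) < l(S_w) \le r(T_w)$, which places $T_z$ entirely to the left of $r(T_w)$; pairing this with $l(S_y) \le r(T_z) < r(T_w)$ contradicts $r(T_w) < l(S_y)$. So $S_y \cap T_w \neq \emptyset$, i.e. $w \in N^+(y)$.

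I would organize the write-up as a short case analysis on which of the two disjointness alternatives holds for $S_w \cap T_z = \emptyset$, showing the alternative $r(S_w) < l(T_z)$ is incompatible with $l(T_w) \le r(S_z) \le r(S_w)$ together with $l(T_z) \le r(S_y)$ and the self-loop of $z$ (which gives $S_z \cap T_z \neq \emptyset$, hence $l(T_z) \le r(S_z) \le r(S_w)$, directly contradicting $r(S_w) < l(T_z)$). That disposes of one case cleanly, leaving $r(T_z) < l(S_w)$, from which the contradiction above follows. The main obstacle I anticipate is bookkeeping: there are several intervals ($S_y, T_z, S_z, T_w, S_w, T_w$) and many endpoint inequalities, and it is easy to chain them in the wrong direction, so I would be careful to consistently use the stated observation that $I \cap J \neq \emptyset \iff l(I) \le r(J) \wedge l(J) \le r(I)$ and to invoke each reflexivity (self-loop) hypothesis explicitly where needed. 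Once the endpoint relations are laid out, the deduction is a routine transitivity argument; combined with Theorem~\ref{thm:prisner}, the lemma immediately yields that reflexive interval digraphs are kernel-perfect.
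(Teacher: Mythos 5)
Your proof is correct and follows essentially the same route as the paper's: both use the minimality of $r(S_z)$ together with the self-loops at $z$ and $w$ to chain $l(S_y)\le r(T_z)< l(S_w)\le r(T_w)$ and $l(T_w)\le r(S_z)\le r(S_y)$, which between them kill both disjointness alternatives for $S_y\cap T_w$. The only difference is cosmetic — you dispose of the case $r(S_y)<l(T_w)$ at the outset whereas the paper rules it out last.
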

\begin{proof}
Let $x\in N^+(z)\setminus N^-(z)$ and $y\in N^-(z)$. We have to prove that $x\in N^+(y)$.
By the choice of $z$, we have that $r(S_x), r(S_y)> r(S_z)$. As $S_z\cap T_z\neq\emptyset$ (since $G$ is reflexive interval digraph), we have $l(T_z)<r(S_z)$. Combining with the previous inequality, we have $l(T_z)<r(S_x)$. As $x\notin N^-(z)$, it then follows that $l(S_x)>r(T_z)$. Since $y\in N^-(z)$, we have that $l(S_y)<r(T_z)$. We now have that $l(S_y)<l(S_x)$. As $l(S_x)<r(T_x)$ this further implies that $l(S_y)<r(T_x)$. Now if $x\notin N^+(y)$, it should be the case that $l(T_x)>r(S_y)>r(S_z)$ which is a contradiction to the fact that $x\in N^+(z)$.
\end{proof}

Since reflexive interval digraphs are closed under taking induced subgraphs, by Theorem~\ref{thm:prisner} and Lemma~\ref{lem:z}, we have the following.

\begin{theorem}\label{thm:kernelperfect}
Reflexive interval digraphs are kernel-perfect.
\end{theorem}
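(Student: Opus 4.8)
The plan is to combine the two results that immediately precede the theorem: Prisner's sufficient condition for kernel-perfectness (Theorem~\ref{thm:prisner}) and the structural fact established in Lemma~\ref{lem:z}. So the proof is essentially a two-line verification that the hypotheses of Theorem~\ref{thm:prisner} are met by the class $\mathcal{C}$ of reflexive interval digraphs.

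First I would observe that $\mathcal{C}$ is closed under taking induced subgraphs: if $\{(S_u,T_u)\}_{u\in V(G)}$ is an interval representation of a reflexive interval digraph $G$, then restricting this collection to any $W\subseteq V(G)$ yields an interval representation of $G[W]$ in which every vertex still has a loop, so $G[W]\in\mathcal{C}$. This is exactly the hypothesis ``$\mathcal{C}$ closed under induced subgraphs'' that Theorem~\ref{thm:prisner} requires, and it was already noted in Section~\ref{sec:algorithms}.

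Next I would fix an arbitrary $G\in\mathcal{C}$ together with an interval representation $\{(S_u,T_u)\}_{u\in V(G)}$, and let $z$ be the vertex minimizing $r(S_z)$ over all $v\in V(G)$ (well-defined since $V(G)$ is finite and nonempty). By Lemma~\ref{lem:z}, this $z$ satisfies: for every $y\in N^-(z)$, $N^+(z)\setminus N^-(z)\subseteq N^+(y)$. Thus every graph in $\mathcal{C}$ contains a vertex of the type demanded by the hypothesis of Theorem~\ref{thm:prisner}. Applying Theorem~\ref{thm:prisner} to $\mathcal{C}$ then gives that $\mathcal{C}$ is kernel-perfect, which is the statement of the theorem.

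There is essentially no obstacle here — all the work has been front-loaded into Lemma~\ref{lem:z} (the genuine geometric argument about the interval endpoints) and into Prisner's Theorem~\ref{thm:prisner} (the inductive construction of a kernel from such a vertex $z$, which we are permitted to cite). The only thing to be careful about is to state explicitly that the closure under induced subgraphs holds for the \emph{whole} class $\mathcal{C}$ and that Lemma~\ref{lem:z} applies to an \emph{arbitrary} member of $\mathcal{C}$, since Theorem~\ref{thm:prisner} needs the vertex $z$ to exist in every graph of the class, not just in $G$ itself; but this is automatic because Lemma~\ref{lem:z} is stated for an arbitrary reflexive interval digraph.
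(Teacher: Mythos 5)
Your proof is correct and is exactly the paper's argument: the paper derives Theorem~\ref{thm:kernelperfect} by noting that reflexive interval digraphs are closed under taking induced subgraphs and then applying Theorem~\ref{thm:prisner} together with Lemma~\ref{lem:z}. No gaps.
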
  

It follows from the above theorem that the decision problem {\sc Kernel} is trivial on reflexive interval digraphs. As explained below, we can also compute a kernel in a reflexive interval digraph efficiently, if an interval representation of the digraph is known.

Let $G$ be a reflexive interval digraph with an interval representation $\{(S_u,T_u)\}_{u\in V(G)}$. Let $G_0=G$ and $z_0$ be the vertex in $G$ such that $r(S_{z_0})=\min\{r(S_v):v\in V(G)\}$. For $i\geq 1$, recursively define $G_i$ to be the induced subdigraph of $G$ with $V(G_i)=V(G_{i-1})\setminus (\{z_{i-1}\}\cup N^-(z_{i-1}))$ and if $V(G_i)\neq \emptyset$, define $z_i$ to be the vertex such that $r(S_{z_i})=\min\{r(S_v):v\in V(G_i)\}$. Let $t$ be smallest integer such that $V(G_{t+1})=\emptyset$. Note that this implies that $V(G_t)=\{z_t\}\cup N^-_{G_t}(z_t)$. Clearly $t\leq n$ and $r(S_{z_0})<r(S_{z_1})<\cdots<r(S_{z_t})$. By Lemma~\ref{lem:z}, we have that for each $i\in \{1,2,\ldots,t\}$, $z_i$ has the following property: for any $y\in N_{G_i}^-(z_i)$ we have $N_{G_i}^+(z_i)\setminus N_{G_i}^-(z_i)\subseteq N_{G_i}^+(y)$.
\medskip

We now recursively define a set $K_i\subseteq V(G_i)$ as follows: Define $K_t=\{z_t\}$. For each $i\in \{t-1,t-2,\ldots,0\}$,
$$ K_i = \begin{cases}
\{z_i\}\cup K_{i+1} & \text{ if } (z_i,z_j)\notin E(G), \text{ where } j=\min\{l:z_l\in K_{i+1}\} \\
K_{i+1} & \text{ otherwise.}
\end{cases}$$
\begin{lemma}\label{lem:kerreflexive}
For each $i\in \{1,2,\ldots,t\}$, $K_i$ is a kernel of $G_i$.
\end{lemma}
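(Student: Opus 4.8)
The plan is to prove the statement by reverse induction on $i$, going from $i=t$ down to $i=1$. The base case $i=t$ is immediate: $K_t=\{z_t\}$ is clearly independent, and since $V(G_t)=\{z_t\}\cup N^-_{G_t}(z_t)$ by the choice of $t$, every vertex of $G_t$ other than $z_t$ is an in-neighbour of $z_t$, so $K_t$ is absorbing in $G_t$; hence it is a kernel of $G_t$. For the inductive step, I would assume $K_{i+1}$ is a kernel of $G_{i+1}$ and prove $K_i$ is a kernel of $G_i$, splitting into the two cases in the definition of $K_i$.

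For the \textbf{absorbing} property, first note that $V(G_i)=\{z_i\}\cup N^-_{G_i}(z_i)\cup V(G_{i+1})$ (by definition $V(G_{i+1})=V(G_i)\setminus(\{z_i\}\cup N^-_{G_i}(z_i))$). Vertices in $\{z_i\}\cup N^-_{G_i}(z_i)$ are absorbed by $z_i$ if $z_i\in K_i$; in the case where $z_i\notin K_i$ (i.e. $(z_i,z_j)\in E(G)$ with $j=\min\{l:z_l\in K_{i+1}\}$), the vertex $z_i$ is absorbed by $z_j\in K_i$, and I need to argue its in-neighbours in $G_i$ are too. This is where Lemma~\ref{lem:z} (in its recursive form, already recorded in the text for each $G_i$) comes in: for $y\in N^-_{G_i}(z_i)$, we have $N^+_{G_i}(z_i)\setminus N^-_{G_i}(z_i)\subseteq N^+_{G_i}(y)$; since $z_j\in N^+_{G_i}(z_i)$ and I must check $z_j\notin N^-_{G_i}(z_i)$ — if $z_j$ were an in-neighbour of $z_i$ in $G_i$, then $z_j$ would have been deleted when forming $G_{i+1}$ and could not lie in $K_{i+1}\subseteq V(G_{i+1})$, a contradiction — it follows that $z_j\in N^+_{G_i}(y)$, so every $y\in N^-_{G_i}(z_i)$ is absorbed by $z_j$. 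Finally, vertices in $V(G_{i+1})$ are absorbed within $K_{i+1}\subseteq K_i$ by the induction hypothesis, and absorbing edges in $G_{i+1}$ remain present in $G_i$ since $G_{i+1}$ is an induced subdigraph of $G_i$.

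For \textbf{independence}: $K_{i+1}$ is independent in $G_{i+1}$ by induction, hence in $G_i$. In the case $z_i\in K_i$, I must show $z_i$ is non-adjacent to every vertex of $K_{i+1}$. By definition of this case, $(z_i,z_j)\notin E(G)$ where $z_j$ is the earliest-deleted-index vertex of $K_{i+1}$; I then need $(z_j,z_i)\notin E(G)$ and also non-adjacency to the other vertices $z_l\in K_{i+1}$. Non-adjacency of the form $(z_l,z_i)$ for any $z_l\in V(G_{i+1})$ holds because $z_l$ survives into $G_{i+1}$, so $z_l\notin N^-_{G_i}(z_i)$, i.e. $(z_l,z_i)\notin E(G)$. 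For the forward edges $(z_i,z_l)$ with $z_l\in K_{i+1}$, I expect the key point is that $l\ge j$ and that $K_{i+1}$ being a kernel forces $z_j$ and $z_l$ to be non-adjacent when $l\ne j$; combined with Lemma~\ref{lem:z}-type reasoning (if $(z_i,z_l)\in E(G)$ with $z_l\notin N^-_{G_i}(z_i)$ then $z_l\in N^+_{G_i}(z_j')$... ) one derives that $(z_i,z_l)\in E(G)$ with $l>j$ would force an edge between $z_j$ and $z_l$, contradicting independence of $K_{i+1}$ — and the case $l=j$ is ruled out by the hypothesis of this branch. I would make this precise using that $z_i$ has the property $N^+_{G_i}(z_i)\setminus N^-_{G_i}(z_i)\subseteq N^+_{G_i}(y)$ for all $y\in N^-_{G_i}(z_i)$, but here $z_j$ itself plays the role controlling the out-neighbourhood.

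The \textbf{main obstacle} I anticipate is exactly the independence check in the branch $z_i\in K_i$: ruling out edges $(z_i,z_l)$ for $z_l\in K_{i+1}$ with $l>j$. The definition of $K_i$ only tests adjacency of $z_i$ with the single vertex $z_j$ of smallest deletion-index in $K_{i+1}$, so one needs a structural reason why non-adjacency to $z_j$ propagates to non-adjacency with all of $K_{i+1}$. I expect this to follow from a monotonicity property among the $z_l$'s — intuitively, because $r(S_{z_0})<r(S_{z_1})<\cdots<r(S_{z_t})$ and the kernel is built greedily, the out-neighbourhoods (restricted appropriately) of the $z_l\in K_{i+1}$ are nested, so that $(z_i,z_l)\in E(G)$ for some $l>j$ in $K_{i+1}$ would pull in an edge either to $z_j$ (contradiction with the branch hypothesis) or between two elements of $K_{i+1}$ (contradiction with induction). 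Pinning down this nesting claim cleanly — likely via Observation~\ref{subset} or a direct interval-endpoint argument using $l(T_{z_i}) < r(S_{z_i})$ and the ordering of the $r(S_{z_l})$ — is the crux of the proof.
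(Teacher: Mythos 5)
Your overall strategy is exactly the paper's: reverse induction on $i$, the same two-case split following the definition of $K_i$, and the same use of the recursive form of Lemma~\ref{lem:z} to conclude $N^-_{G_i}(z_i)\subseteq N^-_{G_i}(z_j)$ in the branch $K_i=K_{i+1}$ (including the correct observation that $z_j\notin N^-_{G_i}(z_i)$ because $z_j$ survives into $G_{i+1}$). The base case and both absorption arguments are complete and correct, as is the ruling out of backward edges $(z_l,z_i)$.

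There is, however, a genuine gap at precisely the point you flag as the ``crux'': ruling out $(z_i,z_l)\in E(G)$ for $z_l\in K_{i+1}$ with $l>j$. You state that you \emph{expect} a nesting or monotonicity property among the $z_l$'s to produce a contradiction, but you never establish one, and no such statement follows from Lemma~\ref{lem:z} or from the branch hypothesis alone; the paper has to go back to the interval representation. Its argument runs: if $(z_i,z_l)\in E(G)$, then $S_{z_i}\cap T_{z_l}\neq\emptyset$ together with $r(S_{z_i})<r(S_{z_j})<r(S_{z_l})$ forces $l(T_{z_l})<r(S_{z_i})<r(S_{z_j})$. Next, $l(S_{z_j})<l(S_{z_l})$, since otherwise $S_{z_j}\subseteq S_{z_l}$ and reflexivity of $z_j$ (i.e.\ $S_{z_j}\cap T_{z_j}\neq\emptyset$) would give $S_{z_l}\cap T_{z_j}\neq\emptyset$, so $(z_l,z_j)\in E(G)$, contradicting the independence of $K_{i+1}$ supplied by the induction hypothesis. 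Finally, reflexivity of $z_l$ gives $r(T_{z_l})>l(S_{z_l})>l(S_{z_j})$, which combined with $l(T_{z_l})<r(S_{z_j})$ yields $S_{z_j}\cap T_{z_l}\neq\emptyset$, i.e.\ $(z_j,z_l)\in E(G)$ --- again contradicting independence of $K_{i+1}$. Note that reflexivity is used twice, so this is not a purely combinatorial consequence of the greedy order $r(S_{z_0})<\cdots<r(S_{z_t})$, and Observation~\ref{subset} (which concerns the vertex ordering in the ordering characterization, not the sequence of $z_l$'s) does not apply directly. Until this step is written out, the independence of $K_i$ in the branch $z_i\in K_i$ remains unproven.
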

\begin{proof}
We prove this by reverse induction on $i$. The base case where $K_t=\{z_t\}$ is trivial since $V(G_t)= \{z_t\}\cup N^-_{G_t}(z_t)$. Assume that the hypothesis is true for all $j$ such that $j>i$. If $K_i=K_{i+1}$ then it implies that there exists $z_j\in K_{i+1}$ such that $z_j\in N^+(z_i)$. Further as $z_j\in V(G_{i+1})=V(G_i)\setminus (\{z_i\}\cup N_{G_i}^-(z_i))$, we have that $z_j\in N_{G_i}^+(z_i)\setminus N_{G_i}^-(z_i)$. Let $y\in  N_{G_i}^-(z_i)$.
Since $N_{G_i}^+(z_i)\setminus N_{G_i}^-(z_i)\subseteq N_{G_i}^+(y)$, we then have that $y\in N_{G_i}^-(z_j)$. Thus $N_{G_i}^-(z_i)\subseteq N_{G_i}^-(z_j)$. As $z_i\in N^-(z_j)$, it follows that every vertex in $V(G_i)\setminus V(G_{i+1})=\{z_i\}\cup N^-_{G_i}(z_i)$ is an in-neighbor of $z_j$. We can now use the induction hypothesis to conclude that $K_i=K_{i+1}$ is a kernel of $G_i$. On the other hand, if $K_i=\{z_i\}\cup K_{i+1}$, then it should be the case that $(z_i,z_j)\notin E(G)$ where $j=\min\{l:z_l\in K_{i+1}\}$.
Now consider any $z_l\in K_{i+1}$ where $z_l\neq z_j$. By definition of $j$, we have $l>j$. If $(z_i,z_l)\in E(G)$, then as $r(S_{z_i})<r(S_{z_j})<r(S_{z_l})$, it should be the case that $l(T_{z_l})<r(S_{z_i})<r(S_{z_j})<r(S_{z_l})$.
We also have $l(S_{z_j})<l(S_{z_l})$ as otherwise $S_{z_j}\subseteq S_{z_l}$, implying that $S_{z_l}\cap T_{z_j}\neq\emptyset$, contradicting the fact that $(z_l,z_j)\notin E(G)$ (as $z_l$ and $z_j$ both belong to $K_{i+1}$, which by the induction hypothesis is a kernel of $G_{i+1}$). Since $r(T_{z_l})>l(S_{z_l})>l(S_{z_j})$ and $r(S_{z_j})>l(T_{z_l})$, we now have that $S_{z_j}\cap T_{z_l}\neq\emptyset$, which is a contradiction to the fact that  $(z_j,z_l)\notin E(G)$ (as $z_j,z_l\in K_{i+1}$, which by the induction hypothesis is a kernel of $G_{i+1}$). Thus no vertex in $K_{i+1}$ can be an out-neighbor of $z_i$. By definition of $G_{i+1}$, no vertex in $G_{i+1}$, and hence no vertex in $K_{i+1}$, can be an in-neighbor of $z_i$. Then we have by the induction hypothesis that $K_i=\{z_i\}\cup K_{i+1}$ is an independent set. Since the only vertices in $V(G_i)\setminus V(G_{i+1})$ are $\{z_i\}\cup N^-_{G_i}(z_i)$, and $K_{i+1}$ is an absorbing set of $G_{i+1}$ by the induction hypothesis, we can conclude that $K_i=\{z_i\}\cup K_{i+1}$ is an absorbing set of $G_i$. Therefore $K_i$ is a kernel of $G_i$.
\end{proof}
By the above lemma, we have that $K_0$ is a kernel of $G$. We can now construct an algorithm that computes a kernel in a reflexive interval digraph $G$, given an interval representation of it. We assume that the interval representation of $G$ is given in the form of a list of left and right endpoints of intervals corresponding to the vertices. We can process this list from left to right in a single pass to compute the list of vertices $z_0,z_1,\ldots,z_t$ in $O(n+m)$ time. We then process this new list from right to left in a single pass to generate a set $K$ as follows: initialize $K=\{z_t\}$ and for each $i\in\{t-1,t-2,\ldots,0\}$, add $z_i$ to $K$ if it is not an in-neighbor of the last vertex that was added to $K$. Clearly, the set $K$ can be generated in $O(n+m)$ time. It is easy to see that $K=K_0$ and therefore by Lemma~\ref{lem:kerreflexive}, $K$ is a kernel of $G$. Thus, we have the following theorem.

\begin{theorem}\label{thm:algkerreflexive}
A kernel of a reflexive interval digraph can be computed in linear-time, given an interval representation of the digraph as input.
\end{theorem}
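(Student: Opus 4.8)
The plan is to formalise the two-pass algorithm sketched above. Correctness of its \emph{output} is already available: by Lemma~\ref{lem:kerreflexive} the set $K_0$ is a kernel of $G_0=G$, so it suffices to produce $K_0$ in linear time from the (sorted) list of interval endpoints. In a single preprocessing scan of that list I would read off the four coordinate arrays $l(S_v),r(S_v),l(T_v),r(T_v)$ and the order of the vertices by increasing $r(S_v)$, all in $O(n)$ time.

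The first main pass computes $z_0,z_1,\ldots,z_t$. The key structural fact, and the reason a single sweep suffices, is a coordinate description of the deleted sets. Because $G$ is reflexive, $z_i$ carries a self-loop, so $l(T_{z_i})<r(S_{z_i})$, and every $v\in V(G_i)$ with $v\neq z_i$ has $r(S_v)>r(S_{z_i})>l(T_{z_i})$ by the minimality defining $z_i$; hence $(v,z_i)\in E(G)\iff l(S_v)<r(T_{z_i})$. Together with $l(S_{z_i})<r(T_{z_i})$ this gives $\{z_i\}\cup N^-_{G_i}(z_i)=\{v\in V(G_i):l(S_v)<r(T_{z_i})\}$, and by induction $V(G_{i+1})=\{v\in V(G):l(S_v)\ge\mu_i\}$ with $\mu_i=\max_{j\le i}r(T_{z_j})$. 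Therefore, scanning the vertices in increasing $r(S_v)$-order while maintaining $\mu=\mu_{i-1}$, the next vertex $v$ with $l(S_v)\ge\mu$ is exactly the next $z_i$: select it, set $\mu\leftarrow\max(\mu,r(T_v))$, and skip every other vertex. An induction on the scan position (using that $r(S_{z_0})<r(S_{z_1})<\cdots$, noted in the excerpt, to match each skipped vertex with the unique step that deletes it) shows this yields precisely $z_0,\ldots,z_t$ in $O(n)$ time.

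The second pass computes $K_0$ directly from its recursive definition by scanning $z_t,z_{t-1},\ldots,z_0$: initialise $K=\{z_t\}$ and $\ell=z_t$, and for $i=t-1,\ldots,0$ add $z_i$ to $K$ and reset $\ell\leftarrow z_i$ if $(z_i,z_\ell)\notin E(G)$, otherwise do nothing. Since the $z_i$'s are listed by increasing $r(S)$-value, each newly added vertex has strictly smaller index than all previous ones, so $\ell$ always equals $\min\{l:z_l\in K_{i+1}\}$; the pass therefore realises exactly the recursion for $K_i$, and by Lemma~\ref{lem:kerreflexive} its final output $K=K_0$ is a kernel of $G$. Each test $(z_i,z_\ell)\in E(G)$ is the $O(1)$ check $\lnot\bigl(r(S_{z_i})<l(T_{z_\ell})\ \vee\ r(T_{z_\ell})<l(S_{z_i})\bigr)$ on the stored coordinates, so this pass runs in $O(t)=O(n)$ and the whole algorithm is linear.

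The step I expect to demand the most care is the first pass: the naive implementation that, at each step, recomputes the minimum-$r(S)$ surviving vertex and deletes its in-neighbourhood costs $\Theta(n)$ per step, hence $\Theta(n^2)$ overall. Proving the coordinate identity $V(G_{i+1})=\{v:l(S_v)\ge\mu_i\}$ --- which genuinely uses reflexivity (the self-loop on $z_i$) together with the defining minimality of $z_i$ --- is what collapses all the deletions into maintaining a single running maximum, and the bookkeeping in the accompanying induction is the only genuinely fiddly point; the second pass and the $O(1)$ edge tests are then routine.
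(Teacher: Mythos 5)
Your proposal is correct and follows essentially the same route as the paper: the same two passes (a left-to-right sweep producing $z_0,\ldots,z_t$, then a right-to-left sweep realising the recursion for $K_i$), with correctness delegated to Lemma~\ref{lem:kerreflexive}. The coordinate identity $V(G_{i+1})=\{v: l(S_v)\ge \mu_i\}$ that you prove is exactly the detail the paper leaves implicit in its claim that the first pass runs in linear time, and your verification of it (using the self-loop on $z_i$ and the minimality of $r(S_{z_i})$) is sound.
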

 
The linear-time algorithm described above is an improvement and generalization of the Prisner's result that, interval nest digraphs and their reversals are kernel-perfect, and a kernel can be found in these graphs in time $O(n^2)$ if a representation of the graph is given~\cite{prisner1994algorithms}. 

Now it is interesting to note that even for some kernel perfect digraphs with a polynomial-time computable kernel, the problems {\sc Min-Kernel} and {\sc Max-Kernel} turn out to be NP-complete. The following remark provides an example of such a class of digraphs. 
\begin{remark}\label{rem:minkernel}
Let $\mathcal{C}$ be the class of symmetric digraphs of undirected graphs. Note that the class $\mathcal{C}$ is kernel-perfect, as for any $G\in \mathcal{C}$ the kernels of  the digraph $G$ are exactly the independent dominating sets of its underlying undirected graph. Note that any maximal independent set of an undirected graph is also an independent dominating set of it. Therefore, as a maximal independent set of any undirected graph can be found in linear-time, the problem {\sc Kernel} is linear-time solvable for the class $\mathcal{C}$. On the other hand, note that the problems {\sc Min-Kernel} and {\sc Max-Kernel} for the class $\mathcal{C}$ is equivalent to the problems of finding a minimum cardinality independent dominating set and a maximum cardinality independent set for the class of undirected graphs, respectively. Since the latter problems are NP-complete for the class of undirected graphs, we have that the problems {\sc Min-Kernel} and {\sc Max-Kernel} are NP-complete in $\mathcal{C}$.
\end{remark} 

Note that unlike the class of reflexive interval digraphs, the class of DUF-digraphs are not kernel perfect. Figure~\ref{fig:nokernel} provides an example for a DUF-digraph that has no kernel. Since that graph is a semi-complete digraph (i.e. each pair of vertices is adjacent), and every vertex has an out-neighbor which is not its in-neighbor, it cannot have a kernel. The ordering of the vertices of the graph that is shown in the figure can easily be verified to be a DUF-ordering.

\begin{figure}
\newcommand{\myptr}{{\arrow{stealth}}}
\renewcommand{\vertexset}{(a,0,0),(b,1,0),(c,2,0),(d,3,0)}
\renewcommand{\edgeset}{(a,b,,,10),(b,a,,,10)(c,a,,,22)(a,d,,,30),(b,c,,,10),(d,b,,,22),(c,d,,,10),(d,c,,,10)}
\renewcommand{\defradius}{.1}
\renewcommand{\isdirected}{myptr}
\begin{center}
\begin{tikzpicture}
\drawgraph
\end{tikzpicture}
\end{center}
\caption{Example of a DUF-digraph that has no kernel.}\label{fig:nokernel}
\end{figure}
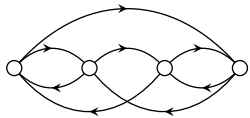

In contrast to Remark~\ref{rem:minkernel}, even though DUF-digraphs may not have kernels, we show in the next section that the problems {\sc kernel} and {\sc Min-Kernel} can be solved in polynomial time in the class of DUF-digraphs. In fact we give a polynomial time algorithm that, given a DUF-digraph $G$ with a DUF-ordering as input, either finds a minimum sized kernel in $G$ or correctly concludes that $G$ does not have a kernel.
\subsection{Minimum sized kernel}
Let $G$ be a DUF-digraph with vertex set $[1,n]$. We assume without loss of generality that $<\colon (1,2,\ldots,n)$ is a DUF-ordering of $G$. Let $i\in\{1,2,\ldots,n\}$. In this section, we shorten $N^+_{>i}(i)$ and $N^-_{>i}(i)$ to $N^+_>(i)$ and $N^-_>(i)$ respectively for ease of notation. We further define $N_>(i)=N^+_>(i)\cup N^-_>(i)$ and define $\overline{N^+_>(i)}$, $\overline{N^-_>(i)}$, $\overline{N_>(i)}$ to be $[i+1,n]\setminus N^+_>(i)$, $[i+1,n]\setminus N^-_>(i)$, $[i+1,n]\setminus N_>(i)$ respectively.

For any vertex $i\in \{1,2,\ldots,n\}$, let $P_i=\{j:j\in \overline{N_>(i)} \text{ such that } [i+1,j-1]\subseteq N^-(i)\cup N^-(j) \}$ and let $G[i,n]$ denote the subgraph induced in $G$ by the set $[i,n]$. Note that we consider $[i+1,j-1]=\emptyset$, if $j=i+1$. For a collection of sets $\mathcal{S}$, we denote by $\mathrm{Min}(\mathcal{S})$ an arbitrarily chosen set in $\mathcal{S}$ of the smallest cardinality. For each $i\in\{1,2,\ldots,n\}$, we define a set $K(i)$ as follows. Here, when we write $K(i)=\infty$, we mean that the set $K(i)$ is undefined.
$$K(i) =  \begin{cases} \{i\}, & \text{if } N^-_>(i) = \{i+1,\ldots,n\} \\
\{i\}\cup \mathrm{Min}\{K(j)\neq\infty:j\in P_i \} , & \text{if } P_i\neq \emptyset \text{ and }\exists j\in P_i \text{ such that } K(j)\neq \infty\\ 
\infty, & \text{otherwise} \end{cases}$$ 

Note that it follows from the above definition that $K(n)=\{n\}$.
For each $i\in \{1,2,\ldots,n\}$, let $OPT(i)$ denote a minimum sized kernel of $G[i,n]$ that also contains $i$. If $G[i,n]$ has no kernel that contains $i$, then we say that $OPT(i)=\infty$. We then have the following lemma.
\begin{lemma}\label{lem:kernel}
The following hold.
\begin{myenumerate}
\item \label{kernel} If $K(i)\neq \infty$, then $K(i)$ is a kernel of $G[i,n]$ that contains $i$, and
\item \label{kiy} if $OPT(i)\neq\infty$, then $K(i)\neq\infty$ and $|K(i)|=|OPT(i)|$.
\end{myenumerate}
\end{lemma}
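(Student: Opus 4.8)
The plan is to prove the two items of Lemma~\ref{lem:kernel} essentially simultaneously, by downward induction on $i$ from $n$ to $1$, since the recursive definition of $K(i)$ looks ``forward'' into $[i+1,n]$ and the quantity $OPT(i)$ refers to the graph $G[i,n]$. The base case is $i=n$: here $N^-_>(n)=\emptyset=\{n+1,\dots,n\}$, so $K(n)=\{n\}$ by the first case of the definition, and this is trivially a kernel of the one-vertex graph $G[n,n]$; also $OPT(n)=\{n\}$, so both parts hold. For the inductive step, fix $i<n$ and assume both parts of the lemma hold for all $j>i$.

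For part~\ref{kernel}, suppose $K(i)\neq\infty$. If $K(i)$ arose from the first case, then $N^-_>(i)=\{i+1,\dots,n\}$, so every vertex of $[i+1,n]$ is an in-neighbour of $i$; hence $\{i\}$ is absorbing in $G[i,n]$ and (being a single vertex) independent, so it is a kernel. Otherwise $K(i)=\{i\}\cup K(j)$ for some $j\in P_i$ with $K(j)\neq\infty$; by the induction hypothesis $K(j)$ is a kernel of $G[j,n]$ containing $j$. I must check three things: (a) $K(i)$ is independent; (b) $K(i)$ absorbs every vertex of $[i+1,j-1]$; and (c) $K(i)$ absorbs every vertex of $[j,n]$. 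For (c), note $K(j)\subseteq[j,n]$ is already absorbing in $G[j,n]$, and since $i<j$, adjacencies within $[j,n]$ are unaffected, so (c) holds. For (a), I need that $i$ is non-adjacent to every vertex of $K(j)$: since $j\in\overline{N_>(i)}$, $i$ and $j$ are non-adjacent; for any other $z\in K(j)$ with $z>j$, if $i$ were adjacent to $z$ then, using the DUF-ordering on the triple $i<j<z$, one of the arcs forced would have to involve $j$ --- here I expect to use the DUF property together with the fact that $[i+1,j-1]\subseteq N^-(i)\cup N^-(j)$ and (for the ``out'' direction) that $j\notin N_>(i)$ and that $j$ lies in a kernel (so $z\in K(j)$ non-adjacent to $j$) to derive a contradiction, mirroring the argument already used in Lemma~\ref{lem:kerreflexive}. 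For (b), take $w\in[i+1,j-1]$: by definition of $P_i$ we have $w\in N^-(i)\cup N^-(w)$... wait, the condition is $[i+1,j-1]\subseteq N^-(i)\cup N^-(j)$, so either $w\in N^-(i)$ (then $i\in K(i)$ absorbs $w$ via the arc $(w,i)$) or $w\in N^-(j)$ (then $j\in K(j)\subseteq K(i)$ absorbs $w$ via $(w,j)$); either way $w$ is absorbed.

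For part~\ref{kiy}, suppose $OPT(i)\neq\infty$, so $G[i,n]$ has a kernel $O$ containing $i$. Write $O=\{i\}\cup O'$ where $O'\subseteq[i+1,n]$. First I claim $O\cap[i+1,n]$ has a unique ``first'' element $j^\star=\min O'$ (if $O'=\emptyset$ then every vertex of $[i+1,n]$ must be an in-neighbour of $i$, putting us in the first case of the definition of $K(i)$ and giving $K(i)=\{i\}$, and $|OPT(i)|\ge 1=|K(i)|$, forcing equality). Assuming $O'\neq\emptyset$, set $j=\min O'$. I want to show $j\in P_i$ and that $O'$ is a kernel of $G[j,n]$ containing $j$, hence $OPT(j)\neq\infty$ and $|OPT(j)|\le|O'|=|O|-1$. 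The harder direction --- and what I expect to be the main obstacle --- is establishing $j\in P_i$: I must show $j\in\overline{N_>(i)}$ and $[i+1,j-1]\subseteq N^-(i)\cup N^-(j)$. The first holds because $i,j\in O$ and $O$ is independent. For the second, take any $w$ with $i<w<j$; since $w\notin O$ and $O$ is a kernel (absorbing) of $G[i,n]$, some vertex of $O$ absorbs $w$, i.e.\ $(w,x)\in E(G)$ for some $x\in O$; if $x=i$ we are done ($w\in N^-(i)$); otherwise $x>j$ (as $j=\min O'$ and $x\neq i$), and then on the triple $i<\{w,j\}$-ordering $\ldots$ --- more precisely, I apply the DUF property to the appropriately ordered triple among $i, w, j, x$ to push the absorbing arc back onto $j$, concluding $w\in N^-(j)$; this is the step that requires care, and I will need to split on whether $w<j<x$ (use the ``in'' DUF axiom on $w<j<x$ together with independence of $O$ to rule out $(x,j)$ and hence force $(x,w)$... and then relate back), being careful about arc directions. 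Once $j\in P_i$ is established and $O'$ is shown to be a kernel of $G[j,n]$ containing $j$, the induction hypothesis gives $K(j)\neq\infty$ with $|K(j)|=|OPT(j)|\le|O'|$; then $j$ witnesses that the second case of the definition of $K(i)$ applies, so $K(i)=\{i\}\cup\mathrm{Min}\{K(j')\neq\infty:j'\in P_i\}$ is defined and $|K(i)|=1+\min_{j'\in P_i}|K(j')|\le 1+|K(j)|\le 1+|O'|=|O|=|OPT(i)|$; combined with part~\ref{kernel} (which shows $K(i)$ is a kernel of $G[i,n]$ containing $i$, so $|K(i)|\ge|OPT(i)|$) we get $|K(i)|=|OPT(i)|$, completing the induction.
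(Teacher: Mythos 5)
Your overall strategy is exactly the paper's: downward induction on $i$, with part~\ref{kernel} proved by unfolding the two cases of the definition of $K(i)$, and part~\ref{kiy} proved by taking $j=\min(OPT(i)\setminus\{i\})$, showing $j\in P_i$, passing to $G[j,n]$, and closing the loop with part~\ref{kernel} to get the reverse inequality $|K(i)|\ge|OPT(i)|$. Part~\ref{kernel} of your sketch is essentially complete: the independence argument you gesture at is the one the paper uses (for $z\in K(j)$ with $z>j$ adjacent to $i$, the triple $i<j<z$ together with $j\notin N(i)$ and $z\notin N(j)$ violates whichever DUF axiom matches the direction of the $i$--$z$ arc).

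In part~\ref{kiy} there are two places where the work is left undone, one of which is a genuine missing idea. First, the DUF step establishing $[i+1,j-1]\subseteq N^-(i)\cup N^-(j)$: your stated plan (``use the `in' axiom on $w<j<x$ to rule out $(x,j)$ and force $(x,w)$'') is garbled; the arc you actually have is $(w,x)$ with $w<j<x$, so the \emph{out} axiom applies and forces $(w,j)\in E(G)$ or $(j,x)\in E(G)$, and independence of $OPT(i)$ (with $j,x\in OPT(i)$) kills the latter, giving $w\in N^-(j)$ directly. (Also note $x$ could equal $j$, which is the trivial ``done'' subcase you skip over.) This is fixable in one line. Second, and more seriously: you state that $O'=OPT(i)\setminus\{i\}$ must be shown to be a kernel of $G[j,n]$ but never address how. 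Independence is clear, but absorption is not automatic --- a vertex $z\in[j,n]$ might have been absorbed in $OPT(i)$ only by $i$, which has been deleted. The paper closes this with a separate DUF argument: if $z\in(N^-(i)\setminus N^-(j))\cap[j,n]$, then $i<j<z$ with $(z,i)\in E(G)$ and $(z,j),(j,i)\notin E(G)$ contradicts the ``in'' DUF axiom, so $N^-(i)\cap[j,n]\subseteq N^-(j)$ and $j$ picks up everything $i$ used to absorb. Without this observation the claim $OPT(j)\neq\infty$ is unjustified and the induction does not go through.
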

\begin{proof}
\ref{kernel} We prove this by the reverse induction on $i$. Suppose that $K(i)\neq \infty$. The base case where $i=n$ is trivially true. Assume that the hypothesis is true for every $j>i$. It is clear from the definition of $K(i)$ that $i\in K(i)$. If $K(i)= \{i\}$, then it should be the case that $N^-_>(i) = \{i+1,\ldots,n\}$, implying that the set $K(i)= \{i\}$, is both an independent set and an absorbing set in $G[i,n]$, and we are done. Otherwise, $K(i)=\{i\}\cup K(j)$ for some $j\in P_i$ such that $K(j)\neq \infty$. By the definition of $P_i$, we have that $j\in \overline{N_>(i)}$ and $[i+1,j-1]\subseteq N^-(i)\cup N^-(j)$. Since $j>i$, we have by the induction hypothesis that $K(j)$ is an independent and absorbing set in $G[j,n]$. Suppose that there exists $k\in K(j)$, such that $k\in N(i)$. Since $j\in \overline{N_>(i)}$ we have that $j\neq k$, which implies that $k>j$. We then have vertices $i<j<k$ such that $k\in N(i)$, $j\notin N(i)$ and $k\notin N(j)$, which is a contradiction to the fact that $<$ is a DUF-ordering. Therefore we can conclude that $K(i)=\{i\}\cup K(j)$ is an independent set in $G[i,n]$. Since $j\in P_i$, we have by the definition of $P_i$ that $[i+1,j-1]\subseteq N^-(i)\cup N^-(j)$. It then follows from the fact that $K(j)$ is an absorbing set of $G[j,n]$ containing $j$ that $K(i)=\{i\}\cup K(j)$ is an absorbing set of $G[i,n]$. Thus $K(i)$ is a kernel of $G[i,n]$ that contains $i$.\medskip

\noindent\ref{kiy}  Suppose that $OPT(i)\neq\infty$. The proof is again by reverse induction on $i$. The base case where $i=n$ is trivially true. Assume that the hypothesis is true for any $j>i$. If $|OPT(i)|=1$, then it should be the case that $OPT(i)=\{i\}$ and $j\in N^-(i)$ for each $j\in \{i+1,\ldots,n\}$, i.e. $N^-_>(i)=\{i+1,\ldots,n\}$. By the definition of $K(i)$, we then have $K(i)= \{i\}$, and we are done. Therefore we can assume that $|OPT(i)|>1$. Let $j=\min (OPT(i)\setminus\{i\})$. Clearly, $j>i$. As $OPT(i)$ is an independent set, we have that $j\in \overline{N_>(i)}$. We claim that $j\in P_i$. Suppose that there exists a vertex $y\in [i+1,j-1]$ such that $y\notin N^-(i)\cup N^-(j)$. Since $OPT(i)$ is an absorbing set in $G[i,n]$, there exists a vertex $k\in OPT(i)\setminus\{i,j\}$ such that $y\in N^-(k)$. By the choice of $j$ and the definition of $k$, we have that $j<k$ and $(j,k)\notin E(G)$. Then we have $y<j<k$, $(y,k)\in E(G)$, and $(y,j),(j,k)\notin E(G)$, which is a contradiction to the fact that $<$ is a DUF-ordering. Therefore we can conclude that $[i+1,j-1]\subseteq N^-(i)\cup N^-(j)$, which implies by the definition of $P_i$ that $j\in P_i$. This proves our claim. Note that if there exists a vertex $z\in (N^-(i)\setminus N^-(j))\cap [j,n]$, then we have vertices $i<j<z$ such that $(z,i)\in E(G)$ and $(z,j),(j,i)\notin E(G)$, which is a contradiction to the fact that $<$ is a DUF-ordering. Therefore we can assume that $N^-(i)\cap [j,n]\subseteq N^-(j)\cap [j,n]$. This implies that $OPT(i)\setminus\{i\}$ is a kernel of $G[j,n]$ that contains $j$. Thus $OPT(j)\neq\infty$, which implies by the induction hypothesis that $K(j)\neq\infty$ and $|K(j)|=|OPT(j)|\leq|OPT(i)\setminus \{i\}|$. Since $j\in P_i$ and $K(j)\neq\infty$, we have $K(i)\neq\infty$, and further we have $|K(i)|\leq |\{i\}\cup K(j)|\leq 1+|OPT(i)\setminus \{i\}| = |OPT(i)|$. By~\ref{kernel}, $K(i)$ is a kernel of $G[i,n]$ that contains $i$, and hence we have $|K(i)|=|OPT(i)|$.
\end{proof}
Suppose that $G$ has a kernel. Now let $OPT$ denote a minimum sized kernel in $G$. Let $\mathcal{K}=\{K(j)\neq \infty:[1,j-1]\subseteq N^-(j)\}$. Note that we consider $[1,j-1]=\emptyset$ if $j=1$. By Lemma~\ref{lem:kernel}\ref{kernel}, it follows that every member of $\mathcal{K}$ is a kernel of $G$. So if $G$ does not have a kernel, then $\mathcal{K}=\emptyset$. The following lemma shows that the converse is also true.
\begin{lemma}\label{lem:OPTkernel}
If $G$ has a kernel, then $\mathcal{K}\neq\emptyset$ and $|OPT|= \big|\mathrm{Min}(\mathcal{K})\big|$. 
\end{lemma}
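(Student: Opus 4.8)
The plan is to show both inequalities between $|OPT|$ and $|\mathrm{Min}(\mathcal{K})|$, after first establishing that $\mathcal{K}\neq\emptyset$ whenever $G$ has a kernel. For the nonemptiness and the bound $|OPT|\geq|\mathrm{Min}(\mathcal{K})|$, I would take $OPT$ to be a minimum sized kernel of $G$ and let $j=\min(OPT)$. Since $OPT$ is independent and every vertex in $[1,j-1]$ must be absorbed by $OPT$, and (arguing with the DUF-ordering exactly as in the proof of Lemma~\ref{lem:kernel}\ref{kiy}) no vertex to the left of $j$ can be absorbed by a vertex of $OPT$ lying to the right of $j$, I would conclude that $[1,j-1]\subseteq N^-(j)$; the same DUF-argument also gives that $N^-(i)\cap[j,n]$ behaves well enough that $OPT$ is in fact a kernel of $G[j,n]$ containing $j$. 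Hence $OPT(j)\neq\infty$, so by Lemma~\ref{lem:kernel}\ref{kiy} we get $K(j)\neq\infty$ and $|K(j)|=|OPT(j)|\leq|OPT|$. Since $[1,j-1]\subseteq N^-(j)$, the set $K(j)$ lies in $\mathcal{K}$, so $\mathcal{K}\neq\emptyset$ and $|\mathrm{Min}(\mathcal{K})|\leq|K(j)|\leq|OPT|$.

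For the reverse inequality $|OPT|\leq|\mathrm{Min}(\mathcal{K})|$, I would simply invoke Lemma~\ref{lem:kernel}\ref{kernel}: every $K(j)\in\mathcal{K}$ satisfies $K(j)\neq\infty$, so $K(j)$ is a kernel of $G[j,n]$ containing $j$, and the extra condition $[1,j-1]\subseteq N^-(j)$ defining membership in $\mathcal{K}$ ensures that every vertex of $[1,j-1]$ is an in-neighbour of $j\in K(j)$, so $K(j)$ is also absorbing in all of $G$; since $K(j)$ is still independent (no vertex of $[1,j-1]$ is added), $K(j)$ is a kernel of $G$. Therefore $|OPT|\leq|K(j)|$ for every such $j$, and in particular $|OPT|\leq|\mathrm{Min}(\mathcal{K})|$. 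Combining the two inequalities gives $|OPT|=|\mathrm{Min}(\mathcal{K})|$, and $\mathcal{K}\neq\emptyset$ was shown above.

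The only genuinely delicate point is the first direction: verifying that when $OPT$ is a minimum kernel of $G$ and $j=\min(OPT)$, then $OPT$ restricted appropriately is still a valid kernel of $G[j,n]$ that contains $j$ — i.e. that the in-neighbourhoods do not ``leak'' in a way that destroys absorption or independence inside $G[j,n]$. This is exactly the type of DUF-ordering manipulation already carried out in the proof of Lemma~\ref{lem:kernel}\ref{kiy} (the arguments ruling out a vertex $y\in[1,j-1]$ not absorbed by $i$ or $j$, and ruling out $z\in(N^-(i)\setminus N^-(j))\cap[j,n]$), so I expect it to go through essentially verbatim with $i$ replaced by the role of ``nothing to the left'' — that is, the absorption constraint is on all of $[1,j-1]$ rather than a partial interval, which only makes the argument cleaner. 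Everything else is bookkeeping with the definition of $\mathcal{K}$ and Lemma~\ref{lem:kernel}.
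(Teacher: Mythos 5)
Your proposal is correct and follows essentially the same route as the paper: let $j=\min(OPT)$, use the DUF-ordering together with independence of $OPT$ to show $[1,j-1]\subseteq N^-(j)$, note $OPT\subseteq[j,n]$ so that $OPT$ is a kernel of $G[j,n]$ containing $j$, and then apply Lemma~\ref{lem:kernel} in both directions. The only superfluous step is your worry about in-neighbourhoods ``leaking'' when passing to $G[j,n]$ — since $OPT\subseteq[j,n]$ already, it is automatically a kernel of the induced subgraph and no further DUF manipulation is needed there.
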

\begin{proof}
Suppose that $G$ has a kernel. Then clearly, $OPT$ exists.
Let $j=\min\{i: i\in OPT\}$. Then it should be the case that $[1,j-1]\subseteq N^-(j)$. As otherwise, there exist vertices $j'\in [1,j-1]$ and $k\in OPT$ such that $j'\in N^-(k)\setminus N^-(j)$. Since $OPT$ is an independent set, this implies  that we have vertices, $j'<j<k$ such that $(j',k)\in E(G)$ and $(j',j),(j,k)\notin E(G)$ which is a contradiction to the fact that $<$ is a DUF-ordering. Also by the choice of $j$, we have that $OPT\subseteq [j,n]$. Then as $OPT$ is a kernel of $G$, $OPT$ is a kernel of $G[j,n]$ that contains $j$. This implies that $OPT(j)\neq \infty$ and $|OPT(j)|\leq |OPT|$. Therefore by Lemma~\ref{lem:kernel}, we have that $K(j)\neq \infty$ and $|K(j)|=|OPT(j)|$. Thus $K(j)\in \mathcal{K}$, which implies that $\mathcal{K}\neq\emptyset$. Further, $\big|\mathrm{Min}(\mathcal{K})\big|\leq|K(j)|=|OPT(j)|\leq |OPT|$. Since every member of $\mathcal{K}$ is a kernel of $G$, it now follows that $\big|\mathrm{Min}(\mathcal{K})\big|=|OPT|$.
\end{proof}
We thus have the following theorem.
\begin{theorem}\label{thm:kernelduf}
The DUF-digraph $G$ has a kernel if and only if $K(j)\neq \infty$ for some $j$ such that $[1,j-1]\subseteq N^-(j)$. Further, if $G$ has a kernel, then the set $\{K(j)\neq\infty:[1,j-1]\subseteq N^-(j)\}$ contains a kernel of $G$ of minimum possible size.
\end{theorem}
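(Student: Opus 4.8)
The plan is to obtain Theorem~\ref{thm:kernelduf} essentially as a corollary of Lemmas~\ref{lem:kernel} and~\ref{lem:OPTkernel}, supplemented by one elementary observation relating kernels of the suffix $G[j,n]$ to kernels of $G$ itself.

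First I would dispatch the backward implication. Suppose $K(j)\neq\infty$ for some index $j$ with $[1,j-1]\subseteq N^-(j)$. By Lemma~\ref{lem:kernel}\ref{kernel}, $K(j)$ is a kernel of $G[j,n]$ and $j\in K(j)$. Since $K(j)\subseteq[j,n]$ and $G[j,n]$ is an induced subdigraph of $G$, the set $K(j)$ is still independent in $G$. For absorption in $G$: every vertex of $[j,n]\setminus K(j)$ already has an out-neighbour in $K(j)$ because $K(j)$ is absorbing in $G[j,n]$; and every vertex $v\in[1,j-1]$ satisfies $v\in N^-(j)$, i.e. $(v,j)\in E(G)$ with $j\in K(j)$, so $v$ is absorbed as well. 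Hence $K(j)$ is a kernel of $G$. In particular this shows that \emph{every} member of the set $\mathcal{K}=\{K(j)\neq\infty:[1,j-1]\subseteq N^-(j)\}$ is a kernel of $G$, so if $\mathcal{K}\neq\emptyset$ then $G$ has a kernel.

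For the forward implication together with the minimality statement, I would appeal to Lemma~\ref{lem:OPTkernel}: if $G$ has a kernel, and $OPT$ denotes a minimum-size one, then $\mathcal{K}\neq\emptyset$ and $|OPT|=\big|\mathrm{Min}(\mathcal{K})\big|$. Nonemptiness of $\mathcal{K}$ is exactly the existence of an index $j$ with $[1,j-1]\subseteq N^-(j)$ and $K(j)\neq\infty$, giving the forward direction. Combining $|OPT|=\big|\mathrm{Min}(\mathcal{K})\big|$ with the fact just established that $\mathrm{Min}(\mathcal{K})\in\mathcal{K}$ is itself a kernel of $G$, we conclude that $\mathcal{K}$ contains a kernel of $G$ of minimum possible size.

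I do not expect a genuinely difficult step to remain at this stage: the substantive content---the correctness of the recursive definition of $K(\cdot)$ and the exchange argument showing $K(i)$ matches the optimum $OPT(i)$ on every suffix---has already been carried out in Lemmas~\ref{lem:kernel} and~\ref{lem:OPTkernel}. The only point requiring attention is checking that the prefix $[1,j-1]$ is correctly absorbed by $K(j)$, which is precisely the reason the side condition $[1,j-1]\subseteq N^-(j)$ appears in the statement; this is the short verification carried out in the second paragraph above.
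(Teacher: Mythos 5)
Your proposal is correct and follows essentially the same route as the paper: the paper also derives the theorem directly from Lemma~\ref{lem:kernel}\ref{kernel} (observing that every member of $\mathcal{K}$ is a kernel of $G$, which is exactly your prefix-absorption check) together with Lemma~\ref{lem:OPTkernel}. No gaps.
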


Let $G$ be a DUF-digraph with vertex set $[1,n]$. For each $i\in [1,n]$, we can compute the set $P_i$ in $O(n+m)$ time as follows. We mark the in-neighbors of $i$ in $[i,n]$ and then scan the vertices from $i$ to $n$ in a single pass in order to collect the vertices which are not in-neighbors of $i$ in an ordered list $L$. Initialize $P_i=\emptyset$. We mark every out-neighbor of $i$ in $L$. Now for each unmarked vertex $j$ in $L$ (processed from left to right), we add $j$ to $P_i$ if and only if every vertex of $L$ before $j$ is an in-neighbor of $j$. Note that this computation of $P_i$ can be done in $O(n+m)$ time. This implies that we can precompute the set $\{P_i:i\in [1,n]\}$ in $O((n+m)n)$ time. Now since $|P_i|\leq n$, it is easy to see from the recursive definition for $K(i)$ that $\{K(i):i\in [1,n]\}$ can be computed in $O(n^2)$ time. For $j\in [1,n]$, we can check in $O(n+m)$ time whether $[1,j-1]\subseteq N^-(j)$. Thus in $O((n+m)n)$ time, we can compute the minimum sized set in $\{K(j)\neq\infty:[1,j-1]\subseteq N^-(j)\}$. 
Therefore by Theorem~\ref{thm:kernelduf}, we have the following corollary.  

\begin{corollary}\label{cor:algdufker}
The {\sc Min-Kernel} problem can be solved for DUF-digraphs in $O((n+m)n)$ time if the DUF-ordering is known. Consequently, for a reflexive interval digraph, the {\sc Min-Kernel} problem can be solved in $O((n+m)n)$ time if the interval representation is given as input.
\end{corollary}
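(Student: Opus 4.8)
The plan is to read the algorithm off Theorem~\ref{thm:kernelduf} and Lemma~\ref{lem:kernel}, bound its running time, and then reduce the reflexive interval digraph case to the DUF-digraph case. So let $G$ be a DUF-digraph on vertex set $[1,n]$ with the DUF-ordering $(1,2,\ldots,n)$ given. The algorithm is: (1) precompute the sets $P_i$ for all $i\in[1,n]$; (2) for $i=n,n-1,\ldots,1$, evaluate the recurrence for $K(i)$, but store only the cardinality $|K(i)|$ (or the flag $\infty$) together with a back-pointer to the vertex $j\in P_i$ realizing $\mathrm{Min}\{K(j)\neq\infty:j\in P_i\}$; (3) among all $j$ with $[1,j-1]\subseteq N^-(j)$ and $K(j)\neq\infty$, select one minimizing $|K(j)|$ --- by Theorem~\ref{thm:kernelduf} this equals $|OPT|$, and if there is no such $j$ then $G$ has no kernel; (4) reconstruct the set itself by chasing the back-pointers from the selected $j$. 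Correctness of steps (3)--(4) is exactly Theorem~\ref{thm:kernelduf} and Lemma~\ref{lem:kernel}\ref{kernel}.

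The next step is the running-time analysis, following the paragraph preceding the corollary. For a fixed $i$, we mark $N^-(i)\cap[i+1,n]$, form the increasing list $L$ of the remaining vertices of $[i+1,n]$, and mark the out-neighbours of $i$ in $L$; then we scan $[i+1,n]$ once, maintaining the number of elements of $L$ encountered so far, and for each unmarked $j\in L$ we decide whether $j\in P_i$ by counting, over $N^-(j)$, how many in-neighbours of $j$ lie in $L$ and before $j$. This costs $O(|N^-(j)|)$ per $j$, hence $O(m)$ over all $j$, and $O(n+m)$ in total for $P_i$; so $\{P_i:i\in[1,n]\}$ is computed in $O((n+m)n)$ time. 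Since $|P_i|\le n$, evaluating the recurrence for $K$ (on cardinalities and back-pointers) costs $O(n)$ per vertex, i.e. $O(n^2)$ overall; testing $[1,j-1]\subseteq N^-(j)$ for all $j$ and taking the minimum costs $O(n+m)$ after another linear-per-vertex pass; and the reconstruction is $O(n)$. Altogether the DUF-digraph case runs in $O((n+m)n)$ time.

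For the ``consequently'' clause, given an interval representation $\{(S_v,T_v)\}_{v\in V(G)}$ of a reflexive interval digraph $G$, I would order $V(G)$ by increasing value of $x_v:=l(S_v\cap T_v)=\max\big(l(S_v),l(T_v)\big)$, which is well defined since every $S_v\cap T_v$ is nonempty. By the forward direction of Theorem~\ref{thm:ordering}, this ordering avoids every structure of Figure~\ref{fig:forbidden}; in particular, specializing $b=c$ in (i) and in (iv) shows it satisfies both conditions in the definition of a DUF-ordering (cf. Corollary~\ref{cor:duf}). Computing the $x_v$'s and sorting takes $O(n\log n)$ time, which is dominated by $O((n+m)n)$, so feeding this ordering into the DUF-digraph algorithm solves {\sc Min-Kernel} within the claimed bound.

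I expect the only genuinely delicate point to be the $O(n+m)$-per-vertex bound for $P_i$: a naive implementation of the test ``every vertex of $L$ before $j$ is an in-neighbour of $j$'' is quadratic in $|L|$ and would yield only an $O(n^3)$ algorithm, so one must be careful to charge this work to $\sum_j|N^-(j)|=O(m)$ as above. Everything else is routine bookkeeping on top of Theorem~\ref{thm:kernelduf} and Lemma~\ref{lem:kernel}.
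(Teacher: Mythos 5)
Your proposal is correct and follows the paper's own argument: precompute the sets $P_i$ in $O(n+m)$ time each, evaluate the recurrence for $K(i)$ in $O(n)$ time per vertex, and select the minimum over $\{K(j)\neq\infty:[1,j-1]\subseteq N^-(j)\}$, with the reflexive case reduced to the DUF case via the ordering by increasing $l(S_v\cap T_v)$ from Theorem~\ref{thm:ordering}. The one place you go beyond the paper is in justifying the $O(n+m)$ bound for computing a single $P_i$ --- the paper merely asserts it, whereas your counting trick (charging the test for each $j$ to $|N^-(j)|$) supplies exactly the detail needed to avoid the naive $O(n^3)$ implementation.
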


Let $G$ be a cocomparability graph. Let $H$ be the symmetric digraph of $G$. Now it is easy to see that a set $K\subseteq V(H)=V(G)$, is a kernel of $H$ if and only if $K$ is an independent dominating set of $G$. Therefore a kernel of minimum possible size in $H$ will be a minimum independent dominating set in $G$. Note that a vertex ordering of a cocomparability graph that satisfies the properties in Theorem~\ref{thm:cocomp} can be found in linear time~\cite{mcconnell1999modular}. Let $<$ be such a vertex ordering of $G$. As noted before, $H$ is a DUF-digraph with DUF-ordering $<$. Thus an algorithm that computes a minimum sized kernel in $H$ also computes a minimum independent dominating set in $G$. From Corollary~\ref{cor:algdufker}, we now have the following.
\begin{corollary}
An independent dominating set of minimum possible size can be found in $O((n+m)n)$ time in cocomparability graphs.
\end{corollary}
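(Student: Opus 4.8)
The plan is to reduce the problem of computing a minimum independent dominating set in a cocomparability graph $G$ to the {\sc Min-Kernel} problem on a DUF-digraph, and then invoke Corollary~\ref{cor:algdufker}. First I would form the symmetric digraph $H$ of $G$, obtained by replacing each edge $uv$ of $G$ by the two arcs $(u,v)$ and $(v,u)$. The key elementary observation is that a set $K\subseteq V(H)=V(G)$ is a kernel of $H$ if and only if $K$ is an independent dominating set of $G$: independence of $K$ in $H$ (no arc between two members of $K$) is exactly independence in $G$, and $K$ being absorbing in $H$ (every vertex outside $K$ has an out-neighbour in $K$) is exactly $K$ being dominating in $G$, since all arcs of $H$ are symmetric. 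Consequently a minimum-size kernel of $H$ is precisely a minimum-size independent dominating set of $G$, and the latter always exists since $G$, being undirected, has a maximal independent set.

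Next I would argue that $H$ is a DUF-digraph together with an explicitly computable DUF-ordering. By Theorem~\ref{thm:cocomp}, since $G$ is a cocomparability graph there is an ordering $<$ of $V(G)$ such that for any $i<j<k$ with $ik\in E(G)$ we have $ij\in E(G)$ or $jk\in E(G)$; moreover such an ordering can be found in linear time by the algorithm of McConnell and Spinrad~\cite{mcconnell1999modular}. I would then check that this ordering $<$, viewed as an ordering of $V(H)$, is a DUF-ordering of $H$: for $i<j<k$, if $(i,k)\in E(H)$ then $ik\in E(G)$, so $ij\in E(G)$ or $jk\in E(G)$, hence $(i,j)\in E(H)$ or $(j,k)\in E(H)$, giving property~\ref{out}; property~\ref{in} follows symmetrically because $H$ is symmetric and the hypothesis of Theorem~\ref{thm:cocomp} is symmetric in the two endpoints of the edge $ik$ (indeed $ik\in E(G)$ also yields $ij\in E(G)$ or $jk\in E(G)$, which is exactly what is needed). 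This is essentially the remark made in the excerpt that the symmetric digraph of any cocomparability graph is a DUF-digraph.

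Finally I would assemble the running time. Building $H$ and obtaining the DUF-ordering $<$ take $O(n+m)$ time. By Corollary~\ref{cor:algdufker}, the {\sc Min-Kernel} problem on the DUF-digraph $H$ with the ordering $<$ can be solved in $O((n+m)n)$ time, producing a minimum-size kernel of $H$, which by the equivalence above is a minimum independent dominating set of $G$. Since $H$ has the same vertex set as $G$ and at most twice as many arcs (not counting self-loops, of which there are none here), the bound $O((n+m)n)$ is unchanged. I do not expect any serious obstacle: the only point requiring a small verification is that the single cocomparability ordering simultaneously witnesses both the out- and in-conditions of a DUF-ordering for the symmetric digraph, and this follows immediately from the symmetry of the condition in Theorem~\ref{thm:cocomp}.
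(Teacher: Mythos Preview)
Your proposal is correct and follows essentially the same approach as the paper: form the symmetric digraph $H$ of the cocomparability graph $G$, observe that kernels of $H$ coincide with independent dominating sets of $G$, use the linear-time computable cocomparability ordering (via Theorem~\ref{thm:cocomp} and~\cite{mcconnell1999modular}) as a DUF-ordering of $H$, and invoke Corollary~\ref{cor:algdufker}. The paper's argument is exactly this, stated more tersely.
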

The above corollary is an improvement over the result by Kratsch and Stewart~\cite{kratsch1993domination} that an
independent dominating set of minimum possible size problem can be computed in $O(n^3)$ time for cocomparability graphs.
\smallskip

We now show that a minimum sized kernel of an adjusted interval digraph, whose interval representation is known, can be computed more efficiently than in the case of DUF-digraphs.
Let $G$ be an adjusted interval digraph with an interval representation $\{(S_u,T_u)\}_{u\in V(G)}$. Note that by the definition of adjusted interval digraphs, we have that $l(S_u)=l(T_u)$ for each $u\in V(G)$. Let $<$ be an ordering of vertices in $G$ with respect to the common left end points of intervals corresponding to each vertex. Then $<$ has the following property: for any three distinct vertices $u<v<w$, if $(u,w)\in E(G)$ then $(u,v)\in E(G)$ and if $(w,u)\in E(G)$ then $(v,u)\in E(G)$. Then note that $<$ is also a DUF-ordering of $V(G)$. Further, for each vertex $v\in V(G)$, the vertices in $N^+_>(v)$ and $N^-_>(v)$ occur consecutively in $<$. This implies that for each vertex $v\in V(G)$ the vertices in $N_>(v)$ occur consecutively in $<$. Further we have
\begin{equation}\label{eq:adjprop}
\mbox{if }[x,y]\subseteq N^-(y)\mbox{ (resp. }N^+(y)\mbox{) then for any }z\in [x,y]\mbox{, we have }[x,z]\subseteq N^-(z)\mbox{ (resp. }N^+(z)\mbox{).}
\end{equation}

Let $V(G)=[1,n]$ and $<$ be the ordering $(1,2,\ldots,n)$.
We can compute the sets $\{\max N^+(i)\colon i\in [1,n]\}$ and $\{\max N^-(i)\colon i\in [1,n]\}$ in $O(n+m)$ time by just preprocessing the adjacency list of $G$. Since the vertices in $N^+_>(i)$ (resp. $N^-_>(i)$) occur consecutively in $<$, we can also compute the set $\{\min \overline{N_>(i)}=\max N(i)+1\colon i\in [1,n]\}$ in $O(n+m)$ time (if $\max N(i)=n$, then we set $\min \overline{N_>(i)}=n+1$). Let $i\in [1,n]$. We can construct $P_i$ as follows. We compute $x=\min\{\max N^+(j)\colon j\in [\min\overline{N^-_>(i)}=\max N^-(i)+1,n]\}$ in $O(n)$ time (note that if $\max N^-(i)=n$, then $\min\overline{N_>(i)}=n+1$, in which case we can just set $P_i=\emptyset$). We claim that $P_i=[\min \overline{N_>(i)},x]$. To see this, first note that for every vertex $i$, $N^+_>(i)=[i, \max N^+(i)]$. Therefore, since for any vertex $z\in [\min \overline{N_>^-(i)},x]$, we have $\max N^+(z)\geq x$, we can conclude that $(z,x)\in E(G)$. Thus $[\min \overline{N_>^-(i)},x]\subseteq N^-(x)$. Therefore by property~\eqref{eq:adjprop}, for each $z\in [\min \overline{N_>(i)},x]\subseteq [\min \overline{N_>^-(i)},x]$, we have that $[\min \overline{N_>^-(i)},z]\subseteq N^-(z)$. Since the in-neighbors of $i$ are consecutive in $<$, this means that $[i,z]\subseteq N^-(i)\cup N^-(z)$. Therefore we have that $[\min \overline{N_>(i)},x]\subseteq P_i$. Now consider any $z>x$. By the definition of $x$, there exists $j\in [\min \overline{N_>^-(i)},x]$ such that $\max N^+(j)=x$. Then as $x<z$, we have $(j,z)\notin E(G)$, which implies that $j\notin N^-(i)\cup N^-(z)$. Thus $z\notin P_i$. Therefore we can conclude that $P_i=[\min \overline{N_>(i)},x]$. Note that if $\min \overline{N_>(i)}>x$, then $P_i=\emptyset$. It is clear that the set $P_i$ can be computed in this way in $O(n)$ time for an $i\in [1,n]$. So the set $\{P_i:i\in [1,n]\}$ can be computed in $O(n^2)$ time. The sets $\{K(i)\colon i\in [1,n]\}$ can then be computed in $O(n^2)$ time as before. Now we compute $y=\min\{\max N^+(j)\colon j\in [1,n]\}$ in $O(n)$ time. Then $[1,y-1]\subseteq N^-(y)$. Therefore by property~\eqref{eq:adjprop}, for each $z\in [1,y]$ we have $[1,z-1]\subseteq N^-(z)$. Now consider any $z>y$. By the definition of $y$, there exists $j\in [1,y]$ such that $y=\max N^+(j)$. Then as $y<z$, we have that $(j,z)\notin E(G)$, which implies that $[1,z-1]\not\subseteq N^-(z)$. Therefore we can conclude that $[1,y]=\{j\colon [1,j-1]\subseteq N^-(j)\}$. By Theorem~\ref{thm:kernelduf}, we can just output in $O(n)$ time a set of minimum size in $\{K(i)\colon i\in [1,y]$ and $K(i)\neq\infty\}$ as a minimum sized kernel of $G$. Thus we have the following corollary.
\begin{corollary}
The {\sc Min-Kernel} problem can be solved in $O(n^2)$ time in adjusted interval digraphs, given an adjusted interval representation of the input graph.
\end{corollary}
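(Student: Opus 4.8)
The plan is to instantiate the general $O((n+m)n)$-time algorithm for {\sc Min-Kernel} on DUF-digraphs (Corollary~\ref{cor:algdufker}) and exploit the extra structure of an adjusted interval representation to run each of its phases in $O(n)$ time per vertex, giving $O(n^2)$ overall. First I would fix the ordering $<$ on $V(G)=[1,n]$ by the common left endpoint $l(S_u)=l(T_u)$ of the two intervals at each vertex. As already observed, this is a DUF-ordering, and moreover for every vertex $v$ the sets $N^+_>(v)$ and $N^-_>(v)$ are contiguous blocks of the form $[v+1,\max N^+(v)]$ and $[v+1,\max N^-(v)]$, so $N_>(v)=[v+1,\max N(v)]$ and $\overline{N_>(v)}=[\max N(v)+1,n]$ is itself an interval; in addition the downward-closure property~\eqref{eq:adjprop} holds. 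One pass over the adjacency list precomputes $\max N^+(i)$, $\max N^-(i)$ and $\min\overline{N_>(i)}=\max N(i)+1$ for all $i$ in $O(n+m)$ time.

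Second, and this is the crux, I would prove that each set $P_i=\{j\in\overline{N_>(i)}:[i+1,j-1]\subseteq N^-(i)\cup N^-(j)\}$ is exactly the interval $[\min\overline{N_>(i)},x_i]$, where $x_i:=\min\{\max N^+(j):j\in[\max N^-(i)+1,n]\}$. For $\supseteq$: any $z$ in the claimed interval lies in $[\max N^-(i)+1,n]$, so $\max N^+(z)\geq x_i$ and hence $(z,x_i)\in E(G)$, giving $[\max N^-(i)+1,x_i]\subseteq N^-(x_i)$; by~\eqref{eq:adjprop} this propagates to $[\max N^-(i)+1,z]\subseteq N^-(z)$, and combining with $[i+1,\max N^-(i)]\subseteq N^-(i)$ (contiguity of in-neighbours) yields $[i+1,z-1]\subseteq N^-(i)\cup N^-(z)$. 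For $\subseteq$: if $z>x_i$, pick $j$ attaining the minimum defining $x_i$; then $j>\max N^-(i)$ and $\max N^+(j)=x_i<z$, so $j\notin N^-(i)\cup N^-(z)$ and $z\notin P_i$. Each $x_i$, and hence each $P_i$, is then obtained in $O(n)$ time, so all of $\{P_i\}$ cost $O(n^2)$, and the recursion defining $\{K(i)\}$ runs in $O(n^2)$ exactly as in the DUF case, each step minimising over a set of size at most $n$.

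Third, for the final selection I would show, again via~\eqref{eq:adjprop}, that $\{j:[1,j-1]\subseteq N^-(j)\}=[1,y]$ where $y:=\min\{\max N^+(j):j\in[1,n]\}$ — this is the $i=0$ case of the arguments above — and then, by Theorem~\ref{thm:kernelduf}, output a smallest member of $\{K(i):i\in[1,y],\ K(i)\neq\infty\}$, declaring that $G$ has no kernel if this collection is empty. This step is $O(n)$, so the phases sum to $O(n^2)$, given the adjusted interval representation as input.

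The step I expect to be the main obstacle is the second one: verifying that $P_i$ collapses to the clean interval $[\min\overline{N_>(i)},x_i]$. Everything there depends on correctly interleaving the contiguity of $N^+_>$ and $N^-_>$ with property~\eqref{eq:adjprop}, and on recognising that the worst obstruction to membership in $P_i$ is precisely the vertex minimising $\max N^+$ over $[\max N^-(i)+1,n]$. Once this closed form and the analogous description of $[1,y]$ are established, the remaining bookkeeping and the appeal to Theorem~\ref{thm:kernelduf} are routine.
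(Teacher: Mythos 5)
Your proposal is correct and follows essentially the same route as the paper: order by common left endpoints, exploit contiguity of out-/in-neighbourhoods and property~\eqref{eq:adjprop} to show $P_i=[\min\overline{N_>(i)},x_i]$ with $x_i=\min\{\max N^+(j):j\in[\max N^-(i)+1,n]\}$, compute each $P_i$ and the $K(i)$'s in $O(n)$ per vertex, and finish by identifying $\{j:[1,j-1]\subseteq N^-(j)\}=[1,y]$ and appealing to Theorem~\ref{thm:kernelduf}. No substantive differences.
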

\begin{remark}
Note that the {\sc Max-Kernel} problem can also be solved in $O((n+m)n)$ time for the class of DUF-digraphs, by a minor modification of our algorithm that solves {\sc Min-Kernel} problem (replace $\mathrm{Min}\{K(j)\neq\infty:j\in P_i \}$ in the recursive definition of $K(i)$ by $\mathrm{Max}\{K(j)\neq\infty:j\in P_i \}$ and follow the same procedure. Then we have that if kernel exists, then a maximum sized kernel is given by $\mathrm{Max} (\mathcal{K})$). Further, the recursive definition can also be easily adapted to the weighted version of the problems {\sc Min-Kernel} and {\sc Max-Kernel} in $O((n+m)n)$ time.  
\end{remark}

\subsection{Minimum absorbing set}
Given any digraph $G$, the splitting bigraph $B_G$ is defined as follows: $V(B_G)$ is partitioned into two sets $V'=\{u':u\in V(G)\}$ and $V''=\{u''\colon u\in V(G)\}$, and $E(B_G)=\{u'v''\colon (u,v)\in E(G)\}$. Muller~\cite{muller1997recognizing} observed that $G$ is an interval digraph if and only if $B_G$ is an interval bigraph (since if $\{(S_u,T_u)\}_{u\in V(G)}$ is an interval representation of a digraph $G$, then $\{\{S_u\}_{u'\in V'},\{T_u\}_{u''\in V''}\}$ is an interval bigraph representation of the bipartite graph $B_G$).

Recall that for a bipartite graph having two specified partite sets $A$ and $B$, a set $S\subseteq B$ such that $\bigcup_{u\in B} N(u)=A$ is called an $A$-dominating set (or a red-blue dominating set). If $G$ is a reflexive interval digraph, then every $V'$-dominating set of $B_G$ corresponds to an absorbing set of $G$ and vice versa. To be precise, if $S\subseteq V''$ is a $V'$-dominating set of $B_G$, then $\{u\colon u''\in S\}$ is an absorbing set of $G$ and if $S\subseteq V(G)$ is an absorbing set of $G$, then $\{u''\colon u\in S\}$ is a $V'$-dominating set of $B_G$ (note that this is not true for general interval digraphs). Thus finding a minimum cardinality absorbing set in $G$ is equivalent to finding a minimum cardinality $V'$-dominating set in the bipartite graph $B_G$. We show in this section that the problem of computing a minimum cardinality $A$-dominating set is linear time solvable for interval bigraphs. 
This implies that the {\sc Absorbing-Set} problem can be solved in linear time on reflexive interval digraphs.

Consider an interval bigraph $H$ with partite sets $A$ and $B$. Let $\{I_u\}_{u\in V(H)}$ be an interval representation for $H$; i.e. $uv\in E(H)$ if and only if $u\in A$, $v\in B$ and $I_u\cap I_v\neq\emptyset$. Let $|A|=t$. We assume without loss of generality that $A=\{1,2,\ldots,t\}$, where $r(I_i)<r(I_j)\Leftrightarrow i<j$. We also assume that there are no isolated vertices in $A$, as otherwise $H$ does not have any $A$-dominating set. For each $i\in\{1,2,\ldots,t\}$, we compute a minimum cardinality subset $DS(i)$ of $B$ that dominates $\{i,i+1,\ldots,t\}$, i.e. $\{i,i+1,\ldots,t\}\subseteq\bigcup_{u\in DS(i)} N(u)$. Then $DS(1)$ will be a minimum cardinality $A$-dominating set of $H$. We first define some parameters that will be used to define $DS(i)$.

Let $i\in\{1,2,\ldots,t\}$. We define $\rho(i)=\max_{u\in N(i)} r(I_u)$ and let $R(i)$ be a vertex in $N(i)$ such that $r(I_{R(i)})=\rho(i)$. Since $A$ does not contain any isolated vertices, $\rho(i)$ and $R(i)$ exist for each $i\in\{1,2,\ldots,t\}$. Let $\lambda(i)=\min\{j\colon\rho(i)<l(I_j)\}$. Note that $\lambda(i)$ may not exist. It can be seen that if $\lambda(i)$ exists, then $\lambda(i)>i$ in the following way. Let $j=\lambda(i)$. Clearly, $\rho(i)<l(I_j)$. As $R(i)\in N(i)$, we have $l(I_i)<\rho(i)$, which implies that $i\neq j$. If $j<i$, then it should be the case that $l(I_i)<\rho(i)<l(I_j)<r(I_j)<r(I_i)$, which implies that any interval $I_x$, where $x\in B$, that intersects $I_j$ also intersects $I_i$, and $r(I_x)>\rho(i)$. But this contradicts our choice of $\rho(i)$ and $R(i)$. Thus $N(j)=\emptyset$, implying that $j$ is an isolated vertex in $A$, which is a contradiction. Therefore, we can conclude that for any $i\in A$, $\lambda(i)>i$.
\begin{lemma}\label{lem:lambda}
Let $i\in\{1,2,\ldots,t\}$. If $\lambda(i)$ exists, then $R(i)$ dominates every vertex in $\{i,i+1,\ldots,\lambda(i)-1\}$ and otherwise, $R(i)$ dominates every vertex in $\{i,i+1,\ldots,t\}$.
\end{lemma}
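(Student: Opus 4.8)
The plan is to unify the two cases of the statement into a single claim: \emph{for every $j\in A$ with $i\le j$ and $l(I_j)<\rho(i)$, the interval $I_{R(i)}$ meets $I_j$, and hence $R(i)$ dominates $j$}. First I would check that this claim really does cover exactly the vertices named in the lemma. By the definition $\lambda(i)=\min\{j\colon\rho(i)<l(I_j)\}$, a vertex $j\in A$ fails to lie in $\{j\colon\rho(i)<l(I_j)\}$ precisely when $l(I_j)\le\rho(i)$; so if $\lambda(i)$ exists this holds for every $j\in A$ with $j<\lambda(i)$, and if $\lambda(i)$ does not exist it holds for every $j\in A$. Using the general-position assumption on interval representations recorded in Section~\ref{sec:notation} (no two endpoints coincide), together with $\rho(i)=r(I_{R(i)})$ being an interval endpoint and $l(I_j)$ being one too, we may upgrade this to the strict inequality $l(I_j)<\rho(i)$. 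Thus the claim, applied to all such $j$, yields both halves of the lemma.

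Next I would prove the claim by a direct appeal to the two-sided condition for two intervals to intersect (the observation $I\cap J=\emptyset\Leftrightarrow(r(I)<l(J))\vee(r(J)<l(I))$ from Section~\ref{sec:notation}). Fix $j\in A$ with $i\le j$ and $l(I_j)<\rho(i)$. Since $R(i)\in N(i)$, the intervals $I_{R(i)}$ and $I_i$ intersect, so $l(I_{R(i)})\le r(I_i)$; and because the vertices of $A$ are indexed in increasing order of their right endpoints, $i\le j$ gives $r(I_i)\le r(I_j)$, whence $l(I_{R(i)})\le r(I_j)$. In the other direction, $l(I_j)<\rho(i)=r(I_{R(i)})$ by the choice of $j$ and the definition of $R(i)$. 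These two inequalities are exactly the negations of $r(I_j)<l(I_{R(i)})$ and $r(I_{R(i)})<l(I_j)$, so $I_{R(i)}\cap I_j\neq\emptyset$, i.e. $R(i)\in N(j)$, which is what we wanted.

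I do not expect any real obstacle: the core is a three-line interval-geometry computation. The only thing that needs care is the bookkeeping around $\lambda(i)$ — in particular confirming that ``$\lambda(i)$ does not exist'' is equivalent to ``$l(I_j)\le\rho(i)$ for all $j\in A$'', and that the strictness $l(I_j)<r(I_{R(i)})$ used at the end is legitimately supplied by the distinct-endpoints convention rather than assumed. (Note also that $i$ itself is covered, since $R(i)\in N(i)$ directly, and that by the remark preceding the lemma $\lambda(i)>i$ whenever it exists, so the set $\{i,\ldots,\lambda(i)-1\}$ is indeed nonempty.)
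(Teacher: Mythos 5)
Your proof is correct and takes essentially the same route as the paper's: the two inequalities you isolate, $l(I_{R(i)})\le r(I_i)\le r(I_j)$ (from $R(i)\in N(i)$ and the right-endpoint ordering of $A$) and $l(I_j)\le\rho(i)=r(I_{R(i)})$ (from $j$ not witnessing $\lambda(i)$), are exactly what the paper's two-case contradiction argument reduces to. Your unified direct phrasing is a mild streamlining, not a different method; note also that the non-strict inequality $l(I_j)\le\rho(i)$ already suffices to negate $r(I_{R(i)})<l(I_j)$, so the appeal to the distinct-endpoints convention is unnecessary.
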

\begin{proof}
We first note that as $R(i)\in N(i)$, we have $l(I_{R(i)})\leq r(I_i)$, as otherwise the intervals $I_{R(i)}$ and $I_i$ will be disjoint.

Suppose that $\lambda(i)$ exists. Then consider any $j\in\{i,i+1,\ldots,\lambda(i)-1\}$. Suppose for the sake of contradiction that $R(i)\notin N(j)$. Clearly, $j\neq i$ as $R(i)\in N(i)$. So we have $i< j< \lambda(i)$. Since $I_{R(i)}$ and $I_j$ are disjoint, we have either $\rho(i)=r(I_{R(i)})<l(I_j)$ or $r(I_j)<l(I_{R(i)})$. In the former case, since $i<j<\lambda(i)$, we have a contradiction to the choice of $\lambda(i)$. So we can assume that $r(I_j)<l(I_{R(i)})$. Recalling that $l(I_{R(i)})\leq r(I_i)$, we now have that $r(I_j)<r(I_i)$, which contradicts the fact that $j>i$. Thus, $R(i)$ dominates every vertex in $\{i,i+1,\ldots,\lambda(i)-1\}$. Next, suppose that $\lambda(i)$ does not exist. Then consider any vertex $j>i$. Since $\lambda(i)$ does not exist, we have $l(I_j)\leq\rho(i)=r(I_{R(i)})$. Since $l(I_{R(i)})\leq r(I_i)$ and $r(I_i)<r(I_j)$, we have $l(I_{R(i)})<r(I_j)$. Thus, the intervals $I_j$ and $I_{R(i)}$ intersect for every $j>i$, implying that $R(i)$ dominates every vertex in $\{i,i+1,\ldots,t\}$.
\end{proof}

We now explain how to compute $DS(i)$ for each $i\in\{1,2,\ldots,t\}$.
We recursively define $DS(i)$ as follows:
\medskip

\noindent $DS(i) =  \begin{cases} \{R(i)\}\cup DS(\lambda(i)) & \text{ if } \lambda(i)\text{ exists}
 \\ \{R(i)\} & \text{ otherwise}
\end{cases}$

\begin{lemma}
For each $i\in \{1,2,\ldots,t\}$, the set $DS(i)$ as defined above is a minimum cardinality subset of $B$ that dominates $\{i,i+1,\ldots,t\}$.
\end{lemma}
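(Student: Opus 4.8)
The plan is to prove the two halves of the claim — that $DS(i)$ dominates $\{i,i+1,\ldots,t\}$, and that it has minimum cardinality among all such sets — by reverse induction on $i$, i.e.\ from $i=t$ down to $i=1$. The base cases are those $i$ for which $\lambda(i)$ does not exist; here $DS(i)=\{R(i)\}$, and Lemma~\ref{lem:lambda} tells us directly that $R(i)$ alone dominates $\{i,i+1,\ldots,t\}$, so $DS(i)$ is both feasible and (trivially) of minimum size, since $A$ has no isolated vertices and hence at least one vertex is needed.

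For the inductive step, suppose $\lambda(i)$ exists, so $DS(i)=\{R(i)\}\cup DS(\lambda(i))$. Feasibility is easy: by Lemma~\ref{lem:lambda}, $R(i)$ dominates $\{i,i+1,\ldots,\lambda(i)-1\}$, and since $\lambda(i)>i$ (as established just before Lemma~\ref{lem:lambda}) the induction hypothesis gives that $DS(\lambda(i))$ dominates $\{\lambda(i),\lambda(i)+1,\ldots,t\}$; the two ranges together cover $\{i,\ldots,t\}$, so $DS(i)$ dominates $\{i,\ldots,t\}$, and $|DS(i)|\le 1+|DS(\lambda(i))|$. The substantive part is the lower bound. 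Let $D$ be any subset of $B$ dominating $\{i,\ldots,t\}$. Some vertex $u\in D$ must dominate $i$, so $u\in N(i)$; by definition of $\rho(i)$ we have $r(I_u)\le\rho(i)$. The key geometric claim is that such a $u$ cannot dominate any vertex $j\ge\lambda(i)$: indeed $l(I_j)>\rho(i)\ge r(I_u)$ (the first inequality by definition of $\lambda(i)$ and monotonicity of left endpoints among vertices of $A$ — here I would need the fact that $l(I_j)>\rho(i)$ for all $j\ge\lambda(i)$, which follows since for $j\ge \lambda(i)$ we have $r(I_j)\ge r(I_{\lambda(i)})$ and hence, using that endpoints are all distinct, $l(I_j)\ge l(I_{\lambda(i)})>\rho(i)$; this monotonicity-of-left-endpoints step needs a small argument of its own), so $I_u\cap I_j=\emptyset$. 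Hence $D\setminus\{u\}$ must dominate $\{\lambda(i),\lambda(i)+1,\ldots,t\}$, and $D\setminus\{u\}\subseteq B$, so by the induction hypothesis $|D\setminus\{u\}|\ge|DS(\lambda(i))|$, giving $|D|\ge 1+|DS(\lambda(i))|\ge|DS(i)|$.

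I expect the main obstacle to be exactly the geometric lemma that a single vertex of $N(i)$ cannot reach past $\lambda(i)$ — more precisely, pinning down that $l(I_j)>\rho(i)$ for every $j\ge\lambda(i)$ and not merely for $j=\lambda(i)$. The definition only says $\lambda(i)=\min\{j:\rho(i)<l(I_j)\}$, so I need to rule out that some later $j$ has $l(I_j)\le\rho(i)$; this requires knowing that among the vertices of $A$, left endpoints are monotone in the same order as right endpoints, or at least that they are monotone ``enough'' past a certain point. This is plausible from the interval-bigraph setup and the absence of isolated vertices in $A$ (a similar interleaving argument already appears in the paragraph establishing $\lambda(i)>i$), but it is the one place where the argument is not a one-line appeal to a previous result, so I would write it out carefully, likely by the same disjointness-of-intervals case analysis used for $\lambda(i)>i$. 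Everything else is bookkeeping with the induction hypothesis.
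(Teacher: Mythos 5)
There is a genuine gap, and it is exactly at the step you flagged. Your lower-bound argument needs the claim that a vertex $u\in N(i)$ cannot dominate \emph{any} $j\ge\lambda(i)$, and you propose to get this from ``$r(I_j)\ge r(I_{\lambda(i)})$ hence $l(I_j)\ge l(I_{\lambda(i)})$.'' That monotonicity of left endpoints is false: the intervals of $A$ are ordered only by right endpoint, and nothing prevents nesting. Concretely, take $I_1=[0,1]$, $I_2=[3,4]$, $I_3=[0.5,5]$ and $I_u=[0.9,2]$ with $u$ the only neighbour of $1$; then $\rho(1)=2$, $\lambda(1)=2$, yet $3\ge\lambda(1)$ and $u$ dominates $3$. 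So $D\setminus\{u\}$ need not dominate $\{\lambda(i),\ldots,t\}$ for the reason you give, and the induction step breaks.

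The paper's proof repairs this with a different (and correct) observation: it does not claim $N(u)\cap\{\lambda(i),\ldots,t\}=\emptyset$, but rather that this set is contained in $N(v)$ for any $v\in D$ dominating $\lambda(i)$. Indeed, if $j\ge\lambda(i)$ and $j\in N(u)$, then $l(I_j)\le r(I_u)\le\rho(i)<l(I_{\lambda(i)})$ while $r(I_j)>r(I_{\lambda(i)})$, so $I_{\lambda(i)}\subseteq I_j$; hence every interval meeting $I_{\lambda(i)}$ meets $I_j$, and in particular $v$ dominates $j$. Since some $v\in D\setminus\{u\}$ must dominate $\lambda(i)$ (because $\lambda(i)\notin N(u)$), it follows that $D\setminus\{u\}$ still dominates $\{\lambda(i),\ldots,t\}$, and the induction hypothesis gives $|D|\ge 1+|DS(\lambda(i))|$. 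The rest of your proof (the reverse induction, the base case, and the feasibility half via Lemma~\ref{lem:lambda}) matches the paper and is fine.
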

\begin{proof}
We prove this by induction on $t-i$. The base case where $i=t$ is trivial, by the definition of $R(t)$. Let $i<t$. Assume that the hypothesis holds for any $j>i$. If $\lambda(i)$ does not exist, then by Lemma~\ref{lem:lambda}, $R(i)$ dominates every vertex in $\{i,i+1,\ldots,t\}$. This implies that $DS(i)=\{R(i)\}$ is a minimum cardinality subset of $B$ that dominates $\{i,i+1,\ldots,t\}$ and we are done. Therefore let us assume that $\lambda(i)$ exists. Then by the recursive definition of $DS(i)$, we have that $DS(i)=\{R(i)\} \cup DS(\lambda(i))$. Since $\lambda(i)> i$, we have by the inductive hypothesis that $DS(\lambda(i))$ is a minimum cardinality subset of $B$ that dominates every vertex in $\{\lambda(i),\lambda(i)+1,\ldots,t\}$. Since by Lemma~\ref{lem:lambda}, we have that $R(i)$ dominates every vertex in $\{i,i+1,\ldots,\lambda(i)-1\}$, we then have that $DS(i)=\{R(i)\}\cup DS(\lambda(i))$ dominates every vertex in $\{i,i+1,\ldots,t\}$. Consider any set $OPT\subseteq B$ that dominates $\{i,i+1,\ldots,t\}$. Clearly, there exists a $u\in OPT$ such that $i\in N(u)$. By the definition of $R(i)$, we know that $r(I_u)\leq r(I_{R(i)})=\rho(i)$. Since $\rho(i)<l(I_{\lambda(i)})$, this implies that $\lambda(i)\notin N(u)$. Then, since $\lambda(i)\in\{i,i+1,\ldots,t\}$, there must exist a vertex $v\in OPT\setminus\{u\}$ such that $\lambda(i)\in N(v)$. Now consider any vertex $j\in N(u)\cap\{\lambda(i),\lambda(i)+1,\ldots,t\}$. We have $r(I_j)\geq r(I_{\lambda(i)})\geq l(I_{\lambda(i)})>\rho(i)\geq r(I_u)$. Since $j\in N(u)$, we have $l(I_j)\leq r(I_u)$, which implies that $l(I_j)<l(I_{\lambda(i)})\leq r(I_{\lambda(i)})<r(I_j)$. This implies that every interval that intersects $I_{\lambda(i)}$ also intersects $I_j$, in particular $j\in N(v)$. Applying the argument for every $j\in N(u)\cap\{\lambda(i),\lambda(i)+1,\ldots,t\}$, we can conclude $N(u)\cap\{\lambda(i),\lambda(i)+1,\ldots,t\}\subseteq N(v)$. Since $OPT$ dominates every vertex in $\{\lambda(i),\lambda(i)+1,\ldots,t\}$, this implies that $OPT\setminus\{u\}$ dominates every vertex in $\{\lambda(i),\lambda(i)+1,\ldots,t\}$.
Since by the inductive hypothesis, $DS(\lambda(i))$ is a minimum cardinality subset of $B$ that dominates every vertex in the same set, we have that $|OPT\setminus\{u\}|\geq |DS(\lambda(i))|$. Then $|OPT|\geq |DS(\lambda(i))\cup\{R(i)\}|=|DS(i)|$. This proves that $DS(i)$ is a minimum cardinality subset of $B$ that dominates every vertex in $\{i,i+1,\ldots,t\}$.
\end{proof}

It is not difficult to verify that given an interval representation of the interval bigraph $H$ with partite sets $A$ and $B$, the parameters $R(i)$ and $\lambda(i)$ can be computed for each $i\in A$ in $O(n+m)$ time. Also, given a reflexive interval digraph $G$, the interval bigraph $B_G$ can be constructed in linear time. Thus we have the following corollary.
\begin{corollary}
The {\sc Red-Blue Dominating Set} problem can be solved in interval bigraphs in linear time, given an interval representation of the bigraph as input. Consequently, the {\sc Absorbing-Set} (resp. {\sc Dominating-Set})problem can be solved in linear time in reflexive interval digraphs, given an interval representation of the input digraph. 
\end{corollary}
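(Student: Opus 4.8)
The plan is to assemble the pieces developed just above into (i) a linear-time algorithm for {\sc Red-Blue Dominating Set} on interval bigraphs, and then (ii) reduce {\sc Absorbing-Set} on reflexive interval digraphs to it.

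For (i): by the lemma just proved, $DS(1)$ is a minimum cardinality subset of $B$ dominating $\{1,2,\ldots,t\}=A$, i.e.\ a minimum $A$-dominating set; and if $A$ has an isolated vertex no such set exists, which is detected during preprocessing. So correctness is immediate, and what remains is the running time. First I would order $A$ by the right endpoints of the intervals (the representation is assumed, as elsewhere in the paper, to be given as a sorted list of endpoints, so this is $O(n)$, or $O(n\log n)$ starting from an unsorted list), relabeling $A=\{1,\ldots,t\}$. Then $\rho(i)$ and $R(i)$ are computed for all $i\in A$ by a single pass over the adjacency lists, in $O(n+m)$ total. For $\lambda(i)=\min\{j:\rho(i)<l(I_j)\}$ I would merge the sorted list of left endpoints of the $A$-intervals with the sorted multiset of the values $\rho(i)$ (both drawn from the $O(n)$ endpoints, hence sortable in linear time given the sorted endpoint list), obtaining all $\lambda(i)$ in $O(n)$; recall that $\lambda(i)>i$, so each $\lambda(i)$, when it exists, lies on the $A$-side. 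Finally $DS(1)=\{R(1)\}\cup\{R(\lambda(1))\}\cup\{R(\lambda(\lambda(1)))\}\cup\cdots$: the chain $1<\lambda(1)<\lambda(\lambda(1))<\cdots\le t$ is strictly increasing, hence has at most $t\le n$ terms, and following it costs $O(n)$. The total is $O(n+m)$.

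For (ii): given a reflexive interval digraph $G$ with representation $\{(S_u,T_u)\}_{u\in V(G)}$, I would build the splitting bigraph $B_G$ together with its interval representation $(\{S_u\}_{u'\in V'},\{T_u\}_{u''\in V''})$ in $O(n+m)$ time, following Muller's observation. The key point, already noted, is that because $G$ is reflexive (so $S_u\cap T_u\neq\emptyset$, i.e.\ $u'u''\in E(B_G)$ for every $u$), a set $S\subseteq V''$ is $V'$-dominating in $B_G$ if and only if $\{u:u''\in S\}$ is an absorbing set of $G$, with matching cardinalities; in particular a minimum cardinality absorbing set of $G$ is obtained by running the algorithm of part (i) on $B_G$ with the side to be dominated taken to be $V'$, then translating back. (Note that $V'$ has no isolated vertex, again by reflexivity, consistent with $V(G)$ always being an absorbing set of $G$.) Since the reversal of $G$ is also a reflexive interval digraph, with representation obtained by swapping $S_u$ and $T_u$, and dominating sets of $G$ are exactly absorbing sets of its reversal, the same bound holds for {\sc Dominating-Set}.

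The one place where genuine care is needed is twofold. Algorithmically, it is the $O(n+m)$ computation of all the $\lambda(i)$: this needs the correct sorting/merging setup and relies on $\lambda(i)>i$ so that the recursion $DS(i)=\{R(i)\}\cup DS(\lambda(i))$ terminates along a strictly increasing chain rather than looping. Conceptually, it is verifying that reflexivity is exactly what makes the absorbing-set/$V'$-dominating-set correspondence a cardinality-preserving bijection: for a general interval digraph the direction ``$V'$-dominating $\Rightarrow$ absorbing'' can fail, since a vertex $u$ with $u''\in S$ need not be absorbed by $S$ when $u'u''\notin E(B_G)$, and this is precisely the obstruction that the reflexive restriction removes and the reason the result does not extend to arbitrary interval digraphs.
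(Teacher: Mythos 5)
Your proposal is correct and follows essentially the same route as the paper, whose own proof of this corollary consists only of noting that the parameters $R(i)$ and $\lambda(i)$ can be computed for all $i$ in $O(n+m)$ time and that $B_G$ together with its interval representation can be built from $G$ in linear time; you supply exactly these ingredients, with more implementation detail on computing the $\lambda(i)$ and on following the strictly increasing chain $1<\lambda(1)<\lambda(\lambda(1))<\cdots$. The one small inaccuracy is in your closing aside: for a non-reflexive interval digraph it is the direction ``absorbing $\Rightarrow$ $V'$-dominating'' that fails (a vertex $v$ that belongs to the absorbing set but has no out-neighbour in it and no loop leaves $v'$ undominated by $\{u''\colon u\in S\}$), whereas ``$V'$-dominating $\Rightarrow$ absorbing'' holds for arbitrary interval digraphs; this does not affect your argument, since the biconditional you actually invoke is correct for reflexive interval digraphs.
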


Note that even if an interval representation of the interval bigraph is not known, it can be computed in polynomial time using Muller's algorithm~\cite{muller1997recognizing}. Thus given just the adjacency list of the graph as input, the {\sc Red-Blue Dominating Set} problem is polynomial-time solvable on interval bigraphs and the {\sc Absorbing-Set} (resp. {\sc Dominating-Set}) problem is polynomial-time solvable on reflexive interval digraphs.

\subsection{Maximum independent set}
We have the following theorem due to McConnell and Spinrad~\cite{mcconnell1999modular}.
\begin{theorem}
An independent set of maximum possible size can be computed for cocomparability graphs in $O(n+m)$ time.
\end{theorem}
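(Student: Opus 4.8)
The plan is to use the ordering characterisation of cocomparability graphs (Theorem~\ref{thm:cocomp}) to turn the problem into computing a longest path in a directed acyclic graph, and then to run a dynamic program whose only delicate point is data-structural.

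First I would compute, in $O(n+m)$ time, an ordering $<$ of $V(G)$ having the property of Theorem~\ref{thm:cocomp} --- one exists since $G$ is a cocomparability graph, and it can be found in linear time~\cite{mcconnell1999modular} --- and identify $V(G)$ with $[1,n]$ so that $<$ is the natural order. The conceptual heart of the argument is then the following translation. Let $D$ be the directed graph on $[1,n]$ with an arc $(u,w)$ exactly when $u<w$ and $uw\notin E(G)$; clearly $<$ is a topological order of $D$, so $D$ is acyclic. I claim the vertex sets of directed paths of $D$ are precisely the independent sets of $G$. One direction is immediate: an independent set listed in increasing order is a directed path of $D$. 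For the other, if $u_1<u_2<\dots<u_k$ is a directed path then each consecutive pair is a non-edge of $G$, and by repeatedly applying the contrapositive of Theorem~\ref{thm:cocomp} (if $i<j<k$ and $ij,jk\notin E(G)$ then $ik\notin E(G)$) one gets that \emph{every} pair $u_a,u_b$ is a non-edge, so $\{u_1,\dots,u_k\}$ is independent. Hence a maximum independent set of $G$ is the vertex set of a longest directed path of $D$.

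It remains to find a longest path in the DAG $D$, which I would do by the textbook dynamic program run from $i=n$ down to $i=1$: put $f(i)=1+\max\{f(j):j>i,\ ij\notin E(G)\}$ (the maximum over the empty set being $0$), store a parent pointer to a maximiser, output $\max_i f(i)$ as the size, and recover the set by following parent pointers from an index achieving this maximum. Correctness is immediate from the previous paragraph; the real issue is efficiency, since $D$ can have $\Theta(n^2)$ arcs and so the query ``maximum of $f$ over the forward \emph{non}-neighbours of $i$'' must be answered without enumerating them. Here I would exploit the elementary fact (a one-line reverse induction) that for every $i$ the set of values $\{f(j):j\ge i\}$ is exactly an initial segment $\{1,2,\dots,M_i\}$ of the integers, where $M_i=\max_{j\ge i}f(j)$: indeed $f(n)=1$, and passing from $i+1$ to $i$ either leaves the maximum unchanged with $f(i)\in\{1,\dots,M_{i+1}\}$, or increases it by exactly one. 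Keeping buckets $B[1],\dots,B[n]$ (with $B[v]$ the list of already-processed $j$ having $f(j)=v$) and the current maximum $M$, I process $i$ by temporarily removing the forward neighbours of $i$ from their buckets, scanning down from $M$ to the largest index $v'$ with $B[v']$ still non-empty --- every empty bucket crossed in this scan was emptied by a distinct removed neighbour, so the scan costs $O(1+|N^+(i)|)$ --- setting $f(i)=1+v'$ with parent a representative of $B[v']$, restoring the removed neighbours, and finally inserting $i$ into $B[f(i)]$ in $O(1)$ time (because $f(i)$ equals $M+1$ or lies in $\{1,\dots,M\}$). Each vertex thus costs $O(1+|N^+(i)|)$, for a total of $O(n+m)$.

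The step I expect to be the main obstacle is precisely this last one: making the dynamic program run in linear time despite the quadratic size of the transitive DAG $D$. The observation that the attained $f$-values on any suffix form a contiguous interval $\{1,\dots,M_i\}$ is exactly what lets the bucketing answer each ``maximum over non-neighbours'' query in time proportional to the degree of the queried vertex rather than in $\Theta(n)$; the remaining ingredients --- the ordering of Theorem~\ref{thm:cocomp}, the path/independent-set correspondence, the dynamic program, and the reconstruction --- are routine.
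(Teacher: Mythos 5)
Your argument is correct, but note that the paper does not prove this statement at all: it is imported verbatim as a black box from McConnell and Spinrad~\cite{mcconnell1999modular}, so there is no ``paper proof'' to match. What you have done is decompose the cited result into (a) the linear-time computation of a vertex ordering with the property of Theorem~\ref{thm:cocomp}, which you still take from~\cite{mcconnell1999modular} and which is genuinely the deep ingredient, and (b) a self-contained linear-time longest-chain computation given that ordering. Part (b) is sound: the contrapositive of the umbrella property does make forward non-adjacency transitive, so independent sets are exactly the chains of the transitive DAG of non-edges; the value $f(i)=1+\max\{f(j):j>i,\ ij\notin E(G)\}$ is the right recurrence; and your bucketing analysis is valid --- the initial-segment property of the attained $f$-values guarantees that before the temporary deletions every bucket $B[1],\dots,B[M]$ is nonempty, so each empty bucket crossed in the downward scan is charged to a distinct deleted forward neighbour of $i$, giving $O(1+|N^+(i)|)$ per vertex and $O(n+m)$ overall (the degenerate case where all processed vertices are neighbours of $i$ is also covered, since then $M\le|N^+(i)|$). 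The one presentational caveat is that your proof is not more elementary than the citation in the way it might appear: the linear-time ordering in step (a) is itself the main content of~\cite{mcconnell1999modular}. What your write-up buys is an explicit, checkable reduction from the ordering to the independent set, which the paper leaves implicit.
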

Let $G$ be a DUF-digraph. Let $H$ be the underlying undirected graph of $G$. Then by Corollary~\ref{cor:DUFcc}, we have that $H$ is a co-comparability graph. Note that the independent sets of $G$ and $H$ are exactly the same. Therefore any algorithm that finds a maximum cardinality independent set in cocomparability graphs can be used to solve the {\sc Independent-Set} problem in DUF-digraphs. Thus by the above theorem, we have the following corollary.
\begin{corollary}
The {\sc Independent-Set} problem can be solved for DUF-digraphs in $O(n+m)$ time. Consequently, the {\sc Independent-Set} problem can be solved for reflexive interval digraphs in $O(n+m)$ time. 
\end{corollary}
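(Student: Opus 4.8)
The plan is to reduce the directed problem to ordinary maximum independent set on the underlying undirected graph and then invoke Corollary~\ref{cor:DUFcc} together with the McConnell--Spinrad algorithm quoted above. First I would note that for any digraph $G$ a set $S\subseteq V(G)$ is independent in $G$ exactly when no two of its vertices are joined by an arc in either direction, which is precisely the condition that $S$ be independent in the underlying undirected graph $H$ of $G$ (where $uv\in E(H)$ iff $(u,v)\in E(G)$ or $(v,u)\in E(G)$). Hence the families of independent sets of $G$ and of $H$ coincide, so a maximum independent set of $H$ is automatically a maximum independent set of $G$.

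Next, for a DUF-digraph $G$, Corollary~\ref{cor:DUFcc} tells us that $H$ is a cocomparability graph. I would build $H$ from the adjacency lists of $G$ in $O(n+m)$ time (merge the in- and out-neighbour lists of each vertex and discard duplicates), then run the McConnell--Spinrad routine on $H$ to obtain a maximum independent set in $O(n+m)$ time; by the previous paragraph this set is also a maximum independent set of $G$. This gives the first statement. For the second statement, Corollary~\ref{cor:duf} says every reflexive interval digraph is a DUF-digraph, so the same argument applies verbatim (the underlying graph being a $K_{3,3}$-free cocomparability graph by Theorem~\ref{thm:K_33free}, though cocomparability already suffices here).

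The only point needing a little care is the form of input expected by the cocomparability maximum-independent-set routine: the theorem as stated takes a cocomparability graph directly, so no preprocessing to compute auxiliary data (such as a transitive orientation of the complement) is needed; should one instead want a version that requires a cocomparability vertex ordering, one could feed it a DUF-ordering of $G$, which by Theorem~\ref{thm:cocomp} is such an ordering, and the running time is unaffected. I do not anticipate any genuine obstacle: the content of the corollary is just the combination of Corollary~\ref{cor:DUFcc} (itself a consequence of Theorem~\ref{thm:ordering} and Damaschke's Theorem~\ref{thm:cocomp}) with the known linear-time algorithm for cocomparability graphs, via the elementary observation that the independent sets of a digraph are exactly those of its underlying undirected graph.
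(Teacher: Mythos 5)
Your argument is correct and is essentially the paper's own proof: both reduce the directed problem to maximum independent set on the underlying undirected graph, which is a cocomparability graph by Corollary~\ref{cor:DUFcc}, and then invoke the linear-time McConnell--Spinrad algorithm. The extra remarks about constructing $H$ and about the input format of the cocomparability routine are fine but not needed beyond what the paper already records.
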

The above corollary generalizes and improves the $O(mn)$ time algorithm due to Prisner's~\cite{prisner1994algorithms} observation that underlying undirected graph of interval nest digraphs are weakly chordal (Theorem~\ref{thm:prisnerweaklytriang}) and the fact that maximum cardinality independent set problem can be solved for weakly chordal graphs in $O(mn)$ time~\cite{hss}. Note that the weighted {\sc Independent-Set} problem  can also be solved for DUF-digraphs in $O(n+m)$ time, as the problem of finding a maximum weighted independent set in a cocomparability graphs can be solved in linear time~\cite{kohler2016linear}. 
\section{Hardness results for point-point digraphs}\label{sec:npcomplete}
\subsection{Characterizations for point-point digraphs}
In this section we give a characterization for point-point digraphs which will be further useful for proving our NP-completeness results for this class. Let $G=(V,E)$ be a digraph. We say that $a,b,c,d$ is an \emph{anti-directed walk} of length 3 if $a,b,c,d\in V(G)$, $(a,b),(c,b),(c,d)\in E(G)$ and $(a,d)\notin E(G)$ (the vertices $a,b,c,d$ need not be pairwise distinct, but it follows from the definition that $a\neq c$ and $b\neq d$). Recall that $B_G=(X,Y,E)$ is a splitting bigraph of $G$, where $X=\{x_u:u\in V(G)\}$ and $Y=\{y_u:u\in V(G)\}$ and $x_uy_v\in E(G_B)$ if and only if $(u,v)\in E(G)$. We then have the following theorem. 
\begin{theorem} \label{thm:point-point}
Let $G$ be a digraph. Then the following conditions are equivalent:
\begin{myenumerate}
	\item \label{point} $G$ is a point-point digraph.
	\item \label{anti-walk} $G$ does not contain any anti-directed walk of length 3.
	\item \label{complete} The splitting bigraph of $G$ is a disjoint union of complete bipartite graphs.  	
\end{myenumerate}
\end{theorem}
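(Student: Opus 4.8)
The plan is to establish the cycle of implications $\ref{point}\Rightarrow\ref{anti-walk}\Rightarrow\ref{complete}\Rightarrow\ref{point}$. The first implication is the easiest: if $G$ is a point-point digraph with representation $\{(S_u,T_u)\}$ where each $S_u=\{s_u\}$ and each $T_u=\{t_u\}$ is a single point, then $(u,v)\in E(G)$ iff $s_u=t_v$. If $a,b,c,d$ were an anti-directed walk of length $3$, then $(a,b),(c,b),(c,d)\in E(G)$ would force $s_a=t_b$, $s_c=t_b$, and $s_c=t_d$, hence $s_a=s_c=t_b=t_d$, so in particular $s_a=t_d$, giving $(a,d)\in E(G)$ — contradicting $(a,d)\notin E(G)$.

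For $\ref{anti-walk}\Rightarrow\ref{complete}$, I would work in the splitting bigraph $B_G=(X,Y,E)$ and show that the absence of an anti-directed walk of length $3$ in $G$ is precisely the statement that $B_G$ contains no induced $P_4$ (path on four vertices) with its middle edge... more carefully, that $B_G$ has no ``induced'' $x_a y_b x_c y_d$ configuration with $x_ay_b,x_cy_b,x_cy_d\in E(B_G)$ but $x_ay_d\notin E(B_G)$, which is exactly a $P_4$ on these four vertices (here $a\neq c$ and $b\neq d$ follow automatically). A bipartite graph in which no such $P_4$ occurs as a subgraph has the property that the relation ``$x$ and $x'$ have a common neighbour'' is transitive on $X$ (and symmetrically on $Y$): if $x_a,x_c$ share neighbour $y_b$ and $x_c,x_e$ share neighbour $y_b'$ — wait, one must be slightly careful. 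The cleaner route: show that two vertices of $X$ that lie in the same connected component of $B_G$ must have \emph{identical} neighbourhoods in $Y$. Indeed, if $x_a$ and $x_c$ are at distance $2$ (common neighbour $y_b$) and $x_a$ has a neighbour $y_d$, then $x_a y_b x_c$ together with the edge $x_ay_d$ would need $x_cy_d\in E(B_G)$, else $a,b,c,d$ is a forbidden anti-directed walk; so $N(x_a)\subseteq N(x_c)$ and by symmetry $N(x_a)=N(x_c)$. Propagating this equality along paths of a connected component shows every component of $B_G$ is a complete bipartite graph.

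For $\ref{complete}\Rightarrow\ref{point}$, suppose $B_G$ is a disjoint union of complete bipartite graphs, say with components $C_1,\dots,C_k$ (including trivial one-vertex components of isolated $X$- or $Y$-vertices). Assign a distinct real number $p_i$ to each component $C_i$. Then define, for each $u\in V(G)$: set $s_u := p_i$ if $x_u\in C_i$, and $t_u := p_j$ if $y_u\in C_j$. Now $(u,v)\in E(G)$ iff $x_uy_v\in E(B_G)$ iff $x_u$ and $y_v$ lie in the same component (since each component is complete bipartite, any $X$-vertex and $Y$-vertex in it are adjacent) iff $s_u=t_v$, i.e.\ $S_u\cap T_v\neq\emptyset$ where $S_u=\{s_u\}$ and $T_u=\{t_u\}$. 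Hence $\{(S_u,T_u)\}$ is a point-point representation of $G$.

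The main obstacle I anticipate is the bookkeeping in $\ref{anti-walk}\Rightarrow\ref{complete}$: one must handle the degenerate possibilities ($b=c$ and the like in the original walk, isolated vertices in $X$ or $Y$, and vertices $u$ for which $x_u$ or $y_u$ is isolated) and verify that ``same neighbourhood'' genuinely propagates across an entire connected component rather than just between vertices at distance $2$. This is a short induction on the distance within a component, but it is the step where the combinatorial claim has to be pinned down precisely; everything else is a routine translation between the point-point representation and the structure of $B_G$.
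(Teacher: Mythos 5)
Your proposal is correct and follows essentially the same route as the paper: (a)$\Rightarrow$(b) via equality of the representing points, (b)$\Rightarrow$(c) by translating the forbidden anti-directed walk into the splitting bigraph and globalizing a local observation to conclude each component is complete bipartite, and (c)$\Rightarrow$(a) by assigning one point per component. The only cosmetic difference is in the globalization step: the paper shows $N(x_u)\cup N(y_v)$ induces a complete bipartite graph for every edge $x_uy_v$ and then runs a minimal-distance counterexample argument, whereas you propagate the equality of neighbourhoods of $X$-vertices at distance two along a path within a component; both are valid and equally short.
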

\begin{proof}
$\ref{point}  \Rightarrow \ref{anti-walk}$: Let $G$ be a point-point digraph with a point-point representation $\{(S_u,T_u)\}_{u\in V(G)}$. Suppose that there exist vertices $a,b,c,d$ in $G$ such that $(a,b),(c,b),(c,d)\in E(G)$. By the definition of point-point representation, we then have $S_a=T_b=S_c=T_d$. This implies that $(a,d)\in E(G)$. Therefore we can conclude that $G$ does not contain any anti-directed walk of length 3.

\smallskip
\noindent $\ref{anti-walk}  \Rightarrow \ref{complete}$: Suppose that $G$ does not contain any anti-directed walk of length 3. Let $B_G=(X,Y,E)$ be the splitting bigraph of $G$. Let $x_uy_v$ be any edge in $B_G$, where $u,v\in V(G)$. Clearly, by the definition of $B_G$, $(u,v)\in E(G)$. We claim that the graph induced in $B_G$ by the vertices $N(x_u)\cup N(y_v)$ is a complete bipartite graph. Suppose not. Then it should be the case that there exist two vertices $x_a\in N(y_v)$ and $y_b\in N(x_u)$ such that $x_ay_b\notin E(B_G)$, where $a,b\in V(G)$. By the definition of $B_G$, we then have that $(a,v),(u,v),(u,b)\in E(G)$ and $(a,b)\notin E(G)$. So $a,v,u,b$ is an anti-directed walk of length 3 in $G$, which is a contradiction to \ref{anti-walk}. This proves that for every $p\in X$ and $q\in Y$ such that $pq\in E(B_G)$, the set $N(p)\cup N(q)$ induces a complete bipartite subgraph in $B_G$. Therefore, each connected component of $B_G$ is a complete bipartite graph. (This can be seen as follows: Suppose that there is a connected component $C$ of $B_G$ that is not complete bipartite. Choose $p\in X\cap C$ and $q\in Y\cap C$ such that $pq\notin E(B_G)$ and the distance between $p$ and $q$ in $B_G$ is as small as possible. Let $t$ be the distance between $p$ and $q$ in $B_G$. Clearly, $t$ is odd and $t\geq 3$. Consider a shortest path $p=z_0,z_1,z_2,\ldots,z_t=q$ from $p$ to $q$ in $B_G$. By our choice of $p$ and $q$, we have that $z_1z_{t-1}\in E(B_G)$. But then $p\in N(z_1)$, $q\in N(z_{t-1})$ and $pq\notin E(B_G)$, contradicting our observation that $N(z_1)\cup N(z_{t-1})$ induces a complete bipartite graph in $B_G$.)
\smallskip

\noindent $\ref{complete} \Rightarrow \ref{point}$: Suppose that $G$ is a digraph such that the splitting bigraph $B_G$ is a disjoint union of complete bipartite graphs, say $H_1,H_2,\ldots,H_k$. Now we can obtain a point-point representation $\{(S_u,T_u)\}_{u\in V(G)}$ of the digraph $G$ as follows: For each $i\in \{1,2,\ldots,k\}$, define $S_u=i$ if $x_u\in V(H_i)$ and $T_v=i$ if $y_v\in V(H_i)$. Note that $(u,v)\in E(G)$ if and only if $x_uy_v\in E(B_G)$ if and only if $x_u,y_v\in V(H_i)$ for some $i\in \{1,2,\ldots,k\}$. Therefore we can conclude that $(u,v)\in E(G)$ if and only if $S_u=T_v=i$ for some $i\in \{1,2,\ldots,k\}$. Thus the digraph $G$ is a point-point digraph.
\end{proof}
\begin{corollary}
Point-point digraphs can be recognized in linear time.
\end{corollary}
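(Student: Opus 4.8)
The plan is to read off a linear-time recognition algorithm directly from the equivalence \ref{point}~$\Leftrightarrow$~\ref{complete} of Theorem~\ref{thm:point-point}. Given a digraph $G$ on $n$ vertices with $m$ non-loop arcs, I would first build the splitting bigraph $B_G=(X,Y,E)$ in $O(n+m)$ time by a single scan of the adjacency lists of $G$: for each arc $(u,v)$ of $G$ (loops included) add the edge $x_uy_v$. Then $B_G$ has $2n$ vertices and at most $m+n$ edges, so its size is linear in the input size, and $B_G$ is bipartite by construction with sides $X$ and $Y$, so no bipartiteness test is needed.

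The remaining task is to decide whether every connected component of $B_G$ is a complete bipartite graph. I would compute the connected components of $B_G$ by one BFS or DFS in $O(n+m)$ time, and for each component $C$ record the numbers $p_C=|X\cap C|$ and $q_C=|Y\cap C|$. A connected bipartite graph with sides of sizes $p_C$ and $q_C$ is complete bipartite if and only if every vertex of $C$ lying in $X$ has degree exactly $q_C$ (equivalently, every vertex of $C$ in $Y$ has degree exactly $p_C$); this is checked by one further scan over the vertices of $C$ together with their precomputed degrees. (Isolated vertices, i.e.\ components with $p_C=0$ or $q_C=0$, pass this test vacuously, which is the correct behaviour, since a one-vertex bigraph is trivially a complete bipartite graph.) Summed over all components, this is again $O(n+m)$. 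The algorithm accepts $G$ if and only if every component passes.

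Correctness is immediate from Theorem~\ref{thm:point-point}: the procedure accepts exactly when $B_G$ is a disjoint union of complete bipartite graphs, and by the theorem this happens exactly when $G$ is a point-point digraph; moreover, as in the proof of \ref{complete}~$\Rightarrow$~\ref{point}, on acceptance one can even output an explicit point-point representation within the same time bound by setting $S_u$ (resp.\ $T_u$) to the index of the component of $B_G$ containing $x_u$ (resp.\ $y_u$). I do not foresee a genuine obstacle; the one place that calls for a little care is the ``complete bipartite'' test, where one should use the per-vertex degree condition above rather than comparing $|E(C)|$ with the product $p_C q_C$ — both are correct, but the degree condition is transparently $O(n+m)$ and avoids any concern about arithmetic on large numbers. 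Using characterization \ref{anti-walk} instead would amount to detecting an anti-directed walk of length $3$, which does not obviously yield a linear-time test as cleanly as the component-based test does.
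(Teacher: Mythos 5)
Your proposal is correct and follows exactly the route the paper intends: the corollary is stated as an immediate consequence of Theorem~\ref{thm:point-point}, and the natural linear-time test is precisely your check that every connected component of the splitting bigraph $B_G$ is complete bipartite (via per-vertex degrees), which the paper leaves implicit. Your handling of isolated vertices and the remark that one can also output a point-point representation in the same time are consistent with the proof of \ref{complete}~$\Rightarrow$~\ref{point} in the paper.
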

\subsection{Subdivision of an irreflexive digraph}\label{sec:subdivision}
For an undirected graph $G$, the \emph{$k$-subdivision} of $G$, where $k\geq 1$, is defined as the graph $H$ having vertex set $V(H)=V(G)\cup\bigcup_{ij\in E(G)} \{u_{ij}^1,u_{ij}^2,\ldots,u_{ij}^k\}$, obtained from $G$ by replacing each edge $ij\in E(G)$ by a path $i,u_{ij}^1,u_{ij}^2,\ldots,u_{ij}^k,j$.
 
The following theorem is adapted from Theorem~5 of Chleb\'ik and Chleb\'ikov\'a~\cite{chlebik2007complexity}.
\begin{theorem}[Chleb\'ik and Chleb\'ikov\'a]\label{thm:indsetsub}
Let $G$ be an undirected graph having $m$ edges. Let $k\geq 1$.
\begin{myenumerate}
	\item \label{indpt} The problem of computing a maximum cardinality independent set is APX-complete when restricted to $2k$-subdivisions of 3-regular graphs for any fixed integer $k\geq 0$.
	\item The problem of finding a minimum cardinality dominating set (resp. independent dominating set) is APX-complete when restricted to $3k$-subdivisions of graphs having degree at most 3 for any fixed integer $k\geq 0$.
\end{myenumerate}
\end{theorem}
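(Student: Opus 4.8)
The plan is to prove APX-completeness in the standard two steps: membership in APX, which is the routine direction, and APX-hardness via an L-reduction from the corresponding problem on the un-subdivided class (the $k=0$ case), which is where the content lies. For membership, note that a $2k$-subdivision of a 3-regular graph and a $3k$-subdivision of a graph of maximum degree $3$ both have maximum degree at most $3$, since subdividing turns a high-degree vertex's incident edges into paths of degree-$2$ internal vertices. On graphs of maximum degree $\Delta$, a greedily computed maximal independent set is a $(\Delta+1)$-approximation for {\sc Independent-Set} and, being itself an independent dominating set, is also a $(\Delta+1)$-approximation for the minimum independent dominating set problem (because any maximal independent set $M$ satisfies $|M|\le n\le(\Delta+1)\,i(G)$, using that $i(G)$ dominates $G$); and the greedy set-cover algorithm gives an $O(\ln(\Delta+1))$-approximation for {\sc Dominating-Set}. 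For fixed $\Delta$ these are all constant-factor, so all three problems lie in APX on the relevant subdivision classes.

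For the hardness in the first part, I would isolate as a short lemma the identity $\alpha(H)=\alpha(G)+k\,|E(G)|$ when $H$ is the $2k$-subdivision of $G$. The inequality $\alpha(H)\ge\alpha(G)+k\,|E(G)|$ follows by extending a maximum independent set $I^\ast$ of $G$: on each subdivided path $u=p_0,p_1,\ldots,p_{2k},p_{2k+1}=v$, at most one of $u,v$ lies in $I^\ast$, so one can add $k$ alternating internal vertices avoiding the internal vertex adjacent to that endpoint. For the reverse inequality, and simultaneously for the solution-lifting map of the L-reduction, take an independent set $I$ of $H$: it uses at most $k$ internal vertices on each path, and at most $k-1$ on a path whose two endpoints both lie in $I$, so $|I\cap V(G)|-e_G(I\cap V(G))\ge |I|-k\,|E(G)|$, where $e_G(\cdot)$ counts edges of $G$ inside a vertex set. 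Greedily deleting one endpoint of each such edge yields an independent set $\phi(I)$ of $G$ with $|\phi(I)|\ge |I|-k\,|E(G)|$, hence $\alpha(G)-|\phi(I)|\le\alpha(H)-|I|$, which is the second L-reduction inequality with constant $1$. The first L-reduction inequality holds because on a cubic graph $|E(G)|=\tfrac32 n$ and $\alpha(G)\ge n/4$, so $\alpha(H)\le(1+6k)\,\alpha(G)$. Since {\sc Independent-Set} on cubic graphs is APX-hard (this is the $k=0$ case, due to Alimonti and Kann), APX-hardness transfers to every $2k$-subdivision class.

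For the second part I would follow the same template with the analogous (and, I expect, more delicate) identities $\gamma(H)=\gamma(G)+k\,|E(G)|$ and $i(H)=i(G)+k\,|E(G)|$ for the $3k$-subdivision $H$. The upper bounds come from a routing construction: starting from a minimum (independent) dominating set $D$ of $G$, contribute exactly $k$ internal vertices on each subdivided path, whose positions depend on whether each endpoint lies in $D$ and, if not, on the incident edge through which it is dominated; one verifies case by case that $k$ internal vertices per path dominate all internal vertices while keeping every endpoint dominated, and that the construction remains independent when $D$ is. For the matching lower bounds and the solution-lifting maps, start from a (independent) dominating set $D'$ of $H$, use that each subdivided path must contain at least $k$ members of $D'$ dominating its middle, and then, via a charging argument that assigns any $G$-vertex not already dominated by $D'\cap V(G)$ to an incident edge and uses that an edge charged from both endpoints necessarily contains an extra member of $D'$, patch $D'\cap V(G)$ into a feasible (independent) dominating set of $G$ of size at most $|D'|-k\,|E(G)|$. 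Since $|E(G)|\le\tfrac32 n$ and $\gamma(G),i(G)\ge n/4$ on degree-$\le 3$ graphs, these identities again give L-reductions with constants depending only on $k$ from {\sc Dominating-Set} and from the minimum independent dominating set problem on degree-$\le 3$ graphs, which are APX-hard in the $k=0$ case.

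The main obstacle I anticipate is the bookkeeping in the second part: establishing $\gamma(H)=\gamma(G)+k\,|E(G)|$ and $i(H)=i(G)+k\,|E(G)|$ \emph{exactly} (the independent-set analogue for $2k$-subdivisions is cleaner because an internal path contributes at most $k$ vertices regardless of context, whereas the domination version is sensitive to how the nearby $G$-vertices are dominated), and---harder still---showing that an arbitrary dominating set of $H$ can be efficiently reorganized so that each path is in a canonical form and its restriction to $V(G)$, after a bounded amount of local patching near the path endpoints, is feasible in $G$ without increasing the cardinality; the independence requirement in the $i(\cdot)$ case adds a further layer of endpoint case analysis. Everything else---APX membership, the cubic / degree-$\le 3$ arithmetic that converts the additive shift $k\,|E(G)|$ into a multiplicative constant, and the appeal to the known $k=0$ hardness results---is routine.
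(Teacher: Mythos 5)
The first thing to note is that the paper does not prove this statement at all: it is imported verbatim (``adapted from Theorem~5 of Chleb\'ik and Chleb\'ikov\'a'') and used as a black box, so there is no internal proof to compare yours against. Judged on its own terms, your argument for part (1) is complete and correct: the membership-in-APX observations are routine, the identity $\alpha(H)=\alpha(G)+k|E(G)|$ for the $2k$-subdivision is established in both directions (the count of at most $k$ internal vertices per path, dropping to $k-1$ when both endpoints are used, is exactly the right bookkeeping), the solution-lifting map via deleting one endpoint per surviving internal edge gives the second L-reduction inequality with $\beta=1$, and the cubic-graph arithmetic $|E(G)|=\tfrac32 n$, $\alpha(G)\ge n/4$ gives $\alpha=1+6k$. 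This is essentially the same reduction the present paper performs one level down (in Lemma~\ref{lem:kernel_npc} and Theorem~\ref{thm:APXkernel}) to push hardness from symmetric digraphs to point-point digraphs, so your write-up is consistent in spirit with how the authors use the result.

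Part (2), however, is a plan rather than a proof, and the part you defer is precisely the nontrivial content. The identities $\gamma(H)=\gamma(G)+k|E(G)|$ and $i(H)=i(G)+k|E(G)|$ for the $3k$-subdivision are correct and are indeed what Chleb\'ik and Chleb\'ikov\'a prove, and your upper-bound routing construction (one canonical block of $k$ internal vertices per path, positioned according to whether each endpoint is in $D$) does go through, including preservation of independence. But the lower bound --- showing that an arbitrary (independent) dominating set $D'$ of $H$ can be reorganized, path by path, into one containing exactly $k$ internal vertices per path whose trace on $V(G)$ is, after bounded local patching, a feasible (independent) dominating set of $G$ of size at most $|D'|-k|E(G)|$ --- is asserted via a ``charging argument'' that is never carried out, and the independent-domination case genuinely requires additional care because a naive patch near a path endpoint can break independence among the retained $G$-vertices. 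You correctly identify this as the obstacle, but identifying an obstacle is not the same as clearing it; as written, part (2) of the theorem is not proved. If the intent is only to justify citing the theorem, this is fine; if the intent is a self-contained proof, the domination lower bounds must be filled in.
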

Note that the independent sets, dominating sets and independent dominating sets of an undirected graph $G$ are exactly the independent sets, dominating sets (which are also the absorbing sets), and solutions (which are also the kernels) of the symmetric digraph of $G$. Clearly the symmetric digraph of $G$ is irreflexive. Since the {\sc Max-Kernel} problem is equivalent to the {\sc Independent-Set} problem in symmetric digraphs, we then have the following corollary of Theorem~\ref{thm:indsetsub}.
\begin{corollary}\label{cor:APX}
	The problems {\sc Independent-Set}, {\sc Absorbing-Set}, {\sc Min-Kernel} and {\sc Max-Kernel} problems are APX-complete on irreflexive symmetric digraphs of in- and out-degree at most 3.  
\end{corollary}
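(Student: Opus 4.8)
The plan is to derive the corollary from Theorem~\ref{thm:indsetsub} via the map $G\mapsto D_G$ that sends an undirected graph $G$ to its symmetric digraph $D_G$; this single map will serve as an approximation-preserving reduction for all four problems, and membership in APX will be checked directly. The relevant structural facts are that $D_G$ is irreflexive whenever $G$ is simple, and that every vertex of $D_G$ has in-degree and out-degree equal to its degree in $G$, so that $\Delta(G)\le 3$ forces $D_G$ to be an irreflexive symmetric digraph of in- and out-degree at most $3$. Moreover every graph class appearing in Theorem~\ref{thm:indsetsub} has maximum degree at most $3$: for $k=0$ these are precisely the $3$-regular graphs (part~(a)) and the graphs of degree at most $3$ (part~(b)), and a subdivision only introduces new vertices of degree $2$, so the same holds for $k\ge 1$. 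It therefore suffices to invoke Theorem~\ref{thm:indsetsub} with $k=0$; note that $D_G$ is computable from $G$ in polynomial time.

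Next I would use the correspondences recorded just before the statement: the independent sets, dominating sets, and independent dominating sets of $G$ are exactly the independent sets, absorbing sets, and kernels of $D_G$, with identical cardinalities. In addition, since a kernel of $D_G$ is precisely a maximal independent set of $G$ while every maximum independent set of $G$ is maximal, a maximum-cardinality kernel of $D_G$ has the same size as a maximum independent set of $G$. Hence $G\mapsto D_G$ carries feasible solutions to feasible solutions of equal value, in both directions, and is thus a strict (in particular approximation-preserving) reduction: from maximum independent set on $3$-regular graphs to {\sc Independent-Set} and to {\sc Max-Kernel} on the target class (for {\sc Max-Kernel}, a near-optimal kernel of $D_G$ is read back as a near-optimal maximal, hence maximum, independent set of $G$), and from minimum dominating set, resp.\ minimum independent dominating set, on graphs of degree at most $3$ to {\sc Absorbing-Set}, resp.\ {\sc Min-Kernel}. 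Since by Theorem~\ref{thm:indsetsub} the source problems are APX-hard, all four problems are APX-hard on irreflexive symmetric digraphs of in- and out-degree at most $3$.

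Finally, membership in APX follows from a single observation: if $H$ is an undirected graph with maximum degree $\Delta\le 3$, then any maximal independent set $S$ of $H$ satisfies $n/(\Delta+1)\le |S|\le n$ (the lower bound because $S$ dominates $H$ and each of its vertices dominates at most $\Delta+1$ vertices), so $S$ is simultaneously within a factor $\Delta+1$ of the maximum independent set, the minimum dominating set, and the minimum independent dominating set of $H$. Applying this to the underlying undirected graph of a digraph in our class, and translating through the correspondences above, a polynomial-time computable maximal independent set gives a constant-factor approximation for each of {\sc Independent-Set}, {\sc Max-Kernel}, {\sc Absorbing-Set} and {\sc Min-Kernel} on the class, so all four lie in APX; combined with the hardness this yields APX-completeness. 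The argument presents no real obstacle: the only points needing care are checking that the classes in Theorem~\ref{thm:indsetsub} have maximum degree at most $3$ and that the problem correspondences preserve optimum values exactly, so that $G\mapsto D_G$ really is a strict reduction.
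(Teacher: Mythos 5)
Your proposal is correct and follows essentially the same route as the paper: the paper also derives the corollary from Theorem~\ref{thm:indsetsub} at $k=0$ via the symmetric-digraph correspondence (independent sets, dominating sets, and independent dominating sets of $G$ being exactly the independent sets, absorbing sets, and kernels of its symmetric digraph, with {\sc Max-Kernel} coinciding with {\sc Independent-Set} there). The only difference is that you spell out APX membership via the maximal-independent-set $4$-approximation, which the paper leaves implicit in citing APX-completeness of the source problems.
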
 

Suppose that $k\geq 0$. Let $H$ be the $2k$-subdivision or $3k$-subdivision of an undirected graph and let $G$ be the symmetric digraph of $H$. Note that the independent sets, dominating sets, and independent dominating sets of $H$ are exactly the independent sets, dominating sets (which are also the absorbing sets), and solutions (which are also the kernels) of $G$. Therefore from Theorem~\ref{thm:indsetsub} we have that the {\sc Independent-Set} problem is APX-hard on irreflexive symmetric digraphs of $2k$-subdivisions of 3-regular graphs, and that the {\sc Absorbing-Set} and {\sc Min-Kernel} problems are APX-hard on the symmetric digraphs of $3k$-subdivisions of graphs of degree at most 3 for each $k\geq 0$. 
But note that for $k\geq 1$, the symmetric digraph of the $2k$-subdivision or $3k$-subdivision of an undirected graph contains an anti-directed walk of length 3 (unless the graph contains no edges), and therefore by Theorem~\ref{thm:point-point}, is not a point-point digraph. Thus we cannot directly deduce the APX-hardness of the problems under consideration for point-point digraphs from Theorem~\ref{thm:indsetsub}.

We define the subdivision of an irreflexive digraph, so that the techniques of Chleb\'ik and Chleb\'ikov\'a can be adapted for proving hardness results on point-point digraphs.

\begin{definition}
Let $G$ be an irreflexive digraph (i.e. $G$ contains no loops). For $k\geq 1$, define the \emph{$k$-subdivision} of $G$ to be the digraph $H$ having vertex set $V(H)=V(G)\cup\bigcup_{(i,j)\in E(G)} \{u_{ij}^1,u_{ij}^2,\ldots,u_{ij}^k\}$, obtained from $G$ by replacing each edge $(i,j)\in E(G)$ by a directed path $i,u_{ij}^1,u_{ij}^2,\ldots,u_{ij}^k,j$.
\end{definition}
Note that the $k$-subdivision of any irreflexive digraph is also an irreflexive digraph. We then have the following lemma.
\begin{lemma} \label{lem:subdiv_point}
For any $k\geq 1$, the $k$-subdivision of any irreflexive digraph is a point-point digraph.
\end{lemma}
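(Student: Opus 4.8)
The plan is to invoke the characterization of point-point digraphs from Theorem~\ref{thm:point-point}: a digraph is a point-point digraph if and only if it contains no anti-directed walk of length $3$. So, writing $H$ for the $k$-subdivision of an irreflexive digraph $G$ (with $k\geq 1$), it suffices to show that $H$ has no anti-directed walk of length $3$, and I would argue this by contradiction.

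Suppose $a,b,c,d$ were such a walk, so that $(a,b),(c,b),(c,d)\in E(H)$, $(a,d)\notin E(H)$, and, as remarked right after the definition, necessarily $a\neq c$ and $b\neq d$. The first key step is the observation that in a $k$-subdivision with $k\geq 1$ every ``new'' vertex $u_{ij}^\ell$ has exactly one in-neighbour in $H$ (namely $u_{ij}^{\ell-1}$, or $i$ when $\ell=1$) and exactly one out-neighbour (namely $u_{ij}^{\ell+1}$, or $j$ when $\ell=k$). Since $a$ and $c$ are two \emph{distinct} in-neighbours of $b$, the vertex $b$ cannot be a new vertex; hence $b$ is an original vertex of $G$, say $b=v$. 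Moreover, again because $k\geq 1$, no two original vertices of $G$ are adjacent in $H$, so every in-neighbour of $v$ in $H$ is a terminal path vertex of the form $u_{wv}^k$ for some edge $(w,v)\in E(G)$; in particular $c=u_{w'v}^k$ for some $(w',v)\in E(G)$.

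The second key step is then immediate: the unique out-neighbour of $c=u_{w'v}^k$ in $H$ is $v$, so $(c,d)\in E(H)$ forces $d=v=b$, contradicting $b\neq d$. Hence $H$ contains no anti-directed walk of length $3$, and by Theorem~\ref{thm:point-point} the digraph $H$ is a point-point digraph.

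I do not expect a genuine obstacle here; the only thing needing care is the bookkeeping tied to the hypothesis $k\geq 1$, which is precisely what makes the new vertices honest internal vertices of directed paths of length $k+1$ (so their in- and out-degrees are $1$) and what makes the original vertices pairwise non-adjacent in $H$ (so the in-neighbours of an original vertex are exactly the terminal path vertices entering it). It is also worth recalling explicitly why $a\neq c$ and $b\neq d$ are forced by the definition of anti-directed walk, since the argument relies on $b$ having two distinct in-neighbours and on $d\neq b$.
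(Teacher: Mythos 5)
Your proof is correct and follows essentially the same route as the paper: reduce to the absence of an anti-directed walk of length $3$ via Theorem~\ref{thm:point-point}, use the fact that subdivision vertices have in- and out-degree exactly one to force $b$ into $V(G)$ and $c$ into $V(H)\setminus V(G)$, and then derive the contradiction from $c$ having the two distinct out-neighbours $b$ and $d$. The only cosmetic difference is that you pin down $c$ as a terminal path vertex $u_{w'v}^k$, whereas the paper concludes directly from the out-degree bound; both are the same argument.
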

\begin{proof}
Let $k\geq 1$ and let $G$ be any irreflexive digraph.
By Theorem~\ref{thm:point-point}, it is enough to show that the $k$-subdivision $H$ of $G$ does not contain any anti-directed walk of length 3. Note that by the definition of $k$-subdivision, all the vertices in $V(H)\setminus V(G)$ have both in-degree and out-degree exactly equal to one. Further, for every vertex $v$ in $H$ such that $v\in V(G)$, we have that $N^+(v),N^-(v)\subseteq V(H)\setminus V(G)$. Suppose for the sake of contradiction that $u,v,w,x$ is an anti-directed walk of length 3 in $H$. Recall that we then have $(u,v),(w,v),(w,x)\in E(H)$, $u\neq w$ and $v\neq x$. By the above observations, we can then conclude that $v\in V(G)$ and further that $u,w\in V(H)\setminus V(G)$. Then since $(w,x)\in E(H)$ and $v\neq x$, we have that $w$ has out-degree at least 2, which contradicts our earlier observation that every vertex in $V(H)\setminus V(G)$ has out-degree exactly one. This proves the lemma.
\end{proof}

\begin{theorem}
The problem {\sc Independent-Set} is APX-hard for point-point digraphs.
\end{theorem}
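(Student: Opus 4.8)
The plan is to give a (trivially approximation-preserving) reduction from the problem of computing a maximum cardinality independent set in the $2$-subdivision of a $3$-regular graph, which is APX-hard by Theorem~\ref{thm:indsetsub}\ref{indpt} applied with $k=1$. The key observation is that, although the symmetric digraph of such a subdivision is \emph{not} a point-point digraph (it contains an anti-directed walk of length $3$, so Theorem~\ref{thm:point-point} rules it out), the \emph{directed} $2$-subdivision of a suitable orientation is a point-point digraph by Lemma~\ref{lem:subdiv_point}, and it has the right underlying undirected graph.

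In detail, I would start from an arbitrary $3$-regular graph $G$ and orient each of its edges arbitrarily to obtain an irreflexive digraph $\vec{G}$. Then I would let $H$ be the directed $2$-subdivision of $\vec{G}$, i.e.\ each arc $(i,j)$ of $\vec{G}$ is replaced by a directed path $i,u_{ij}^1,u_{ij}^2,j$. By Lemma~\ref{lem:subdiv_point}, $H$ is a point-point digraph. Forgetting orientations, each edge $ij$ of $G$ has been replaced by the path $i - u_{ij}^1 - u_{ij}^2 - j$, so the underlying undirected graph of $H$ is exactly the $2$-subdivision of $G$. Now I would invoke the elementary fact (noted in the literature survey) that a set of vertices is independent in a digraph if and only if it is independent in its underlying undirected graph; hence the independent sets of $H$ coincide with those of the $2$-subdivision of $G$, and in particular the two graphs have independent sets of exactly the same sizes, with feasible solutions transferring between them verbatim.

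Since $H$ is constructed from $G$ in polynomial (indeed linear) time, and feasible solutions and optimal values carry over with no change in cardinality, this is an L-reduction with both constants equal to $1$; thus any $c$-approximation algorithm for {\sc Independent-Set} on point-point digraphs would give a $c$-approximation for maximum independent set on $2$-subdivisions of $3$-regular graphs, and APX-hardness of {\sc Independent-Set} on point-point digraphs follows. I expect the only delicate points to be purely organizational: one must orient the graph \emph{first} and then apply the \emph{directed} subdivision operation of Section~\ref{sec:subdivision} (so that Lemma~\ref{lem:subdiv_point} is applicable rather than the symmetric-digraph construction underlying Corollary~\ref{cor:APX}), and one must select the instance of the Chleb\'ik--Chleb\'ikov\'a theorem corresponding to $2k$-subdivisions of $3$-regular graphs with $k=1$ (the choice $k=0$, the graph itself, would not give a point-point digraph). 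No nontrivial computation is involved beyond these checks.
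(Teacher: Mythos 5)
Your proposal is correct and follows essentially the same route as the paper: orient the $3$-regular graph arbitrarily, take the directed $2$-subdivision (which is a point-point digraph by Lemma~\ref{lem:subdiv_point} and whose underlying undirected graph is the $2$-subdivision of the original graph), and transfer independent sets verbatim using Theorem~\ref{thm:indsetsub}\ref{indpt}. The only difference is that you spell out the choice $k=1$ and the L-reduction constants explicitly, which the paper leaves implicit.
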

\begin{proof}
We show a reduction from the {\sc Independent Set} problem in undirected graphs. Let $G$ be an undirected graph. Let $D$ be the digraph obtained by assigning an arbitrary direction to each edge of $G$. Clearly, $D$ is irreflexive. Let $D'$ be a 2-subdivision of the directed graph $D$. Note that the underlying undirected graph $H$ of $D'$ is a 2-subdivision of $G$. Thus, given the 2-subdivision of an undirected graph $G$, we can construct in polynomial time the graph $D'$, and moreover, by Lemma~\ref{lem:subdiv_point}, $D'$ is a point-point digraph. Since the independent sets of the 2-subdivision of $G$ are exactly the independent sets of $D'$, we can conclude from Theorem~\ref{thm:indsetsub}\ref{indpt} that the problem {\sc Independent-Set} is APX-hard for point-point digraphs.
\end{proof}
\subsection{Kernel}
\begin{lemma}\label{lem:kernel_npc}
Let $G$ be an irreflexive digraph and let $k\geq 1$. Then $G$ has a kernel if and only if the $2k$-subdivision of $G$ has a kernel. Moreover, $G$ has a kernel of size $q$ if and only if the $2k$-subdivision of $G$ has a kernel of size $q+km$. Further, given a kernel of size $q+km$ of the $2k$-subdivision of $G$, we can construct a kernel of size $q$ of $G$ in polynomial time.
\end{lemma}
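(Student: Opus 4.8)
The plan is to prove both directions of the equivalence, and the algorithmic claim, by analyzing arc-by-arc how any kernel must behave on each subdivided path. Write $H$ for the $2k$-subdivision of $G$, put $m=|E(G)|$, and for an arc $(i,j)\in E(G)$ let $P_{ij}\colon i\to u^1_{ij}\to u^2_{ij}\to\cdots\to u^{2k}_{ij}\to j$ be the corresponding directed path in $H$. I would use throughout that every edge of $H$ lies on exactly one such path, that the internal vertices have in- and out-degree $1$, and that two original vertices are non-adjacent in $H$.

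The heart of the proof is a \emph{rigidity statement}: if $K'$ is any kernel of $H$, then on every path $P_{ij}$ the set $K'$ is essentially forced. First I would note that the only absorbing constraints $P_{ij}$ imposes internally are $u^a_{ij}\notin K'\Rightarrow u^{a+1}_{ij}\in K'$ for $1\le a\le 2k-1$, and $u^{2k}_{ij}\notin K'\Rightarrow j\in K'$ (since the out-neighbour of $u^{2k}_{ij}$ is $j$). Propagating these from the $j$-end gives a dichotomy. If $j\in K'$, then $u^{2k}_{ij}\notin K'$ by independence; but then $u^{2k-1}_{ij}$ cannot be absorbed unless $u^{2k-1}_{ij}\in K'$, and iterating forces $K'\cap\{u^1_{ij},\dots,u^{2k}_{ij}\}=\{u^1_{ij},u^3_{ij},\dots,u^{2k-1}_{ij}\}$, hence $u^1_{ij}\in K'$ and so $i\notin K'$. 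If $j\notin K'$, then $u^{2k}_{ij}\in K'$ (else unabsorbed), and iterating forces $K'\cap\{u^1_{ij},\dots,u^{2k}_{ij}\}=\{u^2_{ij},u^4_{ij},\dots,u^{2k}_{ij}\}$, hence $u^1_{ij}\notin K'$. I will record two consequences: (i) each path contributes \emph{exactly} $k$ internal vertices to $K'$, so $|K'|=|K'\cap V(G)|+km$; and (ii) for every arc, $u^1_{ij}\in K'$ forces $j\in K'$, and $j\in K'$ forces $i\notin K'$. Getting this forcing argument stated cleanly (in particular the downward induction and the endpoint implications) is where essentially all the work lies, and I expect it to be the main obstacle; the rest is bookkeeping.

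Granting rigidity, the backward direction and the polynomial-time construction are immediate: from a kernel $K'$ of $H$ with $|K'|=q+km$, set $K=K'\cap V(G)$, computable in linear time. By (ii), $K$ is independent in $G$, for if $(i,j)\in E(G)$ with $i,j\in K\subseteq K'$ then $j\in K'$ would force $i\notin K'$. For absorption, take $v\in V(G)\setminus K'$; since $K'$ absorbs $v$ in $H$ and every out-neighbour of $v$ in $H$ has the form $u^1_{v\ell}$, some $u^1_{v\ell}\in K'$, and then (ii) gives $\ell\in K'\cap V(G)=K$ with $(v,\ell)\in E(G)$. So $K$ is a kernel of $G$, and by (i), $|K|=(q+km)-km=q$. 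Since $|K'|=|K'\cap V(G)|+km$ for \emph{any} kernel $K'$ of $H$, this also shows that $H$ having a kernel forces $G$ to have one.

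For the forward direction I would start from a kernel $K$ of $G$ with $|K|=q$, fix for each $v\notin K$ an out-neighbour $f(v)\in K$, and define $K'$ by keeping $K$ on the original vertices and inserting on each path $P_{ij}$ the rigidity pattern: $\{u^1_{ij},u^3_{ij},\dots,u^{2k-1}_{ij}\}$ if $j\in K$ (note $i\notin K$ in that case, by independence of $K$), and $\{u^2_{ij},u^4_{ij},\dots,u^{2k}_{ij}\}$ otherwise. Each path adds exactly $k$ vertices, so $|K'|=q+km$. Independence of $K'$ is then checked arc-locally: within a path the chosen vertices have indices differing by $2$, and at an endpoint $u^1_{ij}$ is chosen only when $i\notin K$ while $u^{2k}_{ij}$ is chosen only when $j\notin K$. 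For absorption: internal vertices are absorbed within their own path by the construction; an original $v\notin K$ is absorbed because the path $P_{v,f(v)}$ has its target $f(v)$ in $K$, so $u^1_{v,f(v)}\in K'$. Hence $K'$ is a kernel of $H$ of size $q+km$, which finishes the equivalence.
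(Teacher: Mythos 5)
Your proposal is correct and follows essentially the same route as the paper: the forward direction uses the identical parity pattern on each subdivided path, and your ``rigidity statement'' is exactly the paper's Claim that $u_{ij}^t\in K'$ iff $u_{ij}^{t+1}\notin K'$, propagated from the endpoints to get the alternation, the endpoint implications, and the count of exactly $k$ internal vertices per arc.
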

\begin{proof}
	Let $H$ be the $2k$-subdivision of $G$ and let $\bigcup_{(i,j)\in E(G)} \{u_{ij}^1,u_{ij}^2,\ldots,u_{ij}^{2k}\}$ be the vertices in $V(H)\setminus V(G)$ as defined in Section~\ref{sec:subdivision}.
	
	Suppose that $G$ has a kernel $K\subseteq V(G)$. We define the set $K'\subseteq V(H)$ as $K'=K\cup\bigcup_{(i,j)\in E(G)} S(i,j)$, where
	$$S(i,j)=\begin{cases}
	\{u_{ij}^{2l}: l\in \{1,2,\ldots,k\}\}, & \text{ if } j\notin K \\
	\{u_{ij}^{2l-1}: l\in \{1,2,\ldots,k\}\}, & \text{ if } j\in K 
	\end{cases}$$  
	
	We claim that $K'$ is a kernel in $H$. Note that as $K$ is an independent set in $G$, for any edge $(i,j)\in E(G)$, we have that $i\notin K$ whenever $j\in K$. Thus by the definition of $2k$-subdivision and $K'$, it is easy to see that $K'$ is an independent set in $H$. Therefore in order to prove our claim, it is enough to show that $K'$ is an absorbing set in $H$. Consider any $(i,j)\in E(G)$. It is clear from the definition of $K'$ that for each $t\in\{1,2,\ldots,2k-1\}$, either the vertex $u_{ij}^t$ or $u_{ij}^{t+1}$ is in $K'$. Further, we also have that either the vertex $u_{ij}^{2k}$ or $j$ is in $K'$. Thus for every vertex $x\in V(H)\setminus V(G)$, either $x$ or one of its out-neighbours is in $K'$. Now consider a vertex $i$ in $V(H)$ such that $i\in V(G)$. If $i\in K$, then $i\in K'$. On the other hand if $i\notin K$, then since $K$ is a kernel of $G$, there exists an out-neighbour $j$ of $i$ such that $j\in K$, in which case we have $u_{ij}^1\in K'$. Thus in any case, either $i$ or an out-neighbour of $i$ is in $K'$. This shows that $K'$ is a kernel of $H$.	
	
	Note that by the definition of $K'$, we have $|K'\setminus K|=km$. Therefore if $|K|=q$ then $|K'|= q+km$.
	
	Now suppose that $K'\subseteq V(H)$ is a kernel in $H$.
	\begin{claim}\label{claim:path}
		Let $(i,j)\in E(G)$ and $t\in\{1,2,\ldots,2k-1\}$. Then $u_{ij}^t\in K'$ if and only if $u_{ij}^{t+1}\notin K'$.
	\end{claim}
	If $u_{ij}^t\in K'$, then since $K'$ is an independent set in $H$, we have $u_{ij}^{t+1}\notin K'$. On the other hand if $u_{ij}^t\notin K'$, then since $K'$ is an absorbing set in $H$, we have $u_{ij}^{t+1}\in K'$. 
	This proves the claim.\medskip
	
	We first show that $K'\cap V(G)$ is an independent set of $G$. Consider any edge $(i,j)\in E(G)$. Suppose that $i\in K'$. Then since $K'$ is an independent set in $H$, we have $u_{ij}^1\notin K'$. Applying Claim~\ref{claim:path} repeatedly, we have that $u_{ij}^{2k}\in K'$, which implies that $j\notin K'$. Thus, the set $K'\cap V(G)$ is an independent set in $G$. Next, we note that $K'\cap V(G)$ is also an absorbing set of $G$. To see this, consider any vertex $i$ of $H$ such that $i\in V(G)$. If $i\notin K'$, then since $K'$ is an absorbing set in $H$, there exists $(i,j)\in E(G)$ such that $u_{ij}^1\in K'$. Applying Claim~\ref{claim:path} repeatedly, we have that $u_{ij}^{2k}\notin K'$. Then since $K'$ is an absorbing set in $H$, we have that $j\in K'$. Thus $K'\cap V(G)$ is an absorbing set of $G$, which implies that $K'\cap V(G)$ is a kernel of $G$.
	
	Note that by Claim~\ref{claim:path}, we have that $|K' \setminus V(G)|\leq km$. Let $ij\in E(G)$. Since $K'$ is an absorbing set in $H$, for each $t\in \{1,2,\ldots,2k-1\}$, either $u_{ij}^t\in K'$ or $u_{ij}^{t+1}\in K'$. This implies that $|K'\cap \{u_{ij}^1,u_{ij}^2,\ldots,u_{ij}^{2k}\}|\geq k$. This further implies that $|K' \setminus V(G)|\geq km$. Therefore we can conclude that $|K' \setminus V(G)|= km$. Thus, if $|K'|= q+km$ then $|K'\cap V(G)|= q$. Clearly, given the kernel $K'$ of $H$, the kernel $K'\cap V(G)$ of $G$ can be constructed in polynomial time.	
\end{proof}

\begin{theorem}
The problem {\sc Kernel} is NP-complete for point-point digraphs.
\end{theorem}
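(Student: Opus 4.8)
The plan is to assemble the two preceding lemmas into a polynomial-time many-one reduction from a version of {\sc Kernel} that is already known to be NP-complete. First I would dispose of membership in NP: given a digraph $G$ and a candidate set $S\subseteq V(G)$, one verifies in polynomial time that $S$ is independent and absorbing, so a kernel is a polynomial-size certificate and {\sc Kernel} is in NP. This part is routine.

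For NP-hardness I would start from Fraenkel's result~\cite{fraenkel1981planar} that {\sc Kernel} is NP-complete already for planar digraphs of total degree at most $3$ with in- and out-degrees at most $2$ (Chv\'atal's theorem~\cite{chvatal1973computational} would serve equally well). The crucial observation is that such digraphs are irreflexive---they contain no loops---so both Lemma~\ref{lem:subdiv_point} and Lemma~\ref{lem:kernel_npc} apply to them. Given such a digraph $G$, form its $2$-subdivision $H$, i.e.\ the $2k$-subdivision with $k=1$; this can be computed in time polynomial in $|V(G)|+|E(G)|$. By Lemma~\ref{lem:subdiv_point}, $H$ is a point-point digraph, and by Lemma~\ref{lem:kernel_npc} (with $k=1$), $G$ has a kernel if and only if $H$ has a kernel. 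Hence $G\mapsto H$ is a polynomial-time many-one reduction from {\sc Kernel} on irreflexive bounded-degree planar digraphs to {\sc Kernel} on point-point digraphs, establishing NP-hardness. Combined with membership in NP, this proves the theorem.

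I do not expect any genuinely hard step here, since the substance has been isolated into Lemmas~\ref{lem:subdiv_point} and~\ref{lem:kernel_npc}. The only point requiring care is that the NP-complete source instance we quote must be irreflexive, so that those lemmas are applicable; but every standard NP-completeness construction for {\sc Kernel} produces loopless digraphs, so this is not a real obstacle. One could equally well use any fixed $k\ge 1$ and the corresponding $2k$-subdivision, with $k=1$ being the most economical choice.
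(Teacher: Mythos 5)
Your proposal is correct and follows essentially the same route as the paper: take the $2$-subdivision of an irreflexive hard instance and invoke Lemma~\ref{lem:subdiv_point} together with Lemma~\ref{lem:kernel_npc}. The only cosmetic differences are that the paper reduces from {\sc Kernel} on arbitrary digraphs (first stripping loops, which preserves kernels) rather than from Fraenkel's bounded-degree planar instances, and that it leaves NP membership implicit.
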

\begin{proof}
We show a reduction from the {\sc Kernel} problem in general digraphs to the {\sc Kernel} problem in point-point digraphs. Let $G$ be any digraph. Let $G'$ be the digraph obtained from $G$ by removing all loops in it. Then note that the kernels in $G$ and $G'$ are exactly the same. Let $H$ be a 2-subdivision of $G$. Since $G'$ is an irreflexive digraph, by Lemma~\ref{lem:kernel_npc} we have that $G'$ has a kernel if and only if $H$ has a kernel. Also, we have by Lemma~\ref{lem:subdiv_point} that $H$ is a point-point digraph. Therefore we can conclude that $G$ has a kernel if and only if the point-point digraph $H$ has a kernel. Thus a polynomial time algorithm that solves the {\sc Kernel} problem in point-point digraphs can be used to solve the {\sc Kernel} problem in general digraphs in polynomial time. This proves the theorem.
\end{proof}

Note that {\sc Kernel} is known to be NP-complete even on planar digraphs having degree at most 3 and in- and out-degrees at most 2~\cite{fraenkel1981planar}. The above reduction transforms the input digraph in such a way that every newly introduced vertex has in- and out-degree exactly 1 and the in- and out-degrees of the original vertices remain the same. Moreover, if the input digraph is planar, the digraph produced by the reduction is also planar. Thus we can conclude that the problem {\sc Kernel} remains NP-complete even for planar point-point digraphs having degree at most 3 and in- and out-degrees at most 2.
\bigskip

An \emph{L-reduction} as defined below is an approximation-preserving reduction for optimization problems.
\begin{definition}[\cite{papadimitriou1991optimization}]\label{def:lreduction}
	Let $A$ and $B$ be two optimization problems with cost functions $c_A$ and $c_B$ respectively. Let $f$ be a polynomially computable function that maps the instances of problem $A$ to the instances of problem $B$. Then $f$ is said to be an L-reduction form $A$ to $B$ if there exist a polynomially computable function $g$ and constants $\alpha,\beta \in (0,\infty)$ such that the following conditions hold:
	\begin{myenumerate}
		\item If $y'$ is a solution to $f(x)$ then $g(y')$ is a solution to $x$, where $x$ is an instance of the problem $A$.
		\item $OPT_B(f(x))\leq \alpha OPT_A(x)$, where $OPT_B(f(x))$ and $OPT_A(x)$ denote the optimum value of respective instances for the problems $B$ and $A$ respectively.
		\item $|OPT_A(x)-c_A(g(y'))|\leq \beta |OPT_B(f(x)-c_B(y')|$.
	\end{myenumerate}
\end{definition}
In order to prove that a problem $P$ is APX-hard, it is enough to show that the problem $P$ has an L-reduction from an APX-hard problem.

\begin{theorem}\label{thm:APXkernel}
For $k\geq 1$, the problems {\sc Min-Kernel} and {\sc Max-Kernel} are APX-hard for $2k$-subdivisions of irreflexive symmetric digraphs having in- and out-degree at most 3. Consequently, the problems {\sc Min-Kernel} and {\sc Max-Kernel} are APX-hard for point-point digraphs having in- and out-degree at most 3.
\end{theorem}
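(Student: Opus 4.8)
The plan is to give an L-reduction (in the sense of Definition~\ref{def:lreduction}) from {\sc Min-Kernel} (resp.\ {\sc Max-Kernel}) on irreflexive symmetric digraphs of in- and out-degree at most $3$, which is APX-hard by Corollary~\ref{cor:APX}, to {\sc Min-Kernel} (resp.\ {\sc Max-Kernel}) on the class of $2k$-subdivisions of such digraphs. The reduction map $f$ sends an instance $G$ (with $m$ arcs, no loops, in- and out-degree at most $3$) to its $2k$-subdivision $H$. This is polynomial-time computable; $H$ is irreflexive, every subdivision vertex has in-degree and out-degree exactly $1$ while every original vertex keeps its degrees, so $H$ again has in- and out-degree at most $3$; and by Lemma~\ref{lem:subdiv_point}, $H$ is a point-point digraph. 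Thus the same construction will simultaneously yield the APX-hardness for $2k$-subdivisions of bounded-degree irreflexive symmetric digraphs and, via the last remark, for bounded-degree point-point digraphs. As the solution map $g$ I take $K'\mapsto K'\cap V(G)$; by Lemma~\ref{lem:kernel_npc}, if $K'$ is a kernel of $H$ then $K'\cap V(G)$ is a kernel of $G$, which is the first required property of an L-reduction.

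For the second property I would use the identity $OPT_B(f(G))=OPT_A(G)+km$, which follows from Lemma~\ref{lem:kernel_npc} for both the minimization and the maximization version (that lemma matches kernels of $G$ of size $q$ with kernels of $H$ of size $q+km$). It then remains to exhibit a universal constant $c>0$ with $OPT_A(G)\ge c\,m$. I would argue this through the underlying undirected graph $\hat G$ of $G$: kernels of a symmetric digraph are exactly the independent dominating sets of $\hat G$, which has maximum degree at most $3$. Assuming (harmlessly, since isolated vertices lie in every kernel and can be set aside) that $\hat G$ has no isolated vertices, we get $|V(\hat G)|\ge \tfrac{2}{3}|E(\hat G)|=\tfrac{m}{3}$; a maximum independent set of $\hat G$ has size at least $|V(\hat G)|/4$, and a minimum independent dominating set of $\hat G$ also has size at least $|V(\hat G)|/4$ because each of its vertices dominates at most $4$ vertices. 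Hence $OPT_A(G)\ge m/12$ in both cases, so $OPT_B(f(G))\le (12k+1)\,OPT_A(G)$, and $\alpha=12k+1$ works (when $m=0$ we have $H=G$ and the reduction is trivial).

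For the third property the crucial fact---also established inside the proof of Lemma~\ref{lem:kernel_npc}---is that \emph{every} kernel $K'$ of $H$ satisfies $|K'\setminus V(G)|=km$. Hence $c_B(K')=|K'|=|K'\cap V(G)|+km=c_A(g(K'))+km$, and together with $OPT_B(f(G))=OPT_A(G)+km$ this gives $|OPT_A(G)-c_A(g(K'))|=|OPT_B(f(G))-c_B(K')|$, so $\beta=1$ suffices. This establishes the L-reduction and hence the APX-hardness of {\sc Min-Kernel} and {\sc Max-Kernel} on $2k$-subdivisions of irreflexive symmetric digraphs of in- and out-degree at most $3$; since every such $2k$-subdivision is a point-point digraph of in- and out-degree at most $3$, the final ``consequently'' clause is immediate from the same reduction.

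I expect the only real obstacle to be the second L-reduction condition, i.e.\ ensuring a single linear lower bound $OPT_A(G)=\Omega(m)$ holds simultaneously under both the ``maximum independent set'' and the ``minimum independent dominating set'' interpretations of $OPT_A(G)$ and disposing of the degenerate instances. The first and third conditions are essentially bookkeeping on top of Lemmas~\ref{lem:subdiv_point} and~\ref{lem:kernel_npc}.
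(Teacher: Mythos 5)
Your proposal is correct and follows essentially the same route as the paper's proof: an L-reduction via the $2k$-subdivision, using Corollary~\ref{cor:APX} as the source of hardness, Lemmas~\ref{lem:subdiv_point} and~\ref{lem:kernel_npc} for the structure of the reduction, and the same constants $\alpha=1+12k$, $\beta=1$ obtained from the lower bound $OPT\geq n/4\geq m/12$ (the paper phrases this directly via ``every absorbing set has size at least $n/4$'' rather than through the underlying undirected graph, but the bound is identical). Your extra care about isolated vertices and the $m=0$ case is harmless bookkeeping that the paper leaves implicit.
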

\begin{proof}
	By Corollary~\ref{cor:APX}, we have that the problems {\sc Min-Kernel} and {\sc Max-Kernel} are APX-complete for irreflexive symmetric digraphs having in- and out-degree at most 3. Here we give an L-reduction from the {\sc Min-Kernel} and {\sc Max-Kernel} problems for irreflexive symmetric digraphs having in- and out-degree at most 3 to the {\sc Min-Kernel} and {\sc Max-Kernel} problems for $2k$-subdivisions of irreflexive symmetric digraphs having in- and out-degree at most 3. Let $G$ be an irreflexive symmetric digraph of in- and out-degree at most 3, where $|V(G)|=n$ and $|E(G)|=m$. For $k\geq 1$, let $H$ be the $2k$-subdivision of $G$. Clearly, $H$ can be constructed in polynomial time. And let $K(G)$ (resp. $K'(G)$) and $K(H)$ (resp. $K'(H)$) denote a minimum (resp. maximum) sized kernel in $G$ and $H$ respectively. Since $G$ is a digraph of in- and out-degree at most 3, we have that $m\leq 3n$.
	Note that every absorbing set of $G$ has size at least $\frac{n}{4}$, since each vertex has at most 3 in-neighbours. As a minimum (resp. maximum) kernel of $G$ is an absorbing set of $G$, we have $|K(G)|=q\geq \frac{n}{4}$ (resp. $|K'(G)|=q'\geq \frac{n}{4}$). By Lemma~\ref{lem:kernel_npc}, we have that  $|K(H)|= q+km$ (resp. $K'(H)=q'+km$). Therefore, $\frac{|K(H)|}{|K(G)|}\leq 1+12k$ (resp. $\frac{|K(H')|}{|K(G')|}\leq 1+12k$). We can now choose $\alpha=1+12k$ and $\beta=1$ so that our reduction satisfies the requirements of Definition~\ref{def:lreduction} (Lemma~\ref{lem:kernel_npc} guarantees that condition (3) of Definition~\ref{def:lreduction} holds, and also that the function $g$ in the definition is polynomial time computable). Thus our reduction is an L-reduction, which implies that the problems {\sc Min-Kernel} and {\sc Max-Kernel} are APX-hard for $2k$-subdivisions of irreflexive symmetric digraphs having in- and out-degree at most 3. Now by Lemma~\ref{lem:subdiv_point}, we have that the $2k$-subdivision of any irreflexive digraph $G$ is a point-point digraph. Therefore, now we can conclude that the problems {\sc Min-Kernel} and {\sc Max-Kernel} are APX-hard for point-point digraphs.
\end{proof}

\subsection{Minimum absorbing set}
\begin{lemma}\label{lem:abs_npc}
	Let $G$ be an irreflexive digraph and let $k\geq 1$. Then $G$ has an absorbing set of size at most $q$ if and only if the $2k$-subdivision of $G$ has an absorbing set of size at most $q+km$. Further, given an absorbing set of size at most $q+km$ in the $2k$-subdivision of $G$, we can construct in polynomial time an absorbing set of size at most $q$ in $G$.
\end{lemma}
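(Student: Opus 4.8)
The plan is to follow the template of Lemma~\ref{lem:kernel_npc}, noting that dropping the independence requirement makes the ``alternation'' along each subdivided path only one-sided, which forces a small charging argument in the backward direction.

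For the ``only if'' direction, given an absorbing set $X$ of $G$ with $|X|\le q$, I would set $X'=X\cup\bigcup_{(i,j)\in E(G)}S(i,j)$, where $S(i,j)=\{u_{ij}^1,u_{ij}^3,\dots,u_{ij}^{2k-1}\}$ if $j\in X$ and $S(i,j)=\{u_{ij}^2,u_{ij}^4,\dots,u_{ij}^{2k}\}$ if $j\notin X$. Each $S(i,j)$ has exactly $k$ vertices, and they are pairwise disjoint and disjoint from $V(G)$, so $|X'|=|X|+km\le q+km$. One then checks directly that every vertex of $H$ is in $X'$ or has its out-neighbour in $X'$: along a path $i,u_{ij}^1,\dots,u_{ij}^{2k},j$ the ``odd'' choice leaves $u_{ij}^{2k}$ to be absorbed by $j\in X$ and the ``even'' choice puts $u_{ij}^{2k}$ itself into $X'$; and a vertex $i\in V(G)\setminus X$ is absorbed because $X$ being absorbing gives an out-neighbour $j$ of $i$ with $j\in X$, hence $u_{ij}^1\in X'$.

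For the ``if'' direction and the construction of the accompanying map $g$, start from an absorbing set $X'$ of $H$ with $|X'|\le q+km$. Partitioning the path vertices of each edge $(i,j)$ into the $k$ consecutive pairs $\{u_{ij}^{2s-1},u_{ij}^{2s}\}$ and using that $u_{ij}^{2s-1}\notin X'$ forces $u_{ij}^{2s}\in X'$, we get at least one vertex of $X'$ per pair, hence $|X'\cap(V(H)\setminus V(G))|\ge km$ and $|X'\cap V(G)|\le q$. This set need not absorb $G$, since a vertex $v\notin X'$ only sees an out-neighbour $u_{vj}^1\in X'$ rather than $j$ itself, so for each $v\in V(G)\setminus X'$ I would fix a witness $\phi(v)=j$ with $u_{v\phi(v)}^1\in X'$ and take $X=(X'\cap V(G))\cup\{\phi(v):v\in V(G)\setminus X',\ \phi(v)\notin X'\}$. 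It is immediate that $X$ absorbs $G$ (if $v\notin X$ then $v\notin X'$, and $\phi(v)\in X$ whether or not $\phi(v)\in X'$) and that $X$ is polynomial-time computable from $X'$; the content is the bound $|X|\le q$.

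The main obstacle, and the place where the absorbing case genuinely differs from the kernel case, is exactly this last bound. I would establish the sub-claim that if $u_{ij}^1\in X'$ and $j\notin X'$ then at least $k+1$ path vertices of $(i,j)$ lie in $X'$ --- because $u_{ij}^{2k}$ is forced in (its only out-neighbour $j$ is not in $X'$), $u_{ij}^1$ is in by assumption, and the $k-1$ middle pairs $\{u_{ij}^{2s},u_{ij}^{2s+1}\}$ still contribute one each. The edges $(v,\phi(v))$ with $\phi(v)\notin X'$ are pairwise distinct (distinct tails), so if there are $d$ of them then $|X'\cap(V(H)\setminus V(G))|\ge km+d$, giving $|X'\cap V(G)|\le q-d$, while the added vertices number at most $d$; hence $|X|\le(q-d)+d=q$. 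Once the sub-claim is in hand the charging is immediate, and since the statement only asks for size \emph{at most} $q+km$ (rather than an exact equality), there is nothing further to reconcile.
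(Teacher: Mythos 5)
Your proposal is correct. The forward direction is identical to the paper's (same even/odd choice of path vertices keyed to whether $j$ is in the absorbing set). The backward direction, however, is organized differently. The paper first \emph{normalizes} the given absorbing set $A'$ of $H$: it identifies the edges whose subdivision path carries more than $k$ vertices of $A'$, replaces the path vertices on those edges by the $k$ odd-indexed ones while adding the head $j$, verifies that the result $A''$ is still absorbing with $|A''|\leq|A'|$ and meets every path in exactly $k$ vertices, and only then reads off $A''\cap V(G)$ as the absorbing set of $G$. You instead work with $X'$ as given: you take $X'\cap V(G)$, repair it by adding a witness head $\phi(v)$ for each unabsorbed $v$, and pay for each added head via the sub-claim that an edge with $u_{ij}^1\in X'$ and $j\notin X'$ forces $k+1$ (rather than $k$) path vertices into $X'$ --- the two ``forced'' endpoints $u_{ij}^1$ and $u_{ij}^{2k}$ plus one per middle pair, which is exactly the one-sided alternation you flag. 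The distinct-tails observation makes the charging injective, so $|X|\leq(q-d)+d=q$. Both arguments are sound and of comparable length; the paper's normalization has the mild advantage of producing a set meeting each path in exactly $k$ vertices (so the final count is an equality $|A|=|A''|-km$), whereas your charging argument avoids having to re-verify absorption after surgery on $A'$, at the cost of the extra counting sub-claim.
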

\begin{proof}
	Let $H$ be the $2k$-subdivision of $G$ and let $\bigcup_{(i,j)\in E(G)} \{u_{ij}^1,u_{ij}^2,\ldots,u_{ij}^{2k}\}$ be the vertices in $V(H)\setminus V(G)$ as defined in Section~\ref{sec:subdivision}. 
	
	Suppose that $G$ has an absorbing set $A\subseteq V(G)$ such that $|A|\leq q$. We define the set $A'\subseteq V(H)$ as $A'=A\cup\bigcup_{(i,j)\in E(G)} A(i,j)$, where
	$$A(i,j)=\begin{cases}
	\{u_{ij}^{2l}: l\in \{1,2,\ldots,k\}\}, & \text{ if } j\notin A\\
	\{u_{ij}^{2l-1}: l\in \{1,2,\ldots,k\}\}, & \text{ if } j\in A
	\end{cases}$$  
	We claim that $A'$ is an absorbing set in $H$ of size at most $q+km$. Consider any $(i,j)\in E(G)$. It is clear from the definition of $A'$ that for each $t\in\{1,2,\ldots,2k-1\}$, either the vertex $u_{ij}^t$ or $u_{ij}^{t+1}$ is in $A'$. Further, we also have that either the vertex $u_{ij}^{2k}$ or $j$ is in $A'$. Thus for every vertex $x\in V(H)\setminus V(G)$, either $x$ or one of its out-neighbours is in $A'$. Now consider a vertex $i$ in $H$ such that $i\in V(G)$. If $i\in A$, then $i\in A'$. On the other hand if $i\notin A$, then since $A$ is an absorbing set in $G$, there exists an out-neighbour $j$ of $i$ such that $j\in A$, in which case we have $u_{ij}^1\in A'$. Thus in any case, either $i$ or an out-neighbour of $i$ is in $A'$. This shows that $A'$ is an absorbing set in $H$. As $A'$ is obtained from $A$ by adding exactly $k$ new vertices corresponding to each of the $m$ edges in $G$, we also have that $|A'|\leq q+km$. This proves our claim.
	
	For any set $S\subseteq V(H)$ and $(i,j)\in E(G)$, we define $S_{ij}=S\cap \{u_{ij}^1,u_{ij}^2,\ldots,u_{ij}^{2k-1},u_{ij}^{2k}\}$.
	Now suppose that $H$ has an absorbing set $A'$ of size at most $q+km$.
	Let $F=\{(i,j)\in E(G)\colon |A'_{ij}|>k\}$. Now define the set $A''=(A'\setminus\bigcup_{(i,j)\in F} A'_{ij}) \cup \bigcup_{(i,j)\in F}(\{u_{ij}^{2l-1}: l\in \{1,2,\ldots,k\}\}\cup\{j\})$. Clearly, $A''$ is also an absorbing set in $H$, $|A''|\leq|A'|\leq q+km$. Since $A''$ is an absorbing set in $H$, for $(i,j)\in E(G)$ and each $t\in \{1,2,\ldots,2k-1\}$, either $u_{ij}^t\in A''$ or $u_{ij}^{t+1}\in A''$. This implies that $|A''_{ij}|\geq k$. From the construction of $A''$, it is clear that for each $(i,j)\in E(G)$, $|A''_{ij}|\leq k$. Therefore, we can conclude that $|A''_{ij}|= k$ for each $(i,j)\in E(G)$. It then follows that for each $t\in\{1,2,\ldots,2k-1\}$, exactly one of $u_{ij}^t,u_{ij}^{t+1}$ is in $A''$. We now claim that $A=A''\cap V(G)$ is an absorbing set in $G$. Let $i\in V(G)$. Suppose that $i\notin A$, which means that $i\notin A''$. Since $A''$ is an absorbing set in $H$, we have that there exists a vertex $j\in N^+_{G}(i)$ such that $u_{ij}^1\in A''$. By our earlier observation that exactly one of $u_{ij}^t,u_{ij}^{t+1}\in A''$ for each $t\in\{1,2,\ldots,2k-1\}$, we now have that $u_{ij}^{2k}\notin A''$. This would imply that $j\in A''$. Therefore we can conclude that for any vertex $i\in V(G)$, either $i\in A$ or one of its out-neighbors is in $A$. This implies that $A$ is an absorbing set in $G$. Since $|A''_{ij}|= k$ for each $(i,j)\in E(G)$ and $|E(G)|=m$, we now have that $|A|=|A''|-km\leq q$. It is also easy to see that given the absorbing set $A'$ of $H$, we can construct $A''$ and then $A''\cap V(G)$ in polynomial time. This proves the lemma.
\end{proof}

\begin{theorem}
For $k\geq 1$, the problem {\sc Absorbing-Set} is APX-hard for $2k$-subdivisions of irreflexive symmetric digraphs having in- and out-degree at most 3. Consequently, the problem {\sc Absorbing-Set} is APX-hard for point-point digraphs having in- and out-degree at most 3.
\end{theorem}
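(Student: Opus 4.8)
The plan is to follow the template of the proof of Theorem~\ref{thm:APXkernel}, replacing Lemma~\ref{lem:kernel_npc} by Lemma~\ref{lem:abs_npc}. By Corollary~\ref{cor:APX}, {\sc Absorbing-Set} is APX-complete on irreflexive symmetric digraphs of in- and out-degree at most $3$, so it suffices to exhibit an L-reduction from {\sc Absorbing-Set} on such digraphs to {\sc Absorbing-Set} on their $2k$-subdivisions. Given an irreflexive symmetric digraph $G$ with $n$ vertices and $m$ edges, the reduction $f$ outputs its $2k$-subdivision $H$. This is computable in polynomial time; moreover every vertex of $V(H)\setminus V(G)$ has in- and out-degree exactly $1$, and the vertices of $V(G)$ keep their in- and out-degrees, so $H$ again has in- and out-degree at most $3$. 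The recovery function $g$ is the polynomial-time map taking an absorbing set of $H$ to an absorbing set of $G$ supplied by Lemma~\ref{lem:abs_npc}; condition~(1) of Definition~\ref{def:lreduction} holds by that lemma.

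For condition~(2), note that each vertex of $G$ has at most $3$ in-neighbours, so any absorbing set of $G$, together with the in-neighbourhoods of its members, covers $V(G)$; hence $OPT_A(G)\geq n/4$. Also $m\leq 3n$ since out-degrees are bounded by $3$. Lemma~\ref{lem:abs_npc} gives $OPT_B(H)=OPT_A(G)+km$, so
\[
\frac{OPT_B(H)}{OPT_A(G)} = 1 + \frac{km}{OPT_A(G)} \leq 1 + \frac{4km}{n} \leq 1 + 12k,
\]
and we take $\alpha = 1+12k$. For condition~(3), let $A'$ be any absorbing set of $H$ with $|A'| = s$; Lemma~\ref{lem:abs_npc} yields $|g(A')|\leq s-km$, and combining with $OPT_B(H)=OPT_A(G)+km$ we get $c_A(g(A')) - OPT_A(G) \leq (s-km)-(OPT_B(H)-km) = c_B(A')-OPT_B(H)$. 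Since both differences are nonnegative (these are minimisation problems), this is exactly $|OPT_A(G)-c_A(g(A'))|\leq |OPT_B(H)-c_B(A')|$, so $\beta=1$ suffices. Hence $f$, together with $g$, $\alpha=1+12k$ and $\beta=1$, is an L-reduction, and {\sc Absorbing-Set} is APX-hard on $2k$-subdivisions of irreflexive symmetric digraphs of in- and out-degree at most $3$.

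Finally, Lemma~\ref{lem:subdiv_point} says the $2k$-subdivision of any irreflexive digraph is a point-point digraph, and we already observed that this operation keeps the degrees bounded by $3$; so the very same family of instances witnesses APX-hardness of {\sc Absorbing-Set} for point-point digraphs of in- and out-degree at most $3$. Everything above is routine once Lemmas~\ref{lem:abs_npc} and~\ref{lem:subdiv_point} are in hand; the only steps that call for a little care are the arithmetic establishing $\alpha$ --- which hinges on the degree bound forcing $OPT_A(G)\geq n/4$ and $m\leq 3n$ --- and the verification that subdivision does not raise any in- or out-degree above $3$. I do not expect any genuine obstacle here, since the combinatorial content was essentially settled in Lemma~\ref{lem:abs_npc}.
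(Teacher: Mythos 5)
Your proposal is correct and follows essentially the same route as the paper: an L-reduction from {\sc Absorbing-Set} on irreflexive symmetric digraphs of in- and out-degree at most $3$ (Corollary~\ref{cor:APX}) to their $2k$-subdivisions, with Lemma~\ref{lem:abs_npc} supplying the size correspondence and the bounds $OPT\geq n/4$, $m\leq 3n$ yielding $\alpha=1+12k$, $\beta=1$. The only difference is that you spell out the verification of condition (3) a bit more explicitly than the paper does.
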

\begin{proof}
	This can be proved in a way similar to that of Theorem~\ref{thm:APXkernel}.
	By Corollary~\ref{cor:APX}, we have that the {\sc Absorbing-Set} problem is APX-complete for irreflexive symmetric digraphs having in- and out-degree at most 3. We give an L-reduction from the {\sc Absorbing-Set} problem for irreflexive symmetric digraphs having in- and out-degree at most 3 to the {\sc Absorbing-Set} problem for $2k$-subdivisions of irreflexive symmetric digraphs having in and out-degree at most 3. Let $G$ be an irreflexive symmetric digraph of in- and out-degree at most 3, where $|V(G)|=n$ and $|E(G)|=m$. For $k\geq 1$, let $H$ be the $2k$-subdivision of $G$. Clearly, $H$ can be constructed in polynomial time. And let $A(G)$ and $A(H)$ denote a minimum sized absorbing set in $G$ and $H$ respectively. Since $G$ is a digraph of in- and out-degree at most 3, as noted in the proof of Theorem~\ref{thm:APXkernel}, we have that $m\leq 3n$ and $|A(G)|\geq \frac{n}{4}$. By Lemma~\ref{lem:abs_npc}, we have that  $|A(H)|\leq |A(G)|+km$. Therefore as $\frac{|A(H)|}{|A(G)|}\leq 1+12k$, we can now choose $\alpha=1+12k$ and $\beta=1$ so that our reduction satisfies the requirements of Definition~\ref{def:lreduction} (Lemma~\ref{lem:abs_npc} guarantees that condition (3) of Definition~\ref{def:lreduction} holds, and also that the function $g$ in the definition is polynomial time computable). Thus our reduction is an L-reduction, which implies that {\sc Absorbing-Set} is APX-hard for $2k$-subdivisions of irreflexive symmetric digraphs having in- and out-degree at most 3. Since the $2k$-subdivision of any irreflexive digraph $G$ is a point-point digraph by Lemma~\ref{lem:subdiv_point}, we can now conclude that the problem {\sc Absorbing-Set} is APX-hard for point-point digraphs.
\end{proof}
\section{Comparability relations between classes}

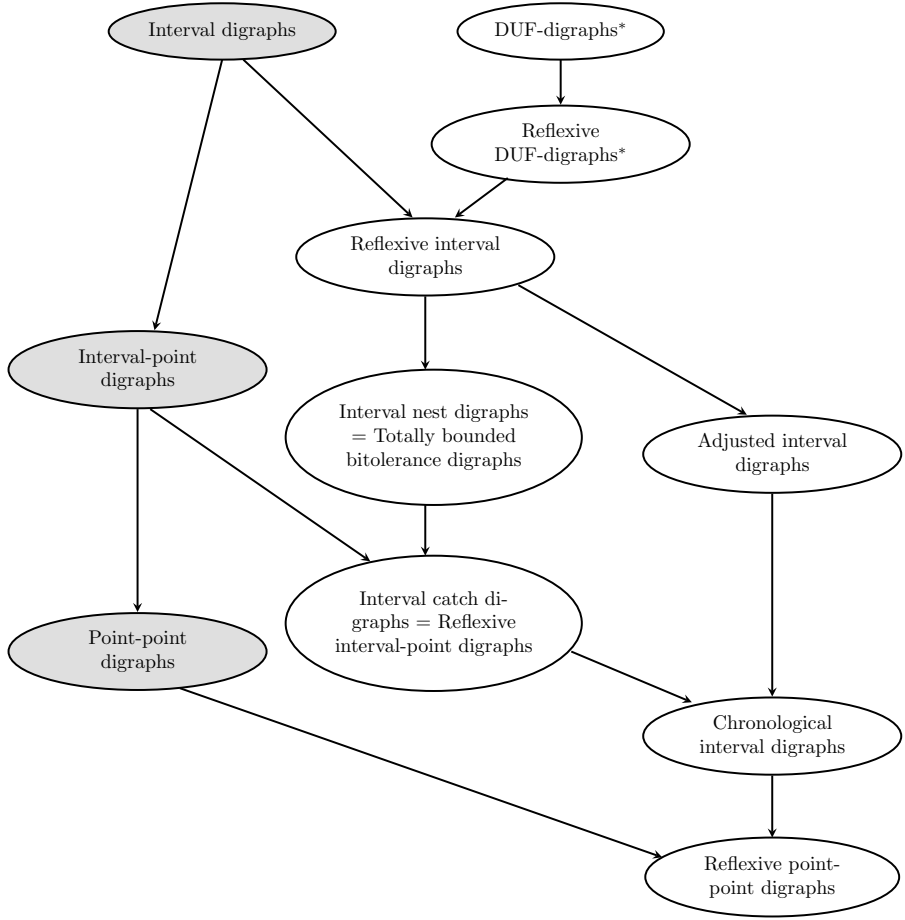
\begin{figure}
	\centering
	\scalebox{.75}{
		\begin{tikzpicture}[xscale=0.75]
\node[draw,
ellipse, line width=1,
minimum width=2cm,minimum height=1cm,fill=gray!25] at (-3,1){Interval digraphs};
\node[draw,
ellipse, line width=1, 
minimum width=2cm,minimum height=1cm] at (5,1){DUF-digraphs$^*$};
\node[draw,
ellipse, line width=1, 
minimum width=2cm,minimum height=1cm,text width=3cm,align=center] at (5,-1){Reflexive\\DUF-digraphs$^*$};
\node[draw,
ellipse, line width=1, 
minimum width=2cm,minimum height=1cm,text width=3cm,align=center] at (1.8,-3){Reflexive interval \\ digraphs};
\node[draw,
ellipse, line width=1,
minimum width=2cm,minimum height=1cm,text width=3cm,align=center,fill=gray!25] at (-5,-5){Interval-point \\ digraphs};
\node[draw,
ellipse, line width=1,
minimum width=2cm,minimum height=1cm,text width=3cm,align=center,fill=gray!25] at (-5,-10){Point-point \\ digraphs};
\draw [line width=1]
(2,-6.2) ellipse (3.5 and 1.2); 
\node[draw,
ellipse, line width=1,
minimum width=2cm,minimum height=1cm,text width=3cm,align=center] at (10,-6.5){Adjusted interval \\ digraphs};
\node[draw,
ellipse, line width=1,
minimum width=2cm,minimum height=1cm,text width=3cm,align=center] at (10,-11.5){Chronological \\ interval digraphs};
\draw[-stealth, line width=1] (-2.5,0.5) -- (1.5,-2.3);
\draw[-stealth, line width=1] (-3,0.5) -- (-4.6,-4.3);
\draw[-stealth, line width=1] (-5,-5.7) -- (-5,-9.3);
\draw[-stealth, line width=1] (-4.7,-5.7) -- (0.5,-8.4);
\draw[-stealth, line width=1] (1.8,-7.4) -- (1.8,-8.3);
\draw[-stealth, line width=1] (-4,-10.65) -- (7.4,-13.65);
\draw[-stealth, line width=1] (1.8,-3.7) -- (1.8,-5);
\draw[-stealth, line width=1] (5.25,-10)-- (8.1,-10.9);
\draw[-stealth, line width=1] (10,-12.2)-- (10,-13.3);
\draw[-stealth, line width=1] (5,.5)-- (5,-.3);
\draw[-stealth, line width=1] (3.75,-1.6)-- (2.5,-2.3);
\draw[-stealth, line width=1] (4,-3.5) -- (9.4,-5.8);
\draw[-stealth, line width=1] (10,-7.2)-- (10,-10.8);
\node[text width=4cm,align=center] at (2,-6.2) {Interval nest digraphs = Totally bounded bitolerance digraphs};
\node[text width=4cm,align=center] at (2,-9.5) {Interval catch digraphs = Reflexive interval-point digraphs};
\draw [line width=1]
(2,-9.5) ellipse (3.5 and 1.2); 
\node[text width=4cm,align=center] at (10,-14) {Reflexive point-point digraphs};
\draw [line width=1]
(10,-14) ellipse (3 and .7); 
\end{tikzpicture}
	}
	\caption{Inclusion relations between graph classes. In the diagram, there is an arrow from $\mathcal{A}$ to $\mathcal{B}$ if and only if the class $\mathcal{B}$ is contained in the class $\mathcal{A}$. Moreover, each inclusion is strict. The problems studied are efficiently solvable in the classes shown in white, while they are NP-hard and/or APX-hard in the classes shown in gray ($^*$~the complexity of the {\sc Absorbing-Set} problem on DUF-digraphs and reflexive DUF-digraphs remain open).}
	\label{hierarchy}
\end{figure}

Figure~\ref{hierarchy} shows the inclusion relations between the classes of digraphs that were studied in this paper. Note that the class of interval digraphs and the class of DUF-digraphs are incomparable to each other. This can be shown as follows: a directed triangle with edges $(a,b),(b,c),(c,a)$ is a point-point digraph, but it is easy to see that there is no DUF-ordering for this digraph. Thus, the class of point-point digraphs is not contained in the class of DUF-digraphs. On the other hand, consider a symmetric triangle $G$ with edges $(a,b),(b,a),(b,c),(c,b),(c,a),(a,c)$. Then any permutation of the vertices in $G$ is a DUF-ordering of $G$. Note that the splitting bigraph $B_G$ of $G$ is an induced cycle of length 6. If $G$ is an interval digraph, then $B_G$ is an interval bigraph, which contradicts M\"uller's observation~\cite{muller1997recognizing} that interval bigraphs are chordal bipartite graphs (bipartite graphs that do not contain any induced cycle $C_k$, for $k\geq 6$). Thus $G$ is not an interval digraph, implying that the class of DUF-digraphs is not contained in the class of interval digraphs. Further note that, even the class of reflexive DUF-digraphs is not contained in the class of interval digraphs, as otherwise every reflexive DUF-digraph should have been a reflexive interval digraph, which is not true: by Theorem~\ref{thm:K_33free}, the underlying undirected graph of a reflexive interval digraph cannot contain $K_{3,3}$ as an induced subgraph, but orienting every edge of a $K_{3,3}$ from one partite set to the other and adding a self-loop at each vertex gives a reflexive DUF-digraph (any ordering of the vertices in which the vertices in one partite set all come before every vertex in the other partite set is a DUF-ordering of this digraph). Clearly, there are DUF-digraphs that are not reflexive, implying that the class of reflexive DUF-digraphs forms a strict subclass of DUF-digraphs.

Now in~\cite{das1989interval}, the authors give an example of a digraph which is not an interval point digraph as follows: The digraph has vertex set $\{v_1,v_2,v_3,v_4\}$ and edge set $\{(v_2,v_2),(v_3,v_3),(v_4,v_4),(v_2,v_1),(v_3,v_1),(v_4,v_1)\}$. They observed that this digraph is not an interval point digraph. We slightly modify the above example by adding a loop at $v_1$ and call the resulting reflexive digraph as $G$. It is then easy to verify that the modified digraph $G$ is not an interval nest digraph. (Note that in any interval nest representation of $G$, there exists $x\in\{v_2,v_3,v_4\}$ such that $S_x\subseteq S_{v_1}\cup\bigcup_{a\in\{v_2,v_3,v_4\}\setminus \{x\}} S_a$. As $T_x \subseteq S_x$, this implies that either $(v_1,x)\in E(G)$ or there exists an $a\in \{v_2,v_3,v_4\}\setminus \{x\}$ such that $(a,x)\in E(G)$, which is a contradiction to the definition of $G$.) But consider the ordering $<\colon (v_1,v_2,v_3,v_4)$ of the vertices in $G$. It has the property that, for $i<j<k$, if $(v_i,v_k)\in E(G)$ then $(v_i,v_j)\in E(G)$, and  if $(v_k,v_i)\in E(G)$ then $(v_j,v_i)\in E(G)$. In fact, the class of reflexive digraphs that has a vertex ordering satisfying the above property is known to be the class of adjusted interval digraphs~\cite{feder2012adjusted}. This shows that $G$ as defined above is an adjusted interval digraph. Since $G$ is not an interval nest digraph, we can conclude that the class of adjusted interval digraphs (and therefore, the class of reflexive interval digraphs) is not contained in the class of interval nest digraphs (and therefore, not contained in the class of  interval catch digraphs). Since interval catch digraphs are exactly reflexive interval-point digraphs, this also means that the class of adjusted interval digraphs (and therefore, the class of reflexive interval digraphs) is not contained in the class of interval-point digraphs.

Now consider the digraph $G$ with $V(G)=\{a,b,c,d\}$ and edges $(a,b),(a,d),(c,b),(c,d)$ in addition to loops at each vertex. It is easy to construct an interval catch representation of $G$. But note that the underlying undirected graph of $G$ is an induced $C_4$. This implies that $G$ is not an adjusted interval digraph, as otherwise it contradicts the fact that the underlying undirected graphs of adjusted interval digraphs are interval graphs~\cite{feder2012adjusted}. This proves that the class of interval catch digraphs (and therefore, the class of reflexive interval digraphs) is not contained in the class of adjusted interval digraphs.

Now let $G$ be a digraph with $V(G)=\{a,b,c,d\}$ and edges $(a,b),(c,b),(b,d),(d,b)$ in addition to loops at each vertex. The digraph $G$ is not an interval catch digraph, as in any interval catch representation of $G$, the point $T_b$ contained in each of the intervals $S_a,S_b$ and $S_c$. Thus the intervals $S_a,S_b,S_c$ intersect pairwise, which implies that one of the intervals $S_a,S_b,S_c$ is contained in the union of the other two. We have that $S_a$ is not contained in $S_b\cup S_c$, since otherwise the fact that $T_a\in S_a$ implies that either $(b,a)$ or $(c,a)$ is an edge in $G$, which is a contradiction. For the same reason, we also have that $S_c$ is not contained in $S_a\cup S_b$. We can therefore conclude that $S_b\subseteq S_a\cup S_c$. But as $(b,d)\in E(G)$, we have that $T_d\in S_b$, which implies that either $(a,d)$ or $(c,d)$ is an edge in $G$---a contradiction. Thus $G$ is not an interval catch digraph. On the other hand, it can be seen that $G$ is an interval nest digraph (one possible interval nest representation of $G$ is as follows: $S_a=[1,2]$, $S_b=T_b=[2,4]$, $S_c=[4,5]$, $S_d=T_d=[3,3]$, $T_a=[1,1]$ and $T_c=[5,5]$). Thus the class of interval nest digraphs is not contained in the class of interval catch digraphs (and therefore not contained in the class of interval-point digraphs, as interval-point digraphs are exactly the reflexive interval catch digraphs).

Consider a digraph $G$ with $V(G)=\{a,b,c,d\}$ and edges $(a,b),(a,c),(b,c),(c,b),(c,d)$ in addition to loops at each vertex. It is easy to construct a chronological interval representation for $G$. But as $(a,b),(c,b),(c,d)\in E(G)$ and $(a,d)\notin E(G)$, we have that $a,b,c,d$ is an anti-directed walk of length 3. Therefore by Theorem~\ref{thm:point-point}, we have that $G$ is not a point-point digraph. Thus we have that the class of chronological interval digraphs is not contained in the class of point-point digraphs.

Finally, note that the class of reflexive point-point digraphs is nothing but the class of all digraphs that can be obtained by the disjoint union of complete digraphs (a complete digraph is the digraph in which there is a directed edge from every vertex to every other vertex and itself). The class of interval nest digraphs coincides with the class of totally bounded bitolerance digraphs which was introduced by Bogart and Trenk~\cite{bogart2000bounded}.

The above observations and the definitions of the corresponding classes explain the comparability and the incomparability relations for the classes of digraphs shown in Figure~\ref{hierarchy}.

\section{Conclusion}\label{sec:conclusion}
After work on this paper had been completed, we have been made aware of a recent manuscript of Jaffke, Kwon and Telle~\cite{jaffke2021classes}, in which unified polynomial time algorithms have been obtained for the problems considered in this paper for some classes reflexive intersection digraphs including reflexive interval digraphs. Their algorithms are more general in nature, and consequently have much higher time complexity, while our targeted algorithms are much more efficient; for example our algorithm finds a minimum dominating (or absorbing) set in a reflexive interval digraph in time $O(m+n)$, while the general algorithm of~\cite{jaffke2021classes} has complexity $O(n^8)$.
As noted above, totally bounded bitolerance digraphs are a subclass of reflexive interval digraphs, and therefore all the results that we obtain for reflexive interval digraphs hold also for this class of digraphs. 

M\"uller~\cite{muller1997recognizing} showed the close connection between interval digraphs and interval bigraphs and used this to construct the only known polynomial time recognition algorithm for both these classes. Since this algorithm takes $O(nm^6(n+m)\log n)$ time, the problem of finding a forbidden structure characterization for either of these classes or a faster recognition algorithm are long standing open questions in this field. But many of the subclasses of interval digraphs, like adjusted interval digraphs~\cite{takaoka2021recognition}, chronological interval digraphs~\cite{das2013recognition}, interval catch digraphs~\cite{prisner1989characterization}, and interval point digraphs~\cite{prisner1994algorithms} have simpler and much more efficient recognition algorithms. It is quite possible that simpler and efficient algorithms for recognition exist also for reflexive interval digraphs. As for the case of interval nest digraphs, no polynomial time recognition algorithm is known. The complexities of the recognition problem and {\sc Absorbing-Set} problem for DUF-digraphs also remain as open problems.
\bigskip

\noindent\textbf{Acknowledgements.} Pavol Hell would like to gratefully acknowledge support from an NSERC Discovery Grant. Part of the work done by Pavol Hell was also supported by the Smt Rukmini Gopalakrishnachar Chair Professorship at the Indian Institute of Science (November--December, 2019).
\bibliographystyle{plain}
\bibliography{reference} 
\end{document}